\newcommand\beq{\begin{equation}}
\newcommand\eeq{\end{equation}}
\newcommand\beqa{\begin{eqnarray}}
\newcommand\eeqa{\end{eqnarray}}
\newcommand{\dd}{\mathrm{d}}
\newcommand{\nn}{\nonumber}
\newcommand\beqal{\begin{align}}
\newcommand\eeqal{\end{align}}
\newcommand{\la}{\langle}
\newcommand{\ra}{\rangle}
\newcommand{\xa}{\rangle\langle}
\newcommand{\id}{\mathrm{Id}}
\newcommand{\calQ}{\mathcal{Q}}
\newcommand{\ve}{\varepsilon}
\newcommand{\UU}{\mathcal{U}}
\newcommand{\HH}{\mathcal{H}}
\newcommand{\A}{\mathcal{A}}
\newcommand{\BB}{\mathcal{B}}
\newcommand{\s}{\mathcal{S}_1}
\newcommand{\Tr}{\mathrm{Tr}}
\newcommand{\rmT}{\mathrm{T}}
\newcommand{\tl}{\tilde}
\newcommand{\CC}{\mathbb{C}}
\newcommand{\CPTP}{\mathrm{CPTP}}
\newcommand{\KK}{\mathcal{K}}
\newcommand{\D}{\mathcal{D}}
\newcommand{\Ss}{\mathcal{S}}
\newcommand{\T}{\mathrm{T}}
\newcommand{\g}{\boldsymbol{g}}
\newcolumntype{P}[1]{>{\centering\arraybackslash}p{#1}}
\newcolumntype{?}{!{\vrule width .5pt}}
\newtheorem{theorem}{Theorem}[section]
\newtheorem*{conjecture}{Conjecture 1}
\newtheorem{claim}{Claim}[section]
\newtheorem{definition}[theorem]{Definition}
\newtheorem{proposition}[theorem]{Proposition}
\newtheorem{corollary}[theorem]{Corollary}
\newtheorem{lemma}[theorem]{Lemma}
\theoremstyle{definition}
\newtheorem{remark}[theorem]{\textbf{Remark}}
\newtheorem{example}[theorem]{\textbf{Example}}
\newtheorem{comment}[theorem]{\textbf{Comment}}
\newtheorem{question}{\textbf{Question}}
\title{\vspace{-2cm}Geometry of Banach spaces: a new route towards Position Based Cryptography}
\author[1]{Marius Junge\thanks{mjunge@illinois.edu}}
\author[2]{Aleksander M. Kubicki\thanks{amkubickif@gmail.com}}
\author[3]{Carlos Palazuelos\thanks{cpalazue@ucm.es}}
\author[3]{David Pérez-García\thanks{dperezga@ucm.es}}
\affil[1]{Department of Mathematics, University of Illinois, Urbana, IL 61801, USA}
\affil[2,3,4]{Departamento de Análisis Matemático y Matemática Aplicada, Universidad Complutense de Madrid, 28040 Madrid, Spain}
\affil[3,4]{Instituto de Ciencias Matemáticas, 28049 Madrid, Spain}
\date{}                     
\begin{document}



\maketitle

\abstract{In this work we initiate the study of Position Based Quantum Cryptography (PBQC) from the perspective of geometric functional analysis and its connections with quantum games. The main question we are interested in asks for the optimal amount of entanglement that a coalition of attackers have to share in order to compromise the security of any PBQC protocol. Known upper bounds for that quantity are exponential in the size of the quantum systems manipulated in the honest implementation of the protocol. However, known lower bounds are only linear. 
	
	In order to deepen the understanding of this question, here we  propose a Position Verification (PV) protocol and find lower  bounds on the resources needed to break it. The main idea behind the proof of these bounds is the understanding of cheating strategies as vector valued assignments on the Boolean hypercube. Then, the bounds follow from the understanding of some geometric properties of  particular Banach spaces, their type constants. Under some regularity assumptions on the former assignment, these bounds lead to exponential lower bounds on the quantum resources employed, clarifying  the  question in this restricted case. Known attacks indeed satisfy the assumption we make, although we do not know how universal this feature is. Furthermore, we show that the understanding of the type properties of some more involved Banach spaces would allow to drop out the assumptions and lead to  unconditional lower bounds on the resources used to attack our protocol. Unfortunately, we were not able to estimate the relevant type constant. Despite that, we conjecture an upper bound for this quantity and show some evidence supporting it. A positive solution of the conjecture would lead to stronger security guarantees for the proposed PV protocol providing a better understanding of the question asked above.   }

\newpage

\tableofcontents

	\fontdimen16\textfont2=4pt
\fontdimen17\textfont2=4pt

 \newpage
 
 \section{Introduction}\label{Sec1}

In the field of Position Based Cryptography (PBC) one aims to develop cryptographic tasks using the geographical position of a third party as its only credential. Once the party proves to the verifier that it is in fact located at the claimed position, they interact considering the identity of the third party as granted. Basing cryptographic security on the position of the communicating parties  might be  very appealing in practical  contexts such as the use of autonomous cars (see \cite{Malaney16} for an interesting digression on this topic), or the  secure communication between public services or banks. Besides that, at a more fundamental level, secure PBC could also serve as a way to circumvent insecurity under man-in-the middle attacks, a security leak suffered by standard cryptographic primitives. This vulnerability still prevails even in presence of information-theoretical security, as, for example, in the celebrated case of  Quantum Key Distribution. In these settings, the security guarantees always come after the assumption that  the identity of the trusted agents is granted. In PBC this assumption can be, at least, relaxed. Moreover, PBC proved to be a rich field of research emanating deep questions and connections from its study. To mention a few, attacks for PBC has been related with quantum teleportation \cite{KonigBeigi}, circuit complexity \cite{Speelman2015}, classical complexity theory \cite{Buhrman_2013} and, very recently, with properties of the boundary description of some processes in the context of  the holographic duality AdS/CFT \cite{May_2019,May_2020}. In this work, we add to this list a connection with deep questions on the geometry of Banach spaces.

The main task in PBC is the one of \emph{Position Verification} (PV). In PV a prover has to convince a  verifier (usually composed by several agents spatially distributed) that it is located at a claimed position.	This setting has been studied since the 90's in the context of classical cryptography. Nonetheless, in purely classical scenarios, PV is easily proven to be insecure against a team of colluding adversaries surrounding the honest location \cite{Chandran_09}.  This motivates the study of \emph{quantum} PV protocols, in which the communication between   prover and  verifier is in general quantum. This idea was initially developed by A. Kent \cite{Kent_2011} and made rigorous only  later on in \cite{Buhrman2011}. In this last paper, the authors construct a generic attack for any quantum PV protocol.  To construct the general attack of \cite{Buhrman2011}, the authors built on the work of L. Vaidman \cite{Vaidman03}, realizing that the cheating action in the setting of PV consists in performing what they called \emph{instantaneous non-local computation}. In this last task, two (or more) distant agents  have to implement a quantum operation on a distributed input when subjected to non-signalling constraints -- see \cite{Buhrman2011} or Section \ref{Sec2.2} below for more details. At a first sight, the existence of general attacks to quantum PV renders the development of secure PBQC  a  hopeless program. However, their attack did not come for free for the adversaries, as in the case of classical PV. On the contrary, in order to cheat, the dishonest agents have to use a huge amount of entanglement -- a delicate and expensive resource in quantum information processing. Even when in \cite{KonigBeigi} another generic attack to PV was proposed exponentially reducing  the entanglement consumption, the amount of entanglement required is still far from what is realizable in any practical situation. This leads naturally to the following question, which is the one motivating this work:

\vspace{1em}
\begin{question}\label{question1} 
	How much entanglement is necessary to break \emph{any} PV protocol?\end{question}
\vspace{1em}

Answering this question with a large enough lower bound  would lead to the existence of PV protocols which are \emph{secure for all practical purposes}, term coined in \cite{Buhrman_2013}. More concretely, we say that a PV protocol is secure for  all practical purposes if the resources needed to break it are significantly larger in order of magnitude than the resources manipulated by the honest parties. For us, the size of the resources in place is quantified by the dimension of the systems that are manipulated in the execution of the protocol. In a hypothetical future in which we have at our disposal large scale quantum computers, there is no clear reason to distinguish between classical and quantum resources and solving Question \ref{question1} in this sceptical setting is the final goal in the study of PBC. However, as an intermediate step towards this aim, we focus here in the study of quantum resources disregarding classical communication and computation as free resources (for both, honest and dishonest agents). We hope that the study of this scenario will contribute to the ultimate understanding of Question \ref{question1}. Indeed, some of the results presented here can be translated to  the sceptical framework described above. Although we will say a few words about how this is achieved in Section \ref{Sec4_2},  a full study of this more ambitious setting is out of the scope of the present manuscript. 
	
	We comment now on the progress in the field that is already available. In \cite{Buhrman2011}, the authors provide the first PV protocol secure against cheaters with \emph{no} entanglement. This was improved in \cite{KonigBeigi} and later in \cite{Tomamichel2013} providing PV protocols requiring a linear amount of entanglement (linear in the size of the system manipulated in the honest protocol). In terms of this figure of merit, the entanglement consumption in the generic attack of \cite{KonigBeigi} is exponentially large, hence leaving an exponential gap between lower and upper bounds for the amount of entanglement necessary to break  PV protocols. After almost ten years since \cite{Buhrman2011} this is still  essentially all it is known about Question \ref{question1} in its original formulation. Other works have studied attacks  with some specific structure \cite{Buhrman_2013}, have designed attacks that are efficient at emulating the computation of unitaries with low complexity  \cite{Speelman2015} or have studied security under additional cryptographic assumptions \cite{Unruh2014}. 

After the completion of this manuscript we learnt about the concurrent work \cite{bluhm21} which studies a similar setting as the one considered in this work, focusing on the trade-off between the \emph{quantum} resources used by the honest party in comparison with the \emph{quantum}  resources of the attackers.  In that work, the authors show the existence of qubit routing protocols in which the honest prover is required to manipulate a single qubit and a $2 n$-bits classical string and are secure against adversaries sharing an entangled state of dimension linear in the dimension of the classical message. In the intermediate setting commented on  before, when the focus is put on the study of quantum resources, the results reported in \cite{bluhm21} are  incomparably stronger than the results we obtain here. However, in order to contrast both works, we mention that while in \cite{bluhm21} the classical part of the challenge is required to be distributed symmetrically from both sides of the prover -- considering PV in a one-dimensional line --, in our setting the classical information is distributed asymmetrically only from one of the verifiers surrounding the honest location.  This can be understood as a  further step in-between the intermediate setting in which classical resources are completely disregarded and the final goal of finding \emph{secure for all practical purposes} PV protocols. Stressing this point,  we emphasize that the techniques and ideas we introduce here might serve as groundwork for a deeper study of the problem.  In fact, as we said before, it is possible to  extend  some of our results  to protocols in which the interaction between verifiers and prover is purely quantum and, in overall, of much lower dimension(with no distinction between classical and quantum systems). We leave for the future the study of such ramifications of our work. For completeness, we also mention that another possibility to achieve the goal of \emph{security for all practical purposes} in PV would be improving the bounds obtained in \cite{bluhm21}. Known attacks to the protocols proposed there consume exponentially more resources than the lower bounds of \cite{bluhm21}, a fact that invites to explore the pointed direction. Nevertheless, it seems that new techniques have to come into play for pursuing that aim.

\subsection{Summary of results}\label{Sec4_2}

Here we aim to go back to Question \ref{question1} in its simplest form: the one-dimensional case without any further assumptions. Unfortunately, we were not able to find a definite answer to the question but we report here some progress that  opens an avenue for a deeper understanding of the problem.

 From now on, we focus on the study of \emph{quantum} resources required to attack PV, considering classical communication as a free resource and unlimited computational power for all the agents involved. In this work,
\begin{itemize}
	\item we connect the study of Question \ref{question1} with powerful techniques coming from Banach space theory,
	\item consequently providing new lower bounds on the amount of entanglement necessary to break a specific PV protocol presented in Section \ref{Sec3}. However, these bounds are not completely general but depend on some properties of the strategies considered. Intuitively, \emph{smooth} strategies, i.e., strategies with a smooth dependence in the unitary to be implemented, lead to exponential lower bounds. 
	\item Finally, we consider the possibility of turning the previous bounds unconditional. We relate the validity of this with a collection of open problems in local Banach space theory. In particular, we relate the bounds on resources to break our PV protocol with estimates for type constants of tensor norms of $\ell_2$ spaces. In this direction, we put forward a conjecture that would imply  the desired unconditional exponential lower bounds and then  provide some evidence supporting it.
\end{itemize} 

 \paragraph{The protocol  \boldmath{$G_{Rad}$}.} To formalize this discussion, we propose a PV protocol that we denote $G_{Rad}$. This makes reference to a family  $\lbrace G_{Rad}^{(n)}  \rbrace_{n\in \mathbb{N}}$ rather than to a single task. The index $n$  represents a security parameter  that  determines the size of the quantum systems manipulated in the honest implementation of the protocol. From now on, this parameter will be implicitly referred to, allowing us to drop the superindex in $G_{Rad}^{(n)}$ and refer to it simply as $G_{Rad}$.

The general structure of a PV protocol in the studied setting -- one-dimensional PV -- proceeds in four basic steps  (see Figure \ref{fig1}, left panel, for a graphical description):
\begin{enumerate}
	\item The verifier prepares a bipartite system and distributes it to two verifying agents that surround the location to be verified, $x$. For the sake of concreteness, we locate these agents at points $x \pm \delta$ for some positive $\delta$.
	\item Agents at $x \pm \delta$, when synchronized, communicate the registers they hold to $x$.
	\item An honest prover located at $x$, upon receiving both registers, immediately applies a required computation resulting  in another bipartite system. The latter has to be returned to locations $ x \pm \delta$. One register should be sent to the agent at the left of $x$ ($x-\delta$), and the other, to its right ($x + \delta$).
	\item Finally, the verifiers check whether the prover's answer arrives  on time and whether the computation was performed correctly. Based on this information they declare the verification successful or not.
\end{enumerate} 

	\begin{figure}
	\includegraphics[width=0.95\textwidth]{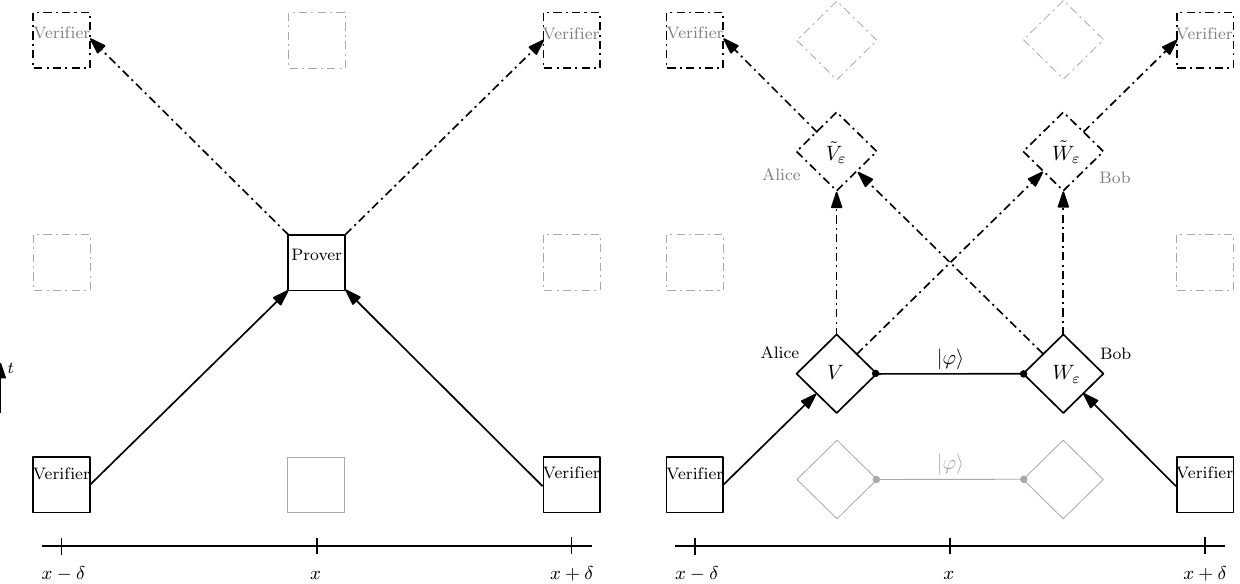}
	\caption{Causal structure of one-dimensional PV protocols. Honest implementation (left) vs.  adversarial scenario (right).}
	\label{fig1}
\end{figure}

In the dishonest scenario, two cheaters surrounding the location $x$  intercept the communication with the honest prover and try to emulate the ideal action in the honest protocol. In order to succeed, they have to prevent any delay in their response. This restricts  cheaters' action to consist of two rounds of local operations mediated by a step of \emph{simultaneous two-way communication} -- see Section \ref{Sec2.2}  for a detailed discussion of this model. 

Once we have fixed this basic setting, let us  describe the protocol $G_{Rad}$ involved in our main results.   Roughly  speaking, the challenge posed to the prover in our protocol is solved by the implementation of the set of diagonal unitaries determined by sign vectors $\ve \in \{\pm 1\}^{n^2}$. The intuition behind the choice of this set of unitaries can be supported by the fact that it contains instances with exponential circuit complexity, as a simple counting argument shows. Furthermore, in \cite{Kubicki_19} we noticed that this set of unitaries is almost as hard as possible  in terms of the memory required by a Programmable Quantum Processor that implements it.  Since Programmable Quantum Processors seem to be closely related with the existing teleportation based attacks to PV \cite{Buhrman2011,KonigBeigi}, we found the previously noted fact an indication that the referred set of unitaries might be a good choice for the study  of PV. More formally, the honest implementation of $G_{Rad}$ is as follows:
\begin{enumerate}
	\item  the verifiers start uniformly sampling  $\ve = (\ve_{ij})_{i,j=1}^n \in \{ \pm 1\}^{n^2}$ and preparing the state $|\psi\ra : = \frac{1}{n}  \sum_{i,j=1}^n |i\ra_A \otimes |j\ra_B \otimes |ij\ra_C $ in a tripartite Hilbert space $\HH_A \otimes \HH_B \otimes \HH_C$. The verifying agent at $x - \delta$ receives registers $\HH_A \otimes \HH_B$ while the one at $x + \delta$ is informed (classically) of the choice of $\ve$. Register $\HH_C$ is kept as private during the execution of the protocol. 
	\item Then, registers $\HH_A \otimes \HH_B$ are forwarded to the verifying location $x$ from  its left.  From the right,  the classical information about the choice of $\ve$ is communicated.
	\item An honest prover located at $x$, upon receiving both pieces of information, has to apply the diagonal unitary on $\HH_A \otimes \HH_B$ determined by $\ve$. Immediately,  registers $\HH_A \otimes \HH_B$ must be returned, but this time only $\HH_A$ should travel to the verifier at $x - \delta$. Register $\HH_B$ should be sent to the verifier at $x + \delta$.
	\item After receiving those registers, the verifiers check the answer's timing  and, at some later time, they perform the measurement  $\lbrace |\psi_\ve \xa \psi_\ve |, \id - |\psi_\ve \xa \psi_\ve | \rbrace$ on  system $\HH_A \otimes \HH_B \otimes \HH_C$, where $|\psi_\ve\ra : = \frac{1}{n}  \sum_{i,j} \ve_{ij} |i\ra_A \otimes |j\ra_B \otimes |ij\ra_C $. They accept the verification only if the arriving time was correct and the outcome of the measurement was the one associated to $|\psi_\ve \xa \psi_\ve |$. 
\end{enumerate}

Next, let us specify the implementation of $G_{Rad} $ in an adversarial scenario. In this situation, we consider that two cheaters located between the honest location $x$ and the verifying agents at $x\pm \delta$, intercept the communication in the honest protocol. In this work, we refer to these cheaters as Alice, at position $x -  \delta'$, and Bob, at position $x+ \delta'$, for some $0<\delta'<\delta$. Their general action proceeds as follows\footnote{ For simplicity, we state here the case in which Alice and Bob use what we call \emph{pure} strategies. The most general case can be reduced to this one by purification. See Section \ref{Sec3} for a detailed discussion.} (see again Figure \ref{fig1} for clarification): in advance, the cheaters share a state $|\varphi\ra$ in which Bob, after receiving the information about $\ve$, applies an isometry $W_\ve$ and sends part of the resulting system to Alice together with the classical information determining $\ve$. On her part, when Alice receives registers $\HH_A\otimes \HH_B$ of $|\psi\ra$, she applies another isometry $V$ (independent of $\ve$) on these registers and her part of the shared state $|\varphi\ra$. Part of her resulting system is communicated to Bob. After this step of simultaneous two-way communication Alice and Bob are allowed to apply another pair of local isometries $\tilde V_\ve \otimes \tilde W_\ve$ on the systems they hold. Then, they have to forward an answer to  agents at $x \pm \delta$. 

\paragraph{Main results.} The structure of $G_{Rad}$ allows  us to understand cheating strategies as vector valued assignments on the $n^2$-dimensional boolean hypercube, $ \mathcal{Q}_{n^2} = \lbrace \pm 1 \rbrace^{n^2}$. In our main result, we find lower bounds for the resources consumed in such an attack depending on the \emph{regularity} of the former assignment. Very informally, we can state:\vspace{0.2cm}

{\it Cheating strategies depending on the value of $\ve\in  \lbrace \pm 1 \rbrace^{n^2}$ in a sufficiently regular way require an amount of entanglement  exponential in $n$ in order to pass $G_{Rad} $ .}\\\vspace{-0.2cm}

To quantify the regularity of a strategy we introduce a parameter $\sigma$ that can be regarded as a measure of the  \emph{total influence} of the associated function on the Boolean hypercube. We give a precise definition  for this parameter in Section \ref{Sec4}. Here, we restrict ourselves to give an intuitive idea behind this definition presenting some approximate expressions below.  Based on two complementary ideas, given a strategy we construct  two different assignments leading to two parameters  $\sigma^i$ and $\sigma^{ii}$.   Given a cheating strategy $\Ss$,  characterized  by a sequence of elements $ \lbrace \tilde V_\ve, \tilde W_\ve, V,W_\ve, |\varphi\ra  \rbrace_{  \ve \in \mathcal{Q}_{n^2}  }  $, we can bound, up to logarithmic factors:
\beq   \label{eq_sigma1_intro} \sigma^i
\lesssim_{\log}   \ \mathbb{E}_\ve  \   
\left(  \sum_{i,j} \frac{1}{2} \big\|  \tilde  V_\ve \otimes \tilde W_\ve - \tilde V_{\overline{\ve}^{ij}} \otimes \tilde W_{\overline{\ve}^{ij}} \big\|   ^2   \right)^{1/2}  + O\left(\frac{1}{n} \right), 
\eeq
\beq \label{eq_sigma2_intro}
	\sigma^{ii}
\lesssim_{\log}  \ \mathbb{E}_\ve \  \left(   \sum_{i,j} \frac{1}{2}  \big \| \left( V \otimes (W_\ve -  W_{\overline{\ve}^{ij}})\right) |\varphi\ra \big\|_{\ell_2}^2 \right)^{1/2} + O\left(\frac{1}{n} \right) ,\eeq
\noindent where $\| \, \cdot \, \|$ and $\|\, \cdot \, \|_{\ell_2}$ are  the operator and euclidean norms respectively. Here,  $\overline{\ve}^{ij}$ denotes the sign vector $(\ve_{11},\ldots,$$- \ve_{ij},$$\ldots, $ $\ve_{nn})$. The first of these parameters is therefore related with how strongly the \emph{second round of local operations} in the strategy depends on $\ve$. In the other hand, $\sigma^{ii}$ is similarly concerned with the dependence on $\ve$ of the \emph{first round of local operations}. With this at hand, we can state -- yet informally -- our main result. Denoting  the success probability attained by a strategy $\Ss$ in $G_{Rad} $ as $\omega(G_{Rad} ;\Ss)$, we can say that:

\begin{theorem}[Informal] \label{mainThm} 
	Given  a cheating strategy for $G_{Rad} $, $\Ss$, in which the local dimension of the quantum systems manipulated by the cheaters during its execution is at most $k$,
	
	\begin{enumerate}[I.]
		\item 
		\beq\nn 
		\omega(G_{Rad} ;\Ss)  \le  C_1 +   C_2 \ {\sigma^i} \, \log^{1/2}(k) +  O \left(\frac{1}{n^{1/2}}\right)  ;
		\eeq 
		
		\item   
		
		\vspace{-1em}\begin{align*} 
		&\omega(G_{Rad} ; \Ss)	  \\ &\quad\le  \tilde C_1 + C_3 \    \sigma^{ii} \, n^{3/4}  \log^{3/2}(nk)    +  O \left(\frac{1}{n^{1/2}}+\frac{ \log^{3/2}( n k)}{n}\right) 
		;
		\end{align*}
	
	\end{enumerate}

	where $C_1,\, \tilde C_1 <1, \, C_2,\, C_3 $ are positive constants.
	
\end{theorem}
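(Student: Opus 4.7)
The plan is to recast the acceptance probability as an expectation over $\ve\in\mathcal{Q}_{n^2}$ of a quantity linear in the strategy $\Ss$, and then to control the deviation of this expectation from a $\ve$-independent baseline using type-$2$ inequalities for vector or operator valued functions on the Boolean hypercube. First I would use that $|\psi_\ve\rangle=(U_\ve\otimes I_C)|\psi\rangle$ for the diagonal sign unitary $U_\ve=\sum_{i,j}\ve_{ij}|ij\rangle\langle ij|$ to write, after standard purification,
\begin{equation*}
\omega(G_{Rad}^{(n)};\Ss) = \mathbb{E}_\ve \, \bigl|\langle \psi|(U_\ve^\dagger\otimes I_C)\, \Phi_\ve(\Ss)\, |\psi\rangle\bigr|,
\end{equation*}
where $\Phi_\ve(\Ss)$ encodes both rounds $V,W_\ve$ and $\tilde V_\ve\otimes\tilde W_\ve$ together with the shared state $|\varphi\rangle$. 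Bounds I and II then correspond to freezing one round and isolating the $\ve$-dependence of the other.

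For bound I, I would decompose $\tilde V_\ve\otimes\tilde W_\ve=\overline{T}+\Delta_\ve$ for a suitable $\ve$-independent anchor $\overline{T}$ (say, the average over $\ve$). The anchor contributes at most some $C_1<1$, as a $\ve$-independent second round reduces to a no-signalling cheating value of $G_{Rad}^{(n)}$. The fluctuation $\Delta_\ve$ is an operator-valued function on $\mathcal{Q}_{n^2}$ with values in $\BB(\mathbb{C}^k\otimes\mathbb{C}^k)$; applying Pisier's operator-valued Khintchine / type-$2$ inequality for the Schatten class $S_\infty^k$, whose type-$2$ constant scales as $O(\sqrt{\log k})$, converts the Fourier deviation into an $L^2$ norm of the discrete gradients $\tilde V_\ve\otimes\tilde W_\ve-\tilde V_{\overline{\ve}^{ij}}\otimes\tilde W_{\overline{\ve}^{ij}}$, which is exactly $\sigma^i_\Ss$ up to the announced $O(n^{-1/2})$ remainder.

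For bound II, I would view $\ve\mapsto(V\otimes W_\ve)|\varphi\rangle$ as a Hilbert-valued function on $\mathcal{Q}_{n^2}$ (type-$2$ constant $1$) and use Cauchy-Schwarz to peel off the second-round factor $\tilde V_\ve\otimes\tilde W_\ve$. The Hilbert-valued factor immediately contributes an $\ell_2$ norm of its discrete gradients, i.e.\ $\sigma^{ii}_\Ss$. The price is the $\ve$-dependent operator norm of the second-round factor, for which I would apply a type-$2$ bound in $S_\infty^{nk}$ (one $\sqrt{\log(nk)}$ factor) combined with an $L^2\to L^4$ moment / interpolation estimate on the $n^2$ hypercube coordinates, which accounts for the $n^{3/4}$ (essentially a Rademacher $\sqrt{n^2}=n$ tempered by an $n^{-1/4}$ gain from orthogonality of the private register $\{|ij\rangle_C\}$), together with an extra $\log$ factor from a dyadic aggregation of fluctuations, yielding $\log^{3/2}(nk)$ and the remainder $O(n^{-1}\log^{3/2}(nk))$.

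The hardest part will be matching these combinatorial inequalities to the parameters $\sigma^i_\Ss,\sigma^{ii}_\Ss$ as actually defined in Section~\ref{Sec4}. Strategies are not canonically functions $\mathcal{Q}_{n^2}\to\BB(\HH)$: the isometries $\tilde V_\ve,\tilde W_\ve$ take values in growing ancilla spaces, so a naive pointwise comparison along an edge of the hypercube is gauge dependent and must be symmetrized before a type inequality applies. A related subtlety is that the two rounds are coupled through the intermediate simultaneous two-way communication, so one has to argue that this coupling only produces the claimed lower-order remainders, exploiting the orthogonality of $\{|ij\rangle_C\}$ on the private register to decouple cross-terms between the two halves of the strategy.
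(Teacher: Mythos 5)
Your overall philosophy — recast a strategy as a vector-valued assignment on $\mathcal{Q}_{n^2}$ and control its fluctuation via a Pisier-type inequality weighted by a type-$2$ constant — is exactly the paper's. For bound I you are close, though you omit the step that makes the codomain work: the paper does not decompose $\tilde V_\ve\otimes\tilde W_\ve$ into ``anchor plus fluctuation'' directly, but rather takes a supremum over the shared state $|\varphi\rangle$ so that $\Phi^i_{\Ss^\UU}(\ve)$ becomes \emph{operator}-valued in $M_{r^2,k\tilde k'}$; only then is $\|\mathbb{E}_\ve\Phi^i\|$ bounded away from $1$, via a Cauchy--Schwarz and Grothendieck argument (Proposition~\ref{Prop_Main2}.i) rather than an appeal to a no-signalling cheating value.

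Bound II is where there is a genuine gap. You want to treat $\ve\mapsto(V\otimes W_\ve)|\varphi\rangle$ as a Hilbert-valued map (type-$2$ constant $1$) and then extract the $n^{3/4}$ factor from an $L^2\to L^4$ moment estimate on the $n^2$ hypercube coordinates ``tempered by orthogonality of $|ij\rangle_C$.'' This does not work: once the second round is ``peeled off'' via Cauchy--Schwarz, the Hilbert-valued type inequality delivers only $\sigma^{ii}_\Ss$, and no hypercube moment comparison produces a clean $n^{3/4}$ factor from nowhere. In the paper, the $n^{3/4}$ is the type-$2$ constant with $n^2$ vectors of a specific non-Hilbertian codomain $X^{ii}=\Ss_1^{\tilde k',n}\otimes_{(\ve,\pi)_{1/2}}\Ss_1^{\tilde k',n}$; its estimate (Proposition~\ref{Type_Inttheta}, Eq.~\eqref{Type_Int1/2}) hinges on Pisier's cotype-$2$ bound for projective tensor products — itself needing the UMD constant of Schatten classes and type/cotype duality up to a log factor — followed by complex interpolation with the projective norm to trade a power of $n$ for a weaker type exponent. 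That Banach-space input is the real mathematical content of part II, and your proposal replaces it with a combinatorial estimate that doesn't substitute for it. The bound $\|\mathbb{E}_\ve\Phi^{ii}\|_{\tilde X^{ii}}\le\sqrt{3}/2+o(1)$ is also obtained not by symmetry but by reduction to bound~I applied to strategies with $\ve$-independent second round (Proposition~\ref{Prop_Main2}.ii), which you do not have.
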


What this theorem tells us is that cheating strategies for $G_{Rad}$ for which $\sigma^i $ or $ \sigma^{ii} $ are small enough necessarily need to make use of quantum resources of size exponential in a power of $n$, (loosely) matching the exponential entanglement consumption of known attacks\footnote{ The attack from \cite{KonigBeigi} requires an entangled system of dimension $O(\exp(n^4))$, that  is still much larger than our bounds for smooth strategies. Nonetheless,  we consider that any separation on resources that is exponential in a power of $n$ is enough to discriminate between the relative power among different agents.  This is our main motivation in this work.  }. We give a more concrete statement in the form of a corollary:
\begin{corollary}[Informal]\label{mainCor}
	Consider  a cheating strategy for $G_{Rad} $, $\Ss$, attaining value $\omega(G_{Rad};\Ss) $ $\ge 1 -\epsilon$ for some $0\le \epsilon \le \frac{1}{8}$. Denote by $k$ the local dimension of the quantum resources used in $\Ss$.
	
	If  $ \sigma^i  = O( \mathrm{polylog}(n) / n^{\alpha}) $ or $ \sigma^{ii}  = O(  \mathrm{polylog}(n) / n^{3/4 + \alpha})   $ for some $\alpha >0$, then:
	\beq\nn 
	k  = \Omega \big( \exp\big( n^{\alpha'} \big) \big) \quad \text{for some }\alpha'>0.
	\eeq
\end{corollary}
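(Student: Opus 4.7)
The plan is to deduce the corollary directly from Theorem \ref{mainThm} by algebraic manipulation; there is no new ingredient to introduce, only an optimization in the two free parameters $\sigma_\Ss$ and $k$ once the hypothesis $\omega(G_{Rad}^{(n)};\Ss) \ge 1 - \epsilon \ge 7/8$ is fixed.

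First I would handle the case $\sigma^i_\Ss = O(\mathrm{polylog}(n)/n^\alpha)$. Plugging the lower bound on $\omega$ into inequality (I) of Theorem \ref{mainThm} one gets
\[
\tfrac{7}{8} - C_1 - O(n^{-1/2}) \;\le\; C_2\, \sigma^i_\Ss \, \log^{1/2}(k).
\]
Since $C_1 < 1$ and $\epsilon \le 1/8$, the left-hand side is bounded below by a positive constant $c > 0$ for all sufficiently large $n$. Rearranging yields $\log(k) \gtrsim 1/(\sigma^i_\Ss)^2$, and substituting the assumed decay of $\sigma^i_\Ss$ gives
\[
\log(k) \;=\; \Omega\!\left(\frac{n^{2\alpha}}{\mathrm{polylog}(n)}\right) \;=\; \Omega(n^{\alpha'})
\]
for any $0 < \alpha' < 2\alpha$. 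This produces $k = \Omega(\exp(n^{\alpha'}))$ as desired.

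Next, for the case $\sigma^{ii}_\Ss = O(\mathrm{polylog}(n)/n^{3/4+\alpha})$, I would proceed analogously using inequality (II). After discarding the lower-order error terms (valid for $n$ large enough because $\log^{3/2}(nk)/n \to 0$ whenever the conclusion $\log k = \Omega(n^{\alpha'})$ is consistent with itself), one obtains
\[
c \;\le\; C_3 \, \sigma^{ii}_\Ss \, n^{3/4} \, \log^{3/2}(nk),
\]
so $\log^{3/2}(nk) \gtrsim 1/(\sigma^{ii}_\Ss \, n^{3/4})$. Substituting the hypothesis on $\sigma^{ii}_\Ss$ produces $\log(nk) = \Omega(n^{2\alpha/3}/\mathrm{polylog}(n))$, and absorbing the $\log n$ contribution yields $\log k = \Omega(n^{\alpha'})$ for any $0 < \alpha' < 2\alpha/3$.

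The only mild care point is the self-consistency check in case (II): the $O(\log^{3/2}(nk)/n)$ error term must be absorbed, which is legitimate once one notes that if $k$ were smaller than the claimed bound the conclusion would already follow, while if $k$ is at least $\exp(n^{\alpha'})$ for small $\alpha'$ the error term is of order $n^{3\alpha'/2 - 1} \log^{3/2}(n)$ and hence still negligible provided $\alpha' < 2/3$, which we may arrange. Beyond this bookkeeping, the proof is essentially a substitution, so there is no genuine obstacle; all the analytic work has been done in Theorem \ref{mainThm}.
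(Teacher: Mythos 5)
Your overall plan is the right one, and it is the only one available: the corollary is just a rearrangement of Theorem~\ref{mainThm} (the paper gives no separate proof of it), so substitution and bookkeeping is all that can be done. Your case (II) self-consistency check is also the right thing to say. However, there is one step that, as written, does not go through.

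You argue that ``since $C_1 < 1$ and $\epsilon \le 1/8$, the left-hand side is bounded below by a positive constant $c>0$.'' That inference is wrong: $C_1 < 1$ does not imply $7/8 - C_1 > 0$. If $C_1$ were, say, $0.9$, the left-hand side would be negative and the bound would be vacuous. What actually makes the corollary true is that the proof of Theorem~\ref{mainThm_2} (via Proposition~\ref{Prop_Main2}) pins down the specific constants $C_1 = 3/4$ and $\tilde C_1 = \sqrt{3}/2$, both of which happen to lie strictly below $7/8$; indeed, $\tilde C_1 = \sqrt{3}/2 \approx 0.866$ only just clears the threshold, which explains why $\epsilon \le 1/8$ appears as the stated hypothesis rather than some looser condition. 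Your argument should cite these explicit values (or at least the inequality $\max(C_1,\tilde C_1) < 7/8$ with a pointer to Proposition~\ref{Prop_Main2}) rather than the vacuous $C_1<1$. With that one correction, the rest of your manipulations -- $\log(k) \gtrsim (\sigma^i_\Ss)^{-2}$ in case (I), $\log^{3/2}(nk) \gtrsim (\sigma^{ii}_\Ss n^{3/4})^{-1}$ in case (II), and the absorption of the $O(\log^{3/2}(nk)/n)$ error term by a self-consistency argument -- are correct and give the stated exponential bound $k = \Omega(\exp(n^{\alpha'}))$ for appropriate $\alpha' > 0$.
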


As we see, the regularity parameters $\sigma^{i(ii)} $ play a key role in these results. We notice that known attacks in \cite{Buhrman2011,KonigBeigi} in fact fulfil the hypothesis of the previous corollary: the second round of local operations in these attacks is $\ve$-independent, hence\footnote{ Notice that in this case the first summand in the RHS of \eqref{eq_sigma1_intro}  vanishes. This leads to the estimate $\sigma^i  \lesssim_{\log} 1/n$. A look into the proof of the upper bound \eqref{eq_sigma1_intro}, Proposition \ref{Prop.A1}, i., reveals that the logarithmic term hidden in $\lesssim_{\log}$ is indeed proportional to $\log(n)$. }  $\sigma^i  \sim \log(n)/n$ . However, we do not know how generic this behaviour is. More generally, it turns out that from any Programmable Quantum Processor \cite{NielsenChuang_97} -- as  the already considered protocol of Port Based Teleportation, for example -- with the capability of implementing the diagonal unitaries required in $G_{Rad}$, we can construct an assignment $\Phi$ fulfilling Theorem \ref{mainThm} with regularity parameter again of order $\sigma^{i} \sim \log(n)/n$. Therefore, Corollary \ref{mainCor} also applies to this broader case allowing to recover some of the results obtained  in \cite{Kubicki_19}. This is not a coincidence, our approach here builds on ideas introduced in this previous work.

Turning our attention towards $\sigma^{ii}$, a trivial example of a family of \emph{smooth} attacks for which $\sigma^{ii} \sim \log(n)/n$ is given by cheaters sharing no entanglement in advance -- even when entanglement can be created in the first round of local operations and  distributed for the second round. By contrast, we   can  also easily compute $\sigma^{ii}$ for the attack in \cite{KonigBeigi} obtaining $\sigma^{ii} = O(1)$. Therefore, our second item in Theorem \ref{mainThm} is not able to predict good lower  bounds for this case. Still, we think that this second item  might be useful for restricting the structure of possible attacks to PV, especially in conjunction with the first part of the theorem.

More importantly, the second part of Theorem \ref{mainThm} leads us to put forward the possibility of an unconditional lower bound for $k$, i.e., a bound in the spirit of Corollary \ref{mainCor} but dropping out the assumptions regarding $\sigma^{i(ii)} $. Even when we were not able to prove such a bound, we  relate its validity  with a conjecture about the geometry of some Banach spaces. More concretely, our conjecture has to do with estimates of type constants of tensor norms on finite dimensional Hilbert spaces. Even when these  properties for the case of a single Hilbert space are very well understood -- in fact, in this case the study of type and cotype reduces to an elementary generalization of the parallelogram law --, the situation changes dramatically when tensor products of several such spaces are considered. For the latter, long-standing  questions remain open as, for example, whether the simple space $\ell_2 \otimes_{\pi} \ell_2 \otimes_{\pi} \ell_2$ has finite cotype (see Sections \ref{Sec.2.4.1} and \ref{Sec2.4.3} for the definition of the objects mentioned here). This is a famous question asked by Pisier decades ago -- see, for instance, \cite{Pisier90} -- and about which still very little is known.

 Once we formally state the conjecture in Section \ref{Sec6}, we provide some computation supporting it. We analyze the most direct approaches to disprove the conjectured statement  providing an estimate of the volume ratio of some relevant spaces. This might have interesting ramifications on the still not completely understood relation between volume ratio and cotype of  Banach spaces.

\paragraph{Further extensions of this work.}  To conclude this introductory summary, we highlight that there  is a natural way to remove the classical part of the input in $G_{Rad}$, obtaining protocols in which the overall dimension (classical and quantum) of the systems the honest agents are required to manipulate is polynomial in $n$. Taking inspiration from the definition of $G_{Rad}$, we now fix $\ve \in \calQ_{n^2}$ as publicly known and define a PV protocol $G_\ve$ that proceeds as follows:
\begin{enumerate}
	\item the verifier starts uniformly preparing the state $|\psi\ra : = \frac{1}{n}  \sum_{i,j} |i\ra_A \otimes |j\ra_B \otimes |ij\ra_C $ in a tripartite Hilbert space $\HH_A \otimes \HH_B \otimes \HH_C$. The agent at $x - \delta$ receives register $\HH_A $ while the one at $x + \delta$ receives $\HH_B$.  Register $\HH_C$ is kept as private during the execution of the protocol. 
	\item Then, registers $\HH_A \otimes \HH_B$ are forwarded to the verifying location $x$, $\HH_A$ from  its left and $\HH_B$ from its right. 
	\item An honest prover located at $x$, upon receiving both pieces of information, has to apply the diagonal unitary on $\HH_A \otimes \HH_B$ determined by $\ve$. Immediately,  registers $\HH_A \otimes \HH_B$ must be returned. $\HH_A$ should travel to the verifier at $x - \delta$ and $\HH_B$,  to the verifier at $x + \delta$.
	\item The verification is now carried out in the same way as in $G_{Rad}$. 
\end{enumerate}

Considering  the \emph{family} of protocols $\{G_\ve\}_{\ve \in \calQ_n^2}$ as a whole, it is possible to recover a notion of smooth strategies with some associated regularity parameter $\sigma$. Such notion of regularity allows us to obtain an equivalent result to Theorem \ref{mainThm} -- and, therefore, to Corollary \ref{mainCor} -- for this case.  A criticism that might be made at this point is that it is  less clear than before why one should expect any regularity among strategies that applies to different games. A possible line of argumentation against this criticism could be stated in terms of protocols for instantaneous non-local quantum computation: if one aims to construct protocols that are universal, in the sense they are able to non-locally implement any unitary, it seems rather difficult to come with something that depends on the unitary to be implemented in a very non-regular way.  The authors of \cite{KonigBeigi} seem to go along with that idea when stating the notion of \emph{``protocols which only make black-box use of the unitary''}.

Leaving aside the concerns triggered by the appearance of regularity assumptions, one could also pursue unconditional bounds for $\{G_\ve\}_{\ve \in \calQ_{n^2}}$ following a similar route as the one drawn in Section \ref{Sec6}. This time the Banach spaces that appear are even more convoluted and, at the moment of writing this manuscript, we do not have any serious evidence to guess the behaviour of their type properties. The study of the issues arising from the previous considerations is postponed for future research.

Finally, as a general comment, we note that the study of PV protocols can be phrased in terms of \emph{quantum games}, a framework that might provide the right level of abstraction for further generalizations of the present work. The interested reader can find a detailed account of such rephrasing in \cite[Chapter 4]{Kubicki_thesis}.

\subsection{Proof sketch}

Here we sketch the main ideas behind the proof of Theorem \ref{mainThm}. These ideas are also at the bottom of the constructions that allow us to establish the more general connection between Question \ref{question1} and type constants that leads to the conjecture indicated above.

As we have already mentioned, the starting point of our study is the identification of each cheating strategy, $\Ss$, with a vector-valued function $\Phi: \calQ_{n^2} \rightarrow X$, being $\calQ_{n^2} =\{\pm1\}^{n^2}$ and $X$, a well-suited Banach space. With an appropriate definition of $\Phi$ -- which also includes the choice of $X$ -- we can obtain a bound on the success probability  $\omega(G_{Rad}, \Ss) $ in terms of the average norm of the image of that function. We obtain bounds of the following kind:
      
      	\vspace{-0.8em}\beq\nn
      \omega ( G_{Rad}; \Ss) \le \, \mathbb{E}_\ve \, \big \| \Phi (\ve) \big \|_{X},
      \eeq
      
 \noindent where $\ve$ is taken uniformly distributed in $\calQ_{n^2}$. Therefore, the key quantity we study is precisely $\mathbb{E}_\ve \, \| \Phi (\ve)  \|_X$. For that, we bring together two main ingredients. On one hand, a Sobolev-type inequality of Pisier for vector-valued function on the Boolean cube and, on the other, the type-2 constant of the Banach space $X$, $\rmT_2(X)$. The combination of these two tools provides us with an inequality:
    	\beq\label{PisierIneq_Intro}
   \mathbb{E}_\ve\big \| \Phi (\ve) \big\|_X  \le \big \|\mathbb{E}_\ve \Phi (\ve) \big \|_X + C \ \sigma_\Phi \ \mathrm{T}_2 ( X ) ,
   \eeq
   where $C$ is an independent constant and $\sigma_\Phi$ is a regularity measure for $\Phi$\footnote{ See Section \ref{Sec2.4.4} for a detailed discussion. There, the more refined type-2 constant with $m$ vectors is considered, see Corollary \ref{Cor1}. For the sake of simplicity, we consider the plain type-2 constant in this introductory section.}. Specific choices for $\Phi$ and $X$ leads to parameters $\sigma^{i(ii)}$ appearing in Theorem \ref{mainThm}. 
   
   Now, depending on how $\Phi$ is constructed, $\|\mathbb{E}_\ve \Phi (\ve) \big \|_X $ can be upper bounded by a quantity strictly smaller than $1$-- see, for instance, Proposition \ref{Prop_Main2} . Once such a bound is obtained, the focus can be put on the second term in the RHS of \eqref{PisierIneq_Intro}.
   
    To obtain Theorem \ref{mainThm} we propose in Section \ref{Sec4} two possible choices for $\Phi$ and study the type constants of their image spaces.  Furthermore, in order to remove the dependence on $\sigma$ in the  bounds obtained in that way, we propose in Section \ref{Sec6} yet another choice for $\Phi$. This third function is regular enough by construction allowing to obtain bounds depending only on the dimension of the system used by the cheaters. The downside of this latter approach is that the space $X$ in this last case becomes more involved and its type properties cannot be estimated with the techniques at our disposal.

To finish this introduction we sum up the structure of the paper: we start introducing in Section \ref{Sec2} preliminary material needed to develop this work. Then, in Section \ref{Sec3} we study general aspects of cheating strategies for $G_{Rad}$ paving the ground for our main results. The analysis of strategies  leading to Theorem \ref{mainThm} is presented in Section \ref{Sec4}. 
 In Section \ref{Sec6} we discuss the possibility of pushing forward the techniques presented in this work to obtain unconditional lower bounds on the resources required by the cheaters, only dependent on the dimension of the quantum system they manipulate. We connect this question with the problem in local Banach space theory of obtaining precise estimates for the \emph{type constants} of particular Banach spaces. After establishing that connection in a precise and rigorous way, we provide some calculations supporting a positive resolution of a conjecture that would lead us to strengthening  the security of $G_{Rad}$. The paper ends with a discussion of the results presented and possible directions for future work. This corresponds to Section \ref{Sec7}.

 \section{Preliminaries}\label{Sec2}
 
 	\subsection{Notation}
 
 In order to simplify the presentation, we use symbols	 $\approx$ and $\lesssim$ to denote equality  and inequality up to multiplicative dimension independent constants and $ \approx_{\log}$ and $\lesssim_{\log}$,  equality and inequality up to multiplicative logarithmic factors on the dimensions involved.
 
 The quantum mechanical description of a system is based on an underlying complex Hilbert space, that we denote $\HH,\, \HH',\,\HH_A,\, \HH_B,\, \mathcal{K},\, \ldots$. When the dimension is known to be a specific natural number, say $n$, we use the notation $ \ell_2^n$. Given that, a density matrix is a trace one, positive operator $\rho:\HH \rightarrow \HH $. We denote the set of density matrices as $\D(\HH)$. Quantum operations are completely positive trace preserving linear maps $\D(\HH) \rightarrow\D(\HH')$. The set of these maps is denoted here as $\CPTP(\HH, \HH')$ or simply $\CPTP(\HH)$ when the input and output spaces are the same. The operation of discarding a subsystem is implemented by the partial trace. We specify the subsystem discarded by its underlying Hilbert space, e.g., in a composed system with underlying Hilbert space $\HH \otimes \HH'$ the operation of discarding $\HH'$ is denoted $\Tr_{\HH'} \in \CPTP(\HH\otimes \HH', \HH)$. To describe the evolution of a quantum system after a measurement, we make use of \emph{instruments}, that are collections of completely positive trace non-increasing maps summing up to a trace preserving map. To denote a completely positive (maybe non trace-preserving) map we use the symbol $\mathrm{CP}$ instead of the previous $\CPTP$. The set of instruments composed by finite collections of maps in $\mathrm{CP}(\HH,\HH')$ are denoted $\mathrm{Ins}(\HH,\HH')$.
 
 To denote Banach spaces we usually use letters $X,\, Y,\,\ldots$ and $X^*,\, Y^*,\,\ldots$ for the corresponding Banach duals. $B_X$ denotes the unit ball of a Banach space $X$.   $\BB(X,Y)$ is the space of bounded linear operators between arbitrary Banach spaces $X$ and $Y$, while $\ell_p(X)$ and $L_p(X)$, with $p\in (0, \infty]$, are the classical (vector valued)  spaces of $p-$summable sequences and $p-$integrable functions on the unit interval.  More specifically, we also fix now the notation for two Banach spaces that will appear repeatedly. Given two Hilbert spaces $\HH$, $\HH'$, we denote as $\BB(\HH,\HH')$ and $\Ss_1(\HH, \HH')$ the space of bounded and trace class operators from $\HH$ into $\HH' $, respectively. 
  In the finite dimensional case, $\HH = \ell_2^m$, $\HH' = \ell_2^n$, we simplify this notation to $M_{n,m}$ and $\Ss_1^{n,m}$ ($M_n$, $\s^n$ when $n=m$). To denote elements of the computational basis we use the quantum information oriented  convention of using the symbols $|i\ra,\, \la i|,\, |j\ra,\, \ldots$.  When working with elements in the complex vector space composed by $n\times m$ matrices -- as is the case of elements in $M_{n,m}$ or $\Ss_1^{n,m}$, case we consider repeatedly below -- the usual basis of matrices with only one non-zero entry is denoted here as $\lbrace | i \xa j | \rbrace_{\begin{subarray}{l}
 	i= 1 ,\ldots, n\\
 	j= 1, \ldots, m
 	\end{subarray}}$. Observing the range of each subindex, the convention chosen here matches the standard agreement on regarding \emph{kets} $|i\ra$ as column vectors and \emph{bras} $\la i | $ as rows. 

 	\subsection{Position Based Cryptography in 1-D}\label{Sec2.2}
 	
 	The major aim of this work is to make progress towards Question \ref{question1}. For that, we restrict ourselves to the simplest scenario: position verification in 1-D. In this situation, we restrict the world to a line in which we consider a preferred location $x$ -- the position to be verified. The verifier, composed by two agents, $V_{L}$ and $V_{R}$, is located around this honest position. Let us consider $V_L $ in position $x- \delta$ and $V_R $ in position $x +  \delta $. Then, $V_L$ and $V_R$ perform an interactive protocol sending (possibly quantum) messages in the direction of $x$.   These messages arrive to $x$ at the same time, so that a honest prover located at $x$ could receive them and, accordingly, generate answers   for $V_L$ and   $V_R$. The verifier accept the verification if and only if: \begin{itemize}
 		\item (correctness) the answers are correct with respect to  verifier's messages (according to some public rule);
 		\item (timeliness) the answers arrive on time to the locations of $V_L$ and $V_R$. Assuming that the signals between verifiers and prover travel at some known velocity $c$, the answers should arrive to $V_L$ and $V_R$ at time $2 \delta / c$  after the start of the protocol.
 	\end{itemize}
 

Before continuing, let us set a generic structure for such a protocol. To prepare the messages $V_L$ and $V_R$ must forward, the verifier  prepares a (publicly known) state in a composite system with some underlying Hilbert space  $\HH_L \otimes \HH_{R} \otimes \HH_C$. That is, he prepares a density matrix $ \rho_0$ on that state space and sends the register $\HH_L$ to $V_L$ and $\HH_R$ to $V_R$. $\HH_C$ is considered to take into account the possibility that the verifier keeps some part of the initial system as private during the protocol. Then,  $V_L$ and $V_R$ send their systems in the direction of $x$.   Now, the agent(s) interacting in the middle with $V_L$ and $V_R$ apply some quantum operation on the communicated system $\HH_L \otimes \HH_R$ obtaining as output another state $\rho_{ans}\in \D(\HH_L'  \otimes \HH_R')$. The subsystems $\HH_L' $, $\HH_R'$ are forwarded to $V_L$, $V_R$, respectively. To decide whether the verification is correct or not, the verifiers first check the \emph{timeliness} condition is fulfilled and then perform a (publicly known) dichotomic measurement on the system $\HH_L' \otimes \HH_R' \otimes \HH_C$.

\begin{remark}
Above, $\rho_0$ and $\rho_{ans}$ are in general quantum states but they could perfectly describe also classical messages as well as quantum-classical messages. This will be indeed the case in the concrete scheme analized in this work.
\end{remark}
\begin{remark}
  Note that a honest prover, that is, an agent at position $x$, shouldn't have any problem to pass the test: at time $\delta /c$ he would receive the whole system $\HH_L \otimes \HH_R$ from the verifiers, having the capability to perform any global operation on it to prepare his answer. This answer can  still arrive on time to $V_L$ and $V_R$. The  action described in the previous lines  is the most general operation that can be performed on verifier's messages, which are the only information transmitted  in the protocol. Therefore, if the challenge is well designed (it can be passed), the honest prover must be able to succeed at it\footnote{ We don't take into account here the computational limitations at which the agents might be subjected.}.
 	\end{remark}
 	
 	Next, let us focus on how the general protocol described above can be cheated. In order to impersonate the identity of a honest prover at position $x$, a couple of adversaries, Alice and Bob, at positions $x \pm \delta'$, $0 < \delta'< \delta $, can intercept the message systems $\HH_L$, $\HH_R$, interact between themselves to generate answers for the verifier and forward those answers in correct timing. In order to respect the timeliness of the protocol, the most general action of the cheaters proceeds as follows:
 	 	\begin{figure}[H]
 		\centering
 		{\setlength{\fboxsep}{10pt}
 			\framebox{%
 				\parbox{0.9\textwidth}{
 					\begin{enumerate}
 						\item Before the start of the protocol, Alice and Bob prepare some shared entangled state in a private register $ \HH_{E_a} \otimes \HH_{E_b}$;
 						\item Alice receives question register $\HH_L$ and applies a quantum channel $\mathcal{A}\in \CPTP(\HH_L \otimes \HH_{E_a},\,  \HH_{A \shortrightarrow B} \otimes \HH_{A \shortrightarrow A }) $. Similarly, Bob receives $\HH_R$ and applies $\mathcal{B} \in \CPTP(\HH_R \otimes \HH_{E_b},\, \HH_{B \shortrightarrow A} \otimes \HH_{B \shortrightarrow B} )$;
 						\item the cheaters interchange registers $\HH_{A \shortrightarrow B}$ and $\HH_{B \shortrightarrow A} $, keeping $\HH_{A \shortrightarrow A } $, $\HH_{B \shortrightarrow B } $\protect\footnotemark;
 						\item after this last step, Alice holds system $\HH_{A \shortrightarrow A } \otimes \HH_{B \shortrightarrow A } $, on which she applies another channel $\tl{\mathcal{A}} \in \CPTP (\HH_{A \shortrightarrow A } \otimes \HH_{B \shortrightarrow A }, \,\HH_L'  ) $. Similarly, Bob applies $\tl{\mathcal{B}} \in \CPTP( \HH_{B \shortrightarrow B } \otimes \HH_{A \shortrightarrow B } , \,  \HH_R' ) $;
 						\item finally, Alice sends $\HH_L'$ to $V_L$ and Bob $\HH_R'$ to $V_R$.
 					\end{enumerate}
 				}%
 		}}
 		\caption{ Structure of adversarial action attacking 1-D PV protocols.}\label{s2wStrategies}
 	\end{figure}
 	\footnotetext{ In general, we model in that way any kind of communication between Alice and Bob, classical or quantum.   However, in the particular setting studied later on in Section \ref{Sec3}, we will see that the dimension of $\HH_{A \shortrightarrow B}$ and $\HH_{B \shortrightarrow B} $  is essentially determined by the quantum resources the cheaters share, allowing us to disregard the classical communication that they might additionally use. See Section \ref{Sec3}, Lemma \ref{Lemma_ClassCom}, for a precise statement.}

 We call in this work \emph{simultaneous two-way communication scenario}, $s2w$,  the set of actions -- strategies from now on -- with this structure. This scenario is central for us and will appear repeatedly in the rest of this manuscript.

 	\subsection{Banach spaces, operator ideals and type constants}

 	At a technical level, the results of this work follow from the study of  Banach spaces formed by tensor products of Hilbert spaces. The spaces $M_{n,m}$ and its dual, $\Ss_1^{n,m}$, play a prominent role in the rest of the manuscript.  Properties of these spaces in conjunction with a classical Sobolev-type inequality of Pisier allow us to obtain our main result, Theorem  \ref{mainThm}.

 	 The key property we study of these spaces are type constants, that we introduce in Section \ref{Sec2.4.3}. Before that, we need to introduce some objects we work with in the following sections.
 	
 	\subsubsection{Operator ideals} \label{Sec.2.4.1}  \vspace{0.5cm}
 	
 	A deeper understanding of the constructions appearing in this work is provided by  the perspective of the theory of operator ideals. For the reader's convenience, we first sum up the contents of this section: given two finite-dimensional\footnote{ Even though in most cases the following material also applies in the infinite-dimensional case, for simplicity, we restrict to finite dimension that is all we will use here. This allows us to use the equivalence between operators and tensor products in a comfortable way ignoring the subtleties that appear at this point for infinite dimension.} Banach spaces $X$ and $Y$ we consider the space of bounded linear operators from $X$ into $Y$, $\BB(X,Y)$. An operator ideal is essentially an assignment of any pair of  Banach spaces $X$ and $Y$ with a subset of $\BB(X,Y)$ that has the \emph{ideal} property of being closed  under composition with  bounded linear maps. 
 	We provide  \cite{Pietsch86,DefantFloret} as standard references on this matter for the interested reader. In this section:
 	\begin{enumerate}
 		\item The first  examples of operator ideals we introduce are \emph{tensor norms} on pairs of Banach spaces. This includes the space of bounded operators, $\BB(X,Y)$, or $X^* \otimes_\ve Y$ in tensor norm notation, 2-summing operators $\pi_2(X^*,Y)$, or $X\otimes_{\pi_2} Y$,  and the ideal of nuclear operators, denoted as $\mathcal{N}(X,Y)$, or $X^* \otimes_\pi Y$\footnote{ Recall that here we restrict $X$ and $Y$ to be finite dimensional.}.
 		 
 		\item When $X$ and $Y$ are Hilbert spaces, another prominent family of operator ideals are the well-known Schatten classes $\Ss_p$, for $p\in [1,\infty] $. It turns out that these classes can be generalized to operators between arbitrary Banach spaces, leading to the definition of weak Schatten von-Neumman operators of type $p\in [1,\infty] $, denoted here as $\mathfrak{S}_p^w(X,Y)$ or $X^* \otimes_{\mathfrak{S}_p^w} Y$.
 		\item Finally, here we also define a variant of the space $\mathfrak{S}_p^w(X,Y)$ that appears naturally in our study and that seems to be new in the literature. We denote this space $\mathfrak{S}_p^{w-cb}(X,Y)$ or $X^* \otimes_{\mathfrak{S}_p^{w-cb}} Y$ and call it the space of \emph{weak-cb} Schatten von-Neumman operators of type $p\in [1,\infty] $. The appellative cb is reminiscent of the fact that this new structure makes use of constructions coming from operator space theory. Indeed, $ \mathfrak{S}_p^{w-cb} $ is an operator ideal but in the operator space sense, therefore belonging more naturally to   that category than to the one of Banach spaces. In any case, we state this as a matter of curiosity and completeness, and these fine-grained details are irrelevant for the scope of the present work. Nonetheless, it is possible that a further exploration of these structures  could lead to the clarification of some of the problems we leave open. 
 	
 	\end{enumerate}
 	 After this brief summary, we provide now the details of the contents cited above. We follow part of the exposition \cite[Chapter 2]{Pietsch86} with suitable simplifications adapted to the scope of this work.

 	For finite dimensional Banach spaces $X$ and $Y$,  the space of linear maps $X \rightarrow Y$ can be  identified in a simple way with the tensor product $X^* \otimes Y$, as was implicitly assumed above. The identification consists in associating to any element in $X^* \otimes Y$, $\, \hat f = \sum_{i} x_i^*  \otimes y_i$, the linear map $f: X \ni x \mapsto f(x) := \sum_{i} x_i^*(x) \, y_i \, \in Y$. Conversely, to any linear map $f: X \rightarrow Y$ we associate  the tensor $\hat f =  \sum_i x^*_i \otimes f(x_i)$, where $\lbrace x_i \rbrace_i,\, \lbrace x_i^* \rbrace_i $ are dual bases of $X$ and $X^*$, respectively. Based on that, we will tend to present our results making explicit the tensor product structure but sometimes, especially in this introductory part of the paper, it  will be more natural to talk about mappings, so we will use both conventions interchangeably.
 	
 	 The first  operator ideal we encounter is the one of bounded operators from $X$ into $Y$, that we denote $\BB (X,Y)$ and that is the Banach space of linear operators $f:X\rightarrow Y$ endowed with the operator norm, $\|f \| := \sup_{x\in B_X} \| f(x) \|_Y <  \infty$. Using the equivalence stated before, understanding this space as a tensor product is precisely how the injective tensor product is defined: $X^* \otimes_\ve Y \simeq \BB (X,Y)$.  If $X$ and $Y$ are finite dimensional spaces, the dual of $X^* \otimes_\ve Y $ coincides with the  projective tensor product, $X \otimes_\pi Y^* \simeq (\BB (X,Y))^*$. It is enough for the scope of this manuscript to take this equivalence as the definition of $X \otimes_\pi Y^*$. These norms satisfy the desirable \emph{metric mapping property}: for any Banach spaces $X_0$, $X_1$, $Y_0$, $Y_1$, and any operators  $ f \in \BB(X_0,X_1)$, $ g \in \BB(Y_0,\, Y_1)$,
 	 \beq\label{MetricMapProp}
\big \|  f \otimes g: X_0 \otimes Y_0 \rightarrow X_1 \otimes Y_1 \big \| \le \big \| f :X_0 \rightarrow X_1 \big \| \, \big\| g : Y_0 \rightarrow Y_1 \big\|.
 	 \eeq
 	Furthermore, we call \emph{tensor norm} to  any  $\alpha$ that associates to any pair of Banach spaces $X$, $Y$, a norm $\| \, \cdot \, \|_{X\otimes_\alpha Y}$ such that:
 	\begin{itemize}
 		\item $\alpha$ is \emph{in between} of the tensor norms $\ve$ and $\pi$. That is, $$ \text{for any }x \in  X\otimes_\alpha Y, \ \, \|x\|_{X\otimes_\ve Y} \le \| x \|_{X\otimes_\alpha Y} \le \| x \|_{X\otimes_\pi Y};$$
 		\item $\alpha$ satisfies the metric mapping property.
 	\end{itemize}
 
 Later on, in Section \ref{Sec6} we will more generally refer as tensor norms to  the \emph{tensorization} of  different tensor norms. For example, if $\alpha,\, \alpha'$ are tensor norms, the assignment on any three Banach spaces $X,\, Y,\, Z$ of the norm $(X \otimes_\alpha Y) \otimes_{\alpha'} Z$ will also be referred as tensor norm.
 
 The last tensor norm that we need is the 2-summing norm: for an operator $f \in \BB(X,Y)$, 
 \begin{equation}\label{Def_2summing}
 \| f \|_{\pi_2(X,Y) } := \big \|  \id \otimes f: \ell_2 \otimes_\ve X \rightarrow \ell_2(Y)  \big \|,
 \end{equation}
 where the norm in $\ell_2(Y)$ is defined by $\| (y_i)_{i\in \mathbb{N}} \|_{\ell_2(Y)} = ( \sum_{i\in \mathbb{N}} \| y_i \|_Y^2 )^{1/2}$ for any sequence  of elements $y_i \in Y$.
 	

Next we introduce Schatten classes of compact operators between Hilbert spaces, that are the model to define the generalizations in the theory of operator ideals that we use later on. To define the $p$-th Schatten class $\Ss_p(\HH)$, for $ 1 \le p\le \infty	$, we associate to any compact operator on a Hilbert space, $f: \HH \rightarrow \HH$, its sequence of singular values $( s_i(f) )_{i\in \mathbb{N}}$, where $s_1(f)\le s_2(f)\le \ldots$. With this sequence, we define the norm $\| f\|_{\Ss_p(\HH)} := \big\|  ( s_i(f) )_i \big\|_{\ell_p}$, which provides the normed structure on $\Ss_p(\HH)$. We use the simpler notation $\Ss_p$ to denote the $p$-th Schatten class of operators on the separable Hilbert space $\ell_2$.  In the finite dimensional case we use the notation $\Ss_p^{n,m}$ to refer to the $p$-th Schatten class of operators from $\ell_2^m$ into $\ell_2^n$. Notice that the case $p= \infty$ coincides with the operator ideal we denoted before as $M_{n,m}$, while for $p=1$ we obtain $\Ss_1^{n,m}$.

 	Now, moving into operators between arbitrary Banach spaces we define:
 	 	\begin{definition}\label{Def_WeakSchatten}
 		Given an operator $f: X \rightarrow Y$ and  $1 \le p \le \infty$ we say that $f$ is of weak Schatten-von Neumann type $\ell_p$ if
 		$$
 		\|   f \|_{\mathfrak{S}_{p}^w (X,Y) }  :=  \sup \left \lbrace   \Big\|  \Big ( s_i (g\circ f\circ h) \Big )_i \Big \|_{\ell_p}  \ : \ \begin{array}{l} \| g:Y \rightarrow \ell_2   \|\le 1 \\[0.5em]  \| h: \ell_2 \rightarrow X\|\le 1  \end{array}  \right \rbrace <\infty,
 		$$
 		where  $( s_i ( g\circ f\circ h) \big )_i $ is the sequence of singular values  of the operator $g\circ f\circ h: \ell_2 \longrightarrow \ell_2 $. 
 		
 		We denote by $\mathfrak{S}_{p}^w (X,Y)$ the space of operators $f: X \longrightarrow Y$ of weak Schatten-von Neumann type $\ell_p$.	 Alternatively, in the tensor product notation, we refer to this space by $ X^* \otimes_{\mathfrak{S}_{p}^w} Y$.
 	\end{definition}

To finish this section we introduce the space $\mathfrak{S}_p^{w-cb}(X,Y)$ announced at the beginning of this section. Its definition is based on Definition \ref{Def_WeakSchatten} and it incorporates elements of the theory of operators spaces. This forces us to endow $X$ and $Y$ with operator space structures (o.s.s), that is, norms on the \emph{matrix levels} of these spaces, $M_n (X)\equiv M_n \otimes X$, $M_n(Y)\equiv M_n \otimes Y$ for any $n\in \mathbb{N}$ -- see \cite{RuanBook} or \cite{pisier89_book} for a detailed exposition on operator spaces. With that, the natural notion for maps between operator spaces is the notion of \emph{completely bounded} operators, that is, linear operators $f:X\rightarrow Y$ such that
$$
\| f: X\rightarrow Y \|_{cb} : = \sup_{n\in \mathbb{N}}  \| \id \otimes f : M_n(X) \rightarrow M_n(Y) \| <\infty.
$$

The Banach space of completely bounded operators between $X$ and $Y$ is denoted by $\mathcal{C}\BB(X,Y) $. Identifying again linear maps with elements of the tensor product $X^*\otimes Y$,  in the finite dimensional case we denote $\mathcal{C}\BB(X,Y)  =: X^* \otimes_{min} Y$.

A Banach space can be endowed in general with several o.s.s. In the case of the space $\BB(\HH,\KK)$, with $\HH$ and $\KK$, Hilbert spaces, there is a natural o.s.s. determined by promoting the isomorphism  $M_n\left(\BB(\HH,\KK) \right) $ $ \simeq $ $ \BB( \HH^{\otimes n}, \KK^{\otimes n})$ to an isometry (fixing that way the norm in the \emph{matrix levels} of the space)\footnote{ Here we considered hilbertian tensor products in such a way that $\HH^{\otimes n}$ and $\KK^{\otimes n}$ are again Hilbert spaces.}. For a Hilbert space $\HH$, we introduce here the so-called \emph{row} and \emph{column} o.s.s., denoting the corresponding operator spaces $R$ and $C$, respectively.  $R$ is defined via the \emph{row} embedding:
$$
\HH \simeq \BB(\ell_2, \mathbb{C}),
$$
from which we define a norm on $M_n(\HH) $ considering the following isomorphism to be an isometry: $M_n(\HH) \simeq M_n\left( \BB(\HH, \mathbb{C}) \right)\simeq \BB (\HH^{\otimes n},   \ell_2^n).$

Similarly, $C$ is defined substituting the previous \emph{row} embedding by it \emph{column} version
$$
\HH \simeq \BB( \mathbb{C}, \ell_2).
$$

These last two operator spaces turn out to be non-isomorphic, on the contrary to what happens at the Banach level, where they are simply Hilbert spaces. They are still dual between themselves, that is, $C^*  \simeq R$ and $C \simeq R^*$ completely isometrically\footnote{ Meaning that not only $C^*  \simeq R$ and $C \simeq R^*$ stand isometrically but also $M_n(C^*)  \simeq M_n(R)$ and $M_n(C) \simeq M_n(R^*)$ for any $n \in \mathbb{N}$.}. However, to properly state those identifications we need a notion of duality for operator spaces. This notion is induced by that of completely bounded maps introduced before. We say that, for an operator space $X$, $X^*$ is its dual if
$$
M_n(X^*) = \mathcal{C}\BB(X, M_n) \quad \text{ for any } n\in \mathbb{N}.
$$
Notice that for $n=1$ the previous characterization of $X^*$ coincides with  the dual as Banach spaces\footnote{ For that it is necessary to consider the fact that $\mathcal{C}\BB(X, \CC) \simeq \BB(X,\CC)$.}. As a last comment on operator spaces, we note that this duality allows to endow $\Ss_1(\HH)$ with a natural o.s.s. as the dual of $\BB(\HH)$. Now we finally have all the ingredients to  define:
 	 	\begin{definition}\label{Def_cbWeakSchatten}
	Given an operator between operator spaces $f: X \rightarrow Y$ and  $1\le p \le \infty$ we say that $f$ is of weak-cb Schatten-von Neumann type $\ell_p$ if
	$$
	\|   f \|_{\mathfrak{S}_{p}^{w-cb} (X,Y) }  :=  
	\sup \left \lbrace \Big\| \left( s_i\left( g\circ f \circ h \right) \right)_i \Big\|_{\ell_p} \ : \ \begin{array}{l} \big \| \, g: Y \longrightarrow C\, \big \|_{cb} \le 1\\[0.5em] \big \| \,  h: R \longrightarrow X \,  \big \|_{cb} \le 1  \end{array}  \right \rbrace
	 <\infty,
	$$
	where  $( s_i ( g\circ f\circ h) \big )_i $ is the sequence of singular values  of the operator $g\circ f\circ h: \ell_2 \longrightarrow \ell_2 $. 
	
	We denote by $\mathfrak{S}_{p}^{w-cb} (X,Y)$ the space of operators $f: X \longrightarrow Y$ of weak-cb Schatten-von Neumann type $\ell_p$.	  Alternatively, in the tensor product notation, we refer to this space by $ X^* \otimes_{\mathfrak{S}_{p}^{w-cb}} Y$.
\end{definition}

\begin{remark}\label{Rmk_sigma^w}
	Since $B_{\mathcal{CB}(X,Y)} \subseteq B_{\BB(X,Y)}$ for any operator spaces $X$, $Y$, it follows that 
	\beq\label{Prop_OpI2}
	\|   f \|_{\mathfrak{S}_{p}^{w-cb} (X,Y) } \le \|   f \|_{\mathfrak{S}_{p}^{w} (X,Y) },
	\eeq
	for  any $1\le p \le \infty$ and any $f \in \mathfrak{S}_{p}^{w-cb} (X,Y) $.
\end{remark}

 	Before ending this section, we provide an alternative characterization of the norm introduced above when $X = M_{n,m}  $,  $Y=\s^{n, m}$ and $p=2$. That is the case appearing in our study of cheating strategies for PV in Section \ref{Sec4}. For that, we understand $\mathfrak{S}_{2}^{w-cb} ( M_{n,m}  ,\s^{n, m})$ as the tensor product $\s^{n,m} \otimes_{\mathfrak{S}_2^{w-cb}} \s^{n,m}$. Then,
 	\begin{lemma}\label{lemma_CharacSigma}
 		Given a tensor $f\in \s^{n,m} \otimes \s^{n, m}$, where $\s^{n, m}$ is  endowed with its natural o.s.s. (as the dual of $M_{n,m}$),  we have that:
 		$$
 			\|   f \|_{\s^{n,m} \otimes_{\mathfrak{S}_2^{w-cb}} \s^{n,m} }  
 			=
 			\sup_{ \begin{subarray}{c}
 				r \in \mathbb{N}\\
 				 g, h \in B_{ M_{nr,m} }
 				\end{subarray}     } 
 			 \big \|   ( h \otimes  g)(  f)  \big\|_{\ell_2^{r^2}} .
 		$$
 		Above, the action of $ h = \sum_{i=1}^n \sum_{j=1}^r  \sum_{l=1}^m    h_{ijl} |i j\xa  l|  \in M_{nr,m}$ on a tensor $ t = \sum_{i=1}^n  \sum_{l=1}^m $ $t_{il} |i\xa l|  \in \s^{n,m}$ is defined by
 		$$
 		h( t) :=  \sum_{j=1}^r   \left( \sum_{i=1}^n  \sum_{l=1}^m  h_{ijl}  t_{il}  \right) \, |j\ra \in \ell_2^r.
 		$$
 	\end{lemma}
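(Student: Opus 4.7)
The plan is to translate the $\mathfrak{S}_2^{w-cb}$ norm of Definition \ref{Def_cbWeakSchatten}, computed via cb contractions $h: R \to M_{n,m}$ and $g: \s^{n,m} \to C$, into a supremum over matrices in $B_{M_{nr,m}}$. In finite dimensions, any such cb map of finite rank factors through some $R_r$ or $C_r$, so I may fix $r$ and take the supremum over $r$ at the end. The argument has two ingredients: (i) parametrise the admissible $h$ and $g$ by elements of $B_{M_{nr,m}}$, and (ii) rewrite the Hilbert--Schmidt norm of $g \circ f \circ h$ as the $\ell_2^{r^2}$-norm of a bilinear pairing against $f$.

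For (i), the key input is Pisier's representation of cb maps out of the row o.s.s.: for Hilbert spaces $\HH, \KK$,
$$
\mathcal{CB}(R_r, \BB(\HH,\KK)) \simeq \BB(\HH, \KK \otimes_2 \ell_2^r)
$$
completely isometrically, with a cb map $h$ sent to the ``column'' operator whose $j$-th block is $h(e_j)$. Specialising to $\HH=\ell_2^m$, $\KK=\ell_2^n$, the cb norm $\|h: R_r \to M_{n,m}\|_{cb}$ equals the operator norm of the $nr \times m$ block matrix obtained by vertically stacking $H_j := h(e_j) \in M_{n,m}$. Hence cb contractions $h: R_r \to M_{n,m}$ are in bijection with $B_{M_{nr,m}}$, and the induced Banach adjoint $\s^{n,m} \ni t \mapsto (\langle H_j, t\rangle)_j \in \ell_2^r$ is exactly the action described in the lemma. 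By the operator-space duality $(R_r)^* = C_r$, $(M_{n,m})^* = \s^{n,m}$, combined with $\|u^*\|_{cb} = \|u\|_{cb}$, cb contractions $g: \s^{n,m} \to C_r$ are characterised by a Banach adjoint $\tilde g: R_r \to M_{n,m}$, again parametrised by an element of $B_{M_{nr,m}}$ and acting on $\s^{n,m}$ in the same explicit way.

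For (ii), setting $H_j := h(e_j)$ and $G_i := \tilde g(e_i)$, the matrix of the composition $g \circ f \circ h: \ell_2^r \to \ell_2^r$ in the standard basis has entries
$$
(g \circ f \circ h)_{ij} = \langle e_i, g(f(h(e_j)))\rangle = \langle G_i, f(H_j)\rangle_{M_{n,m} \times \s^{n,m}} = (H_j \otimes G_i)(f),
$$
where the last equality views $f \in \s^{n,m} \otimes \s^{n,m}$ as paired against $H_j \otimes G_i \in M_{n,m} \otimes M_{n,m}$ via trace duality on each factor. Since $p=2$, the quantity inside the supremum in Definition \ref{Def_cbWeakSchatten} equals the Hilbert--Schmidt norm of this matrix, hence
$$
\|g \circ f \circ h\|_{\mathcal{S}_2}^2 = \sum_{i,j=1}^r |(H_j \otimes G_i)(f)|^2 = \|(h \otimes g)(f)\|_{\ell_2^{r^2}}^2,
$$
where in the last expression $h$ and $g$ are identified with their stacked parametrising matrices in $M_{nr,m}$. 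Taking the supremum over $r$ and over all such contractions finishes the proof. The only genuinely non-trivial step is the Pisier identification from operator-space theory; everything else is bookkeeping in the tensor/matrix conventions used throughout the paper.
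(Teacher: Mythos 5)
Your proposal is correct and follows essentially the same route as the paper's proof: reduce to finite $r$, identify the $\ell_2$-sum of singular values with the Hilbert--Schmidt norm, and parametrise cb contractions $h: R_r \to M_{n,m}$ and $g: \s^{n,m} \to C_r$ by elements of $B_{M_{nr,m}}$. The only difference is cosmetic: where the paper simply cites the complete isometries $\mathcal{CB}(\s^{n,m}, C_r) \simeq M_{nr,m} \simeq \mathcal{CB}(R_r, M_{n,m})$ from Effros--Ruan, you derive them by spelling out Pisier's block-column representation of cb maps out of $R_r$ and then passing $g$ through operator-space duality, which makes the identification with the action described in the lemma more transparent but does not change the argument.
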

 	
 \begin{proof}
 	The claim follows from the following observations:
 	\begin{itemize}
 		\item a standard argument shows that the supremum in Definition \ref{Def_cbWeakSchatten} can be taken over finite dimensional $C_r$ and $R_r$, where $r\in \mathbb{N} $ is arbitrarily large;
 		\item for an operator between Hilbert spaces, as $ g\circ f \circ h$ in Definition \ref{Def_cbWeakSchatten}, the $\ell_2$-sum of the singular values coincide with the Hilbert-Schmidt norm of the operator, which is the same as the Euclidean norm of the tensor associated. In our case, with a slight abuse of notation, the relevant tensor is $ ( h \otimes  g)(  f)$;
 		\item  finally, when we set $X = M_{n,m}  $,  $Y=\s^{n, m}$ in Definition \ref{Def_cbWeakSchatten}, the optimization is carried  over elements $g \in B_{ \mathcal{CB}  (\s^{n,m},C_r) }$ and  $ h  \in B_{ M_{n,m} }$. But now, it is again a standard result that the following are complete isometries \cite[Section 9.3]{RuanBook}: $ \mathcal{CB} (\Ss_1^{n,m} ,C_r ) \simeq M_{n r, m} \simeq  \mathcal{CB}  (R_r, M_{n,m} )$.  The claim of the lemma is obtained acting with $g,\, h $ viewed as elements in $B_{ M_{n r, m}  }$, as defined in the statement.
 	\end{itemize}

 \end{proof}
 	 	
 	\subsubsection{Interpolation of Banach spaces}\label{Sec.2.4.2}

 	Properties of interpolation spaces allow us to obtain estimates for the type constants of certain spaces that are useful for our purposes in this work. Here we restrict ourselves to the study of the complex interpolation space $(X_0,X_1)_\theta$ for $0< \theta < 1 $ and finite dimensional Banach spaces $X_0$, $X_1$. We decided to avoid  here a full treatment of the rather cumbersome definition of these spaces and focus on stating some natural properties they display. That is enough for the scope of our work. We redirect the interested reader to the classical references \cite{BerghLofstrom76,Triebel78}.
 	
 	In our case, in which $X_0$, $X_1$ are finite dimensional, the space $(X_0,X_1)_\theta$ can always be constructed. In the general case, for arbitrary Banach spaces, if we still can define $(X_0,X_1)_\theta$ we say that the couple $(X_0,\, X_1)$ is compatible\footnote{ Technically, this condition is usually stated as the requirement that $X_0$ and $X_1$ embed continuously in a common  Hausdorff topological vector space.}, so we fix this terminology from now on.  For the sake of concreteness, here we will consider the case in which $X_0$, $X_1$ and $(X_0,X_1)_\theta$ are algebraically the same space but endowed with different norms. The complex interpolation method, that assigns to any compatible couple $(X_0,\, X_1)$ the space $(X_0,X_1)_\theta$, is an \emph{exact interpolation functor of exponent} $\theta$. This means that it satisfies the following:
 	\begin{theorem}[\cite{BerghLofstrom76}, Thm. 4.1.2.]\label{Int_IntProp}
 		For any compatible couples $(X_0,\, X_1)$, $(Y_0,\, Y_1)$, and any linear map $f : (X_0,X_1)_\theta  \rightarrow (Y_0,Y_1)_\theta$:
 		$$
 		\big \| f : (X_0,X_1)_\theta  \rightarrow (Y_0,Y_1)_\theta \big\| \le \big \| f : X_0  \rightarrow Y_0 \big\|^{1-\theta} \  \big \| f : X_1  \rightarrow Y_1 \big\|^\theta,
 		$$
 		where $\| \, \cdot \, \| $ above denotes the usual operator norm.
 	\end{theorem}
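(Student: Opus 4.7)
The plan is to prove this by the classical three-lines-lemma argument that underlies the complex interpolation method. First I would unfold the definition: for a compatible couple $(X_0,X_1)$, the space $(X_0,X_1)_\theta$ is $\{F(\theta): F \in \mathcal{F}(X_0,X_1)\}$, where $\mathcal{F}(X_0,X_1)$ denotes the Banach space of functions $F$ on the closed strip $\overline{S}=\{0\le \mathrm{Re}(z)\le 1\}$, valued in $X_0+X_1$, analytic on the open strip, continuous and bounded on $\overline{S}$, with $t\mapsto F(j+it)$ continuous and bounded into $X_j$ for $j=0,1$, normed by $\|F\|_{\mathcal{F}}=\max_{j=0,1}\sup_t\|F(j+it)\|_{X_j}$. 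The interpolation norm is then $\|x\|_\theta=\inf\{\|F\|_{\mathcal{F}}:F(\theta)=x\}$.

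Set $M_j:=\|f:X_j\to Y_j\|$. The trivial cases $M_0=0$ or $M_1=0$ can be handled at the end, so assume $M_0,M_1>0$. Given $x\in (X_0,X_1)_\theta$ and $\epsilon>0$, pick $F\in\mathcal{F}(X_0,X_1)$ with $F(\theta)=x$ and $\|F\|_{\mathcal{F}}\le (1+\epsilon)\|x\|_\theta$. The key construction is to introduce the scalar weight function $\lambda(z):=M_0^{z-\theta}M_1^{\theta-z}$ (interpreting $M_j^w$ via the principal branch) and define
\[
G(z):=\lambda(z)\,f(F(z)).
\]
By analyticity of $\lambda$ and $F$, linearity of $f$, and the fact that $f$ is bounded $X_0+X_1\to Y_0+Y_1$ (since it is bounded on each $X_j$), the function $G$ inherits the admissibility conditions and lies in $\mathcal{F}(Y_0,Y_1)$, with $G(\theta)=f(x)$.

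Next I would evaluate $\|G\|_{\mathcal{F}}$ on the two boundary lines. On $\mathrm{Re}(z)=0$ we have $|\lambda(it)|=M_0^{-\theta}M_1^{\theta}$, so
\[
\|G(it)\|_{Y_0}\le |\lambda(it)|\,\|f(F(it))\|_{Y_0}\le M_0^{-\theta}M_1^\theta\cdot M_0\cdot \|F(it)\|_{X_0}\le M_0^{1-\theta}M_1^\theta(1+\epsilon)\|x\|_\theta,
\]
and symmetrically on $\mathrm{Re}(z)=1$, $|\lambda(1+it)|=M_0^{1-\theta}M_1^{\theta-1}$, giving the same bound. Hence $\|f(x)\|_{(Y_0,Y_1)_\theta}\le \|G\|_{\mathcal{F}}\le M_0^{1-\theta}M_1^\theta(1+\epsilon)\|x\|_\theta$, and letting $\epsilon\to 0$ yields the claimed inequality. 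The degenerate case where some $M_j=0$ follows by replacing $M_j$ with $M_j+\delta$ and letting $\delta\to 0$, using the same weight.

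There is really no serious obstacle here, since this is the textbook three-lines argument; the only delicate point is a careful verification that the constructed $G$ satisfies all the admissibility requirements defining $\mathcal{F}(Y_0,Y_1)$, in particular that the boundary continuity of $F$ as an $X_j$-valued map is preserved as $Y_j$-valued continuity after composing with the bounded operator $f$ and the bounded analytic weight $\lambda$. In the finite-dimensional setting this is automatic, but in general one should note that multiplication by the scalar $\lambda$ uses an implicit application of the maximum modulus principle for $Y_0+Y_1$-valued analytic functions (the operator-valued three lines lemma), which is how one rigorously controls $\|G\|_{\mathcal{F}}$ in terms of its boundary values.
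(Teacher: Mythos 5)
Your proof is correct and reproduces the standard three-lines argument; note, however, that the paper itself does not prove this statement but simply cites it as Theorem 4.1.2 of Bergh--L\"{o}fstr\"{o}m, where the proof given is precisely the one you wrote out (modulo minor packaging). The computations of $|\lambda(it)|=M_0^{-\theta}M_1^{\theta}$ and $|\lambda(1+it)|=M_0^{1-\theta}M_1^{\theta-1}$ and the resulting boundary estimates are right, the $\epsilon\to 0$ and $M_j\to 0$ limiting arguments are the correct way to handle the infimum and degenerate cases, and your closing remark about verifying admissibility of $G$ in $\mathcal{F}(Y_0,Y_1)$ identifies the one point a careful write-up must address.
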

 	
 	Now we turn our attention to the classical sequence $\ell_p$ spaces. Interpolation in this case becomes remarkably natural. We have the isometric identification $ \ell_p = (\ell_\infty,\, \ell_1 )_{\sfrac{1}{p}} $ for any $1\le p \le \infty$. Indeed, such an identification follows in a much more general setting. For a Banach space $X$ and $p \in (0,\infty]$, let us denote $L_p(X)$ the space of p-integrable $X$ valued functions on the unit interval. That is, measurable functions $f:[0,1] \rightarrow X$ such that
 	$$
 	\| f \|_{L_p(X)} := \left( \int_0^1 \| f(t) \|_X^p \dd \mu(t) \right)^{\frac{1}{p}},
 	$$
 	for an (implicitly) given measure $\mu$. With that we can state:

 	\begin{theorem}[\cite{BerghLofstrom76}, Thm. 5.6.1.]\label{Int_Lp's}
 	For any compatible couple $(X_0,\, X_1)$,	$p_0,\, p_1 \in [1,\infty]$ and $\theta\in (0,1)$ the following follows with equal norms:
 		$$
 		\Big( L_{p_0} (X_0),\, L_{p_1}(X_1) \Big)_\theta = L_p \Big( (X_0,\, X_1)_\theta \Big),
 		$$
 		where $\frac{1}{p} =\frac{1 - \theta}{p_0} + \frac{\theta}{p_1}$.
 	\end{theorem}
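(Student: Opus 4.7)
The plan is to prove the equality of norms by establishing both continuous embeddings
\beq
(L_{p_0}(X_0), L_{p_1}(X_1))_\theta \hookrightarrow L_p((X_0,X_1)_\theta) \hookrightarrow (L_{p_0}(X_0), L_{p_1}(X_1))_\theta,
\eeq
each with constant $1$. I would work throughout with the Calderón description of $(Y_0, Y_1)_\theta$ as the quotient, under evaluation at $z = \theta$, of the space $\mathcal{F}(Y_0, Y_1)$ of bounded $(Y_0 + Y_1)$-valued functions on the closed strip $\overline{S} = \{0 \le \Re z \le 1\}$ that are analytic on its interior with boundary traces in $Y_0$ and $Y_1$ at $\Re z = 0$ and $\Re z = 1$ respectively, under the norm $\|F\|_{\mathcal{F}} = \max(\sup_t \|F(it)\|_{Y_0}, \sup_t \|F(1+it)\|_{Y_1})$. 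The auxiliary exponent $1/p(z) := (1-z)/p_0 + z/p_1$ satisfies $p(\theta) = p$ and is the bridge used by Hölder and the three lines lemma.

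For the $\hookleftarrow$ inclusion I would use an explicit extension. Reducing by density to a simple $f = \sum_k a_k \chi_{A_k} \in L_p((X_0,X_1)_\theta)$, I would write $f(\omega) = u(\omega) v(\omega)$ with scalar $u(\omega) := \|f(\omega)\|_{(X_0,X_1)_\theta}$ and unit-norm $v(\omega)$, and pick, for each of the finitely many values of $v(\omega)$, an admissible $\tilde v(\omega)(\,\cdot\,) \in \mathcal{F}(X_0, X_1)$ realizing its interpolation norm up to $1 + \varepsilon$. Then define
\beq
F(z)(\omega) := c^{\,z - \theta} \, u(\omega)^{\,p/p(z)} \, \tilde v(\omega)(z),
\eeq
with $c > 0$ a balancing constant. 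One checks $F(\theta) = f$ directly, and on the boundary $\|F(it)\|_{L_{p_0}(X_0)} \le c^{-\theta}(1 + \varepsilon) \|f\|_{L_p((X_0, X_1)_\theta)}^{p/p_0}$ and symmetrically at $\Re z = 1$. Optimizing over $c$ to equalize the two boundary norms and invoking $(1-\theta)/p_0 + \theta/p_1 = 1/p$ delivers the norm-one embedding in the limit $\varepsilon \to 0$.

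For the $\hookrightarrow$ inclusion I would proceed by duality. Applying the just-proven reverse inclusion to the compatible couple $(X_0^*, X_1^*)$ with exponents $(p_0', p_1')$ gives $L_{p'}((X_0^*, X_1^*)_\theta) \hookrightarrow (L_{p_0'}(X_0^*), L_{p_1'}(X_1^*))_\theta$ isometrically. Taking Banach duals, the standard identifications $((X_0, X_1)_\theta)^* = (X_0^*, X_1^*)_\theta$ (automatic in the finite-dimensional regime relevant to the paper) and $(L_r(Y))^* = L_{r'}(Y^*)$ deliver the forward inclusion. A direct proof is also available: given an extension $F \in \mathcal{F}(L_{p_0}(X_0), L_{p_1}(X_1))$ with $F(\theta) = f$ and any $g \in L_{p'}((X_0, X_1)_\theta^*)$ of unit norm, construct a dual analytic extension $G$ of $g$ as above and apply Hadamard's three lines lemma to the scalar
\beq
\Psi(z) := \int \langle F(z)(\omega), G(z)(\omega) \rangle \, \dd\mu(\omega),
\eeq
whose boundary values are controlled by Hölder with exponent pairs $(p_0, p_0')$ and $(p_1, p_1')$, yielding $|\Psi(\theta)| = |\langle f, g\rangle| \le \|F\|_{\mathcal{F}}\,\|G\|_{\mathcal{F}}$.

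The main technical obstacle will be the measurable selection of the analytic extensions $\tilde v(\omega)$: as a function of $(z, \omega)$ one needs an analytic-in-$z$, measurable-in-$\omega$ choice of a nearly norm-attaining admissible function in $\mathcal{F}(X_0, X_1)$ above each $v(\omega)$. For simple functions (and a fortiori in the finite-dimensional setting dominant in the paper) this is straightforward, but extending to arbitrary $f$ calls for either a measurable selection theorem or a careful approximation in the analytic-function space. The endpoint cases $p_0 = \infty$ or $p_1 = \infty$ additionally require reinterpreting the scalar modulator $u(\omega)^{\,p/p(z)}$ on the corresponding boundary line (where $1/p(z)$ becomes purely imaginary) and a preliminary truncation to finite-measure supports. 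Compared to these measure-theoretic subtleties, the three-lines and Hölder computations themselves are essentially mechanical.
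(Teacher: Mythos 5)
The paper cites this theorem from Bergh--L\"ofstr\"om without reproducing a proof, so there is no internal argument to compare against; you are supplying the proof that the paper delegates to the reference. Your sketch is the standard Calder\'on argument and is essentially correct: the factorization $f = u\,v$ with $u(\omega)=\|f(\omega)\|_{(X_0,X_1)_\theta}$, the auxiliary exponent $1/p(z) = (1-z)/p_0 + z/p_1$, the scalar modulator $u^{p/p(z)}$, the balancing constant $c^{z-\theta}$ optimized so the two boundary norms coincide (giving exactly $\|f\|_{L_p}$ after the $\varepsilon\to 0$ limit, via $(1-\theta)/p_0+\theta/p_1=1/p$), and the density reduction to simple functions all match the argument in the reference and hold up under scrutiny. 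The converse by duality or by the direct Hadamard three-lines / H\"older pairing is also the standard route, and you correctly isolate the genuine technical friction points --- measurable selection of near-optimal extensions $\tilde v(\omega)(\cdot)$, the $(L_r(Y))^* = L_{r'}(Y^*)$ identification, and the behaviour of $u^{p/p(z)}$ when an endpoint $p_i=\infty$ turns $1/p(z)$ purely imaginary on a boundary line. One caveat worth recording: as quoted in the paper, the theorem admits $p_0 = p_1 = \infty$, but the actual statement in Bergh--L\"ofstr\"om (Theorem 5.1.2, of which the vector-valued $L_p$ statement is the Lebesgue-space instance) requires $\min(p_0,p_1)<\infty$ for a general compatible couple; otherwise density of the intersection, which the complex method relies on, can fail. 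In the finite-dimensional setting the paper restricts to --- and which you correctly note trivializes the measurability and duality subtleties --- this endpoint case is harmless, and the paper never invokes the double-$\infty$ case anyway, so nothing downstream is affected.
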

 
 	Notice that $\ell_p(X)$ spaces can be regarded as particular instances of $L_p(X)$ where the natural numbers are identified with a subset of the interval $[0,1]$ and $\mu$ is fixed as the discrete measure with unit weights on that subset. This allows us to translate the previous statement also to this case:
 	\begin{equation}\label{Int_ellp's}
 	 		\Big( \ell_{p_0} (X_0),\, \ell_{p_1}(X_1) \Big)_\theta = \ell_p \Big( (X_0,\, X_1)_\theta \Big),
 	\end{equation}
 		where $\frac{1}{p} =\frac{1 - \theta}{p_0} + \frac{\theta}{p_1}$.
 
 Pleasantly, an analogue result for Schatten classes is also true.
 
  	\begin{theorem}[\cite{Pisier98}, Cor. 1.4.]\label{Int_Sp's}
 	For a	$p_0,\, p_1 \in [1,\infty]$ and $\theta\in (0,1)$ the following follows with equal norms:
 	$$
 	( \Ss_{p_0} ,\, \Ss_{p_1} )_\theta = \Ss_p,
 	$$
 	where $\frac{1}{p} =\frac{1 - \theta}{p_0} + \frac{\theta}{p_1}$. When it applies, $\Ss_\infty$ must be understood as the Banach  space (with the operator norm) of compact operators in a separable Hilbert space.
 \end{theorem}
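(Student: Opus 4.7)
The plan is to run Calderón's complex interpolation construction directly on the singular value decomposition, which reduces the non-commutative statement to the classical interpolation of $\ell_p$ spaces. By density, it suffices to prove the isometric equality for finite-rank operators, which avoids any subtlety about what $\mathfrak{S}_\infty$ means and legitimizes the polar decomposition $T = U|T|$ with $|T|$ diagonalizable as $|T| = \sum_{i=1}^N s_i(T)\,|e_i\rangle\langle e_i|$.

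For the inequality $\|T\|_{(\mathfrak{S}_{p_0},\mathfrak{S}_{p_1})_\theta}\le \|T\|_{\mathfrak{S}_p}$, I would fix $T$ with $\|T\|_{\mathfrak{S}_p}=1$, set $\alpha(z):=p(1-z)/p_0+pz/p_1$ so that $\alpha(\theta)=1$, and define the analytic operator-valued function on the strip $S=\{0\le\Re z\le 1\}$ by
\[
F(z) := U\,|T|^{\alpha(z)} \;=\; \sum_{i=1}^N s_i(T)^{\alpha(z)}\,U|e_i\rangle\langle e_i|.
\]
Since $T$ has finite rank, $F$ is entire and bounded on $S$ in every Schatten norm. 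At the boundary $\Re z=0$ the singular values of $F(it)$ are $s_i(T)^{p/p_0}$ (the unitary factor plays no role), so $\|F(it)\|_{\mathfrak{S}_{p_0}}=\|T\|_{\mathfrak{S}_p}^{p/p_0}=1$; similarly $\|F(1+it)\|_{\mathfrak{S}_{p_1}}=1$. Since $F(\theta)=T$, by the definition of the complex interpolation norm one obtains $\|T\|_{(\mathfrak{S}_{p_0},\mathfrak{S}_{p_1})_\theta}\le 1$, proving one direction.

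For the reverse inequality I would use trace duality together with the duality theorem for complex interpolation (Bergh--Löfström, Thm. 4.5.1), which in the finite-dimensional setting yields
\[
\bigl((\mathfrak{S}_{p_0},\mathfrak{S}_{p_1})_\theta\bigr)^* \;=\; (\mathfrak{S}_{p_0'},\mathfrak{S}_{p_1'})_\theta,
\]
with $1/p_i+1/p_i'=1$. Since $\mathfrak{S}_p^*=\mathfrak{S}_{p'}$ isometrically via trace pairing, applying the upper-bound direction to the dual couple gives $\|S\|_{(\mathfrak{S}_{p_0'},\mathfrak{S}_{p_1'})_\theta}\le\|S\|_{\mathfrak{S}_{p'}}$ for every $S$, and taking the supremum against $T$ under trace duality produces $\|T\|_{\mathfrak{S}_p}\le\|T\|_{(\mathfrak{S}_{p_0},\mathfrak{S}_{p_1})_\theta}$. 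Passing from finite-rank to general compact operators is done by approximation in the Schatten norms.

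The only real obstacle I anticipate is the verification that $F$ satisfies the regularity required by Calderón's definition (continuous on the closed strip, analytic in the interior, bounded in the relevant interpolation spaces) when $p_0$ or $p_1$ equals $\infty$: there one must be careful that $\mathfrak{S}_\infty$ denotes compact operators and that the truncation to finite rank is done before extending. Everything else, including Theorem~\ref{Int_IntProp} to bound the interpolation norm of the operator $F(\theta)$, is a direct consequence of the three-lines lemma applied to $z\mapsto\log\|F(z)\|$.
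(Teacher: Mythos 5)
The paper does not prove this statement; it is imported with a citation to Pisier's article, so there is no in-paper argument to compare against. What you have written is the classical Calderón computation for interpolating Schatten scales, and it is essentially correct: the analytic family $F(z)=U\,|T|^{\alpha(z)}$ built from the polar decomposition of a finite-rank $T$ gives the containment $\mathfrak{S}_p\subseteq(\mathfrak{S}_{p_0},\mathfrak{S}_{p_1})_\theta$ with the right norm bound, and trace duality applied to the dual couple gives the reverse containment.

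Three points should be tightened, though none is a fatal gap. First, the duality identity $\bigl((\mathfrak{S}_{p_0},\mathfrak{S}_{p_1})_\theta\bigr)^*=(\mathfrak{S}_{p_0'},\mathfrak{S}_{p_1'})_\theta$ as you state it silently identifies the lower Calderón method with the upper one on the dual couple; Bergh--L\"ofstr\"om 4.5.1 delivers $(\cdot)_\theta^*=(\cdot)^\theta$, and the two interpolation spaces only coincide under an extra hypothesis such as reflexivity of one endpoint. Your reduction to finite rank (hence finite dimension) is precisely what makes that distinction evaporate, but the proof should say so, since in the infinite-rank situation with $p_0=\infty$ or $p_1=1$ the endpoint spaces are not reflexive. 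Second, the closing appeal to Theorem~\ref{Int_IntProp} is misplaced: that statement is the exactness of the interpolation functor for \emph{operators} between two couples, whereas the inequality you actually use, $\|F(\theta)\|_{(\mathfrak{S}_{p_0},\mathfrak{S}_{p_1})_\theta}\le\max_{j=0,1}\sup_t\|F(j+it)\|_{\mathfrak{S}_{p_j}}$, is just the definition of the Calderón norm applied to the admissible function $F$. Third, the density step deserves one more sentence: the upper estimate extends to all of $\mathfrak{S}_p$ because the interpolation space is complete and embeds continuously into $\mathfrak{S}_{p_0}+\mathfrak{S}_{p_1}$, so Cauchy-in-$\mathfrak{S}_p$ implies Cauchy-in-interpolation-norm and the limits agree; the lower estimate already follows for arbitrary compact $T$ because $\|T\|_{\mathfrak{S}_p}=\sup\{|\mathrm{Tr}(S^*T)|:S\text{ finite rank},\ \|S\|_{\mathfrak{S}_{p'}}\le1\}$ and each pairing is controlled via the dual couple. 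With these clarifications the argument is sound.
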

 
 These are all the basic results we need regarding complex interpolation. To finish this section, we now relate some of the norms introduced in Section \ref{Sec.2.4.1} with the space $(X^* \otimes_\ve Y, X^* \otimes_\pi Y)_{\frac{1}{2}}$.
 	
 	\begin{proposition}\label{Prop_RelNorms}
 		Given finite dimensional Banach spaces $X$, $Y$, for any $f\in X \otimes Y)$,
 		$$
 		\big\| f \big\|_{ X \otimes_{\mathfrak{S}_2^{w-cb}} Y } \le \big\| f \big\|_{ X \otimes_{\mathfrak{S}_2^{w}} Y} \le \big \| f \big\|_{(X^* \otimes_\ve Y, X^* \otimes_\pi Y)_{\frac{1}{2}}}.
 		$$
 	\end{proposition}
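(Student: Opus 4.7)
The first inequality is an immediate consequence of the norm comparison $\|\cdot\|_{\mathcal{C}\BB(X,Y)} \ge \|\cdot\|_{\BB(X,Y)}$ already observed in Remark \ref{Rmk_sigma^w}, so the content lies in the second inequality. My plan is to unpack Definition \ref{Def_WeakSchatten} and then apply a single clean interpolation argument. Viewing $f \in X \otimes Y$ as a linear map $\tilde f : X^* \to Y$ and using that, for any operator $T : \ell_2 \to \ell_2$, the $\ell_2$-sum of the singular values of $T$ equals the Hilbert-Schmidt norm $\|T\|_{\Ss_2}$, the weak Schatten norm rewrites as
\begin{equation*}
\|f\|_{X \otimes_{\mathfrak{S}_2^w} Y} \;=\; \sup_{\|g\|,\,\|h\|\le 1} \|g \circ \tilde f \circ h\|_{\Ss_2},
\end{equation*}
the supremum ranging over contractions $h : \ell_2 \to X^*$ and $g : Y \to \ell_2$. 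The task thus reduces to bounding $\|g \circ \tilde f \circ h\|_{\Ss_2}$ uniformly in $g,h$ by the interpolation norm of $f$.

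Fix such $g$ and $h$ and consider the tensor map $\Phi_{g,h} := h^* \otimes g$, sending elementary tensors via $x \otimes y \mapsto h^*(x) \otimes g(y)$; a short direct check shows that, under the tensor-operator identification on $\ell_2 \otimes \ell_2$, the element $\Phi_{g,h}(f)$ is precisely the tensor representing $g \circ \tilde f \circ h$. The metric mapping property \eqref{MetricMapProp}, applied to the contractions $h^* : X \to \ell_2$ and $g : Y \to \ell_2$, then gives that $\Phi_{g,h}$ is a contraction both as $X \otimes_\ve Y \to \ell_2 \otimes_\ve \ell_2$ and as $X \otimes_\pi Y \to \ell_2 \otimes_\pi \ell_2$. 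Using the finite dimensional identifications $\ell_2 \otimes_\ve \ell_2 \simeq \Ss_\infty$ and $\ell_2 \otimes_\pi \ell_2 \simeq \Ss_1$, together with $(\Ss_\infty, \Ss_1)_{1/2} = \Ss_2$ from Theorem \ref{Int_Sp's}, a single application of Theorem \ref{Int_IntProp} yields
\begin{equation*}
\|\Phi_{g,h} : (X \otimes_\ve Y,\, X \otimes_\pi Y)_{1/2} \longrightarrow \Ss_2\| \;\le\; 1.
\end{equation*}

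Evaluating this bound at $f$ and taking the supremum over $g$ and $h$ delivers the desired inequality. There is essentially no analytical obstacle: the argument is a textbook interpolation argument and the only care required is in bookkeeping the duality between tensors and operators correctly. One point I would flag in a careful writeup is a mild notational mismatch in the statement, namely that the right-hand side interpolates $X^* \otimes_\ve Y$ and $X^* \otimes_\pi Y$ while the natural ambient space of $f$ is $X \otimes Y$; the argument above produces the proposition with $X$ rather than $X^*$ on the right-hand side, which I believe is the intended reading.
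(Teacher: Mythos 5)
Your proof is correct and follows essentially the same route as the paper: rewrite the weak-Schatten norm as a supremum of Hilbert--Schmidt norms, observe that $h^* \otimes g$ is a contraction at both endpoints $\ve$ and $\pi$ by the metric mapping property, and conclude via complex interpolation together with $(\Ss_\infty, \s)_{1/2} = \Ss_2$. The $X$-versus-$X^*$ issue you flag is indeed a small notational slip in the statement (the paper's own proof effectively works with $f \in X^* \otimes Y$), and your reading is the consistent one given the convention of Definition \ref{Def_WeakSchatten}.
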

 	\begin{proof}
 		Recalling that we have already established the first inequality in Remark \ref{Rmk_sigma^w}. Therefore we focus on the second inequality.
 		 
 		According to the definition of $\mathfrak{S}_2^{w} (X,Y)$, Definition \ref{Def_WeakSchatten}, we can directly write:
 		$$
 		\big\| f \big\|_{ \mathfrak{S}_2^{w} (X,Y)} = \sup_{ \begin{subarray}{c}
 			g \in B_{\BB(Y,\ell_2)}  \\
 			h \in B_{\BB(\ell_2,X)}
 			\end{subarray} } 
 		\| g\circ f \circ h \|_{\Ss_2} 
 		= 
 		\sup_{ \begin{subarray}{c}
 			g \in B_{\BB(Y,\ell_2)}  \\
 			h \in B_{\BB(\ell_2,X)}
 			\end{subarray} } 
 		\| g\circ f \circ h \|_{(\Ss_\infty, \s)_{1/2}},
 		$$
 		 		where we have used Theorem \ref{Int_Sp's} to state the last equality.
 		 		
 		The map $g\circ f \circ h : \ell_2 \rightarrow \ell_2$ can be interpreted, as a tensor, as the image of the mapping $  h^*\otimes g:  X^* \otimes Y \rightarrow \ell_2 \otimes \ell_2$ acting on $f$. With this, the previous expression can be rewritten as:
 		\begin{align*}
 		\big\| f \big\|_{ \mathfrak{S}_2^{w} (X,Y)} &= \sup_{ \begin{subarray}{c}
 			g \in B_{\BB(Y,\ell_2)}  \\
 			h \in B_{\BB(\ell_2,X)}
 			\end{subarray} } 
 		\| (h^*\otimes g)(f) \|_{(\Ss_\infty, \s)_{\frac{1}{2}}} 
 		\\
 		&\le 
 		\| f \|_{(X^* \otimes_\ve Y, X^* \otimes_\pi Y)_{\frac{1}{2}}}  \ 
 		\sup_{ \begin{subarray}{c}
 			g \in B_{\BB(Y,\ell_2)}  \\
 			h \in B_{\BB(\ell_2, X)}
 			\end{subarray} }
 		\| h^*\otimes g : (X^* \otimes_\ve Y, X^* \otimes_\pi Y)_{\frac{1}{2}}\rightarrow (\Ss_\infty, \s)_{\frac{1}{2}} \|  .
 		\end{align*}

 		Now, it only remains to show that for any contractions $h^* : X^* \rightarrow \ell_2$, $g: Y  \rightarrow \ell_2$
 		$$
 		\| h^*\otimes g : (X^* \otimes_\ve Y, X^* \otimes_\pi Y)_{\frac{1}{2}}\rightarrow (\Ss_\infty, \s)_{\frac{1}{2}} \| \le 1.
 		$$
 		This follows from the interpolation property, Theorem \ref{Int_IntProp}:
 		$$
 		\| h^*\otimes g: (X^* \otimes_\ve Y, X^* \otimes_\pi Y)_{\frac{1}{2}}\rightarrow (\Ss_\infty, \s)_{\frac{1}{2}} \| 
 		\le 
 		\| h^*\otimes g : X^* \otimes_\ve Y  \rightarrow  \Ss_\infty \|^{\frac{1}{2}}  \ \|  h^*\otimes g : X^* \otimes_\pi Y \rightarrow   \s  \|^{\frac{1}{2}} ,
 		$$
 		together with the understanding of $\Ss_\infty$ and $\s$ as the tensor products $\ell_2 \otimes_\ve \ell_2$ and $\ell_2 \otimes_\pi \ell_2$, respectively. This allows us to bound
 		$$
 		\| h^*\otimes g : X^* \otimes_\ve Y  \rightarrow  \Ss_\infty \|  = \| h^* : X^*  \rightarrow  \ell_2 \| \ \| g :  Y  \rightarrow  \ell_2\| \le 1,
 		$$
 		thanks to the metric mapping property displayed by the injective tensor norm, $\ve$ \eqref{MetricMapProp}. Analogously
 		$$
 		\| h^*\otimes g : X^* \otimes_\pi Y  \rightarrow  \Ss_1 \|  = \| h^* : X^*  \rightarrow  \ell_2 \| \ \| g :  Y  \rightarrow  \ell_2\| \le 1.
 		$$
 		Hence, the claim in the statement follows.
 	\end{proof}
 	
 	Being more specific, when $X^* = Y = \s^{n,m}$, Proposition \ref{Prop_RelNorms} reads
 	\begin{equation} \label{Eq_RelNorms}
 	\| f \|_{\s^{n,m} \otimes_{\mathfrak{S}_2^{w-cb}} \s^{n,m}}  \le \| f \|_{\s^{n,m} \otimes_{\mathfrak{S}_2^w} \s^{n,m}} \le \| f \|_{ (\s^{n,m} \otimes_\ve \s^{n,m},\, \s^{n,m} \otimes_\pi \s^{n,m} )_{\frac{1}{2}} }. 
 	\end{equation}

 	\subsubsection{Type/cotype of a Banach space}\label{Sec2.4.3}
 	
 	The key properties of a Banach space we study are its \emph{type} and \emph{cotype}. These are probabilistic notions in the local theory of Banach spaces that build on \emph{Rademacher} random variables\footnote{ There also exists in the literature a \emph{gaussian} notion of type/cotype. See, e.g., \cite{Tomczak1989banach}. Both notions are in fact intimately related, but here we only consider the \emph{Rademacher} version of the story.}. We call a random variable $\ve$  Rademacher if it  takes values $-1$ and $1$ with probability $1/2$ each. We refer by $\lbrace \ve_i\rbrace_{i=1}^n$ to a family of $n$ i.i.d. such random variables. Then, $\mathbb{E}_\ve \, \phi(\ve)$ denotes the expected value of a function $\phi$ 	over any combination of signs $\lbrace \ve_i\rbrace_{i=1}^n$ with uniform weight $1/2^n$.

 	\begin{definition}\label{typedef}Let $X$ be a Banach space and $1 \le p \le 2$. We say  that $X$ is of (Rademacher) type $p$ if there exists a positive constant $\rmT$ such that for every natural number $n$ and every sequence $\lbrace x_i \rbrace_{i=1}^n \subset X$ we have
 	\beq \nn \hspace{-3mm}
 	\left( \mathbb{E}_\ve   \Big[ \big\|  \sum_{i=1}^n \varepsilon_i  x_i \big\|_X^2 \Big]   \right)^{1/2}
 	\hspace{-1mm} \le	 \rmT \left( \sum_{i=1}^n \|x_i\|_X^p \right)^{1/p},
 	\eeq
 	Moreover, we define the Rademacher type $p$ constant $\mathrm{T}_p(X)$ as the infimum of the constants $\rmT$ fulfilling the previous inequality.\end{definition}

 The notion of type of a normed space finds a dual notion in the one of cotype:
 
 \emph{
 For $2 \le q < \infty$, the Rademacher cotype $q$ constant  of $X$, $\mathrm{C}_q(X)$, is the infimum over the constants $\mathrm C$ (in case they exist) such that the following inequality holds for every natural number $n$ and every sequence $\lbrace x_i \rbrace_{i=1}^n \subset X$,
 \begin{align*}\mathrm{\mathrm C}^{-1}\Big( \sum_{i=1}^n \|x_i\|_X^q\Big)^{1/q} \le 
 \left( \mathbb{E}_\ve   \Big[ \big\|  \sum_{i=1}^n \varepsilon_i  x_i \big\|_X^2 \Big]   \right)^{1/2} .
 \end{align*}
 In parallel with the previous definition, we also say that $X$ is of cotype $q$ if $\mathrm{C}_q(X) < \infty$. 
}

	\begin{comment}\label{Kahane}
		The above definitions can be found elsewhere  in an alternative form in  which the term $\Big( \mathbb{E}_\ve    \big\|  \sum_{i=1}^n \varepsilon_i  x_i \big\|_X^2   \Big)^{1/2} $ above is replaced  by  $\mathbb{E}_\ve   \|  \sum_{i=1}^n \varepsilon_i  x_i \big\|_X   $ or, in other cases, by $ \Big(  \mathbb{E}_\ve   \|  \sum_{i=1}^n \varepsilon_i  x_i \big\|_X^p \Big)^{1/p}  $. Due to Kahane inequality \cite{KahaneBook} (see also \cite[Section 4]{Tomczak1989banach} for the specific application of Kahane inequality to the present context) both expressions are equivalent up to a universal constant and there is no essential difference between definitions.
\end{comment}

If the number of elements $x_i$ in the definitions above is restricted to be at most some natural number $m$, we obtain the related notion of \emph{type/cotype constants of $X$ with $m$ vectors}, denoted here as $\mathrm{T}_p^{(m)} (X)$ and $\mathrm{C}_q^{(m)} (X)$. This is the precise notion we will  use later on. Although it will be frequently enough to work with the  notion of type constants, sometimes we will need to make this distinction.
 	
Coming back to the better studied context of type and cotype (without any restriction on the number of elements), it is well known that $X$ being of type $p$ implies cotype $q$ for the dual, $X^*$, where $q$ is the conjugate exponent such that $1/p + 1/q = 1$. This can be summarized in the inequality -- see, e.g., \cite{Maurey03}:
\begin{equation} \label{typecotype_duality2}
\mathrm{C}_q (X^*) \le \mathrm{T}_p (X), \qquad \text{ for } 1 < p \le 2,\ 2 \le q < \infty\ : \  \frac{1}{p} + \frac{1}{q} = 1.
\end{equation}

The reverse inequality fails in general -- and, in fact, the pair of spaces considered in this work, $(M_n$, $\Ss_1^n)$, is an instance of that phenomenon. However, it turns out that the reverse inequality can be made true \emph{up to logarithmic} factors \cite{Pisier82,Maurey03}:
\beq \label{typecotype_duality}
\mathrm{T}_p (X)  \lesssim \log(\dim(X)) \, \mathrm{C}_q (X^*)  , \qquad \text{ for } 1 < p \le 2,\ 2 \le q < \infty\ : \  \frac{1}{p} + \frac{1}{q} = 1.
\eeq


Our interest now turns into the interaction between type and interpolation. In  fact,  type constants behave well w.r.t. interpolation methods,  {a} fact that will be extremely useful in next section. We state the following general known result:
\begin{proposition}\label{Prop_IntType}
	Let $X_0, \, X_1$ be an interpolation couple, where $X_i$ has type $p_i$ for some $1\le p_i \le 2 $, $i=0,\,1$. Let $0 < \theta <1$ and $1 < p < 2 $ such that $\frac{1}{p} = \frac{1-\theta}{p_0} + \frac{\theta}{p_1}$. Then,
	$$
	\mathrm{T}_p \big( (X_0,X_1)_{\theta} \big)  \le \big( \mathrm{T}_{p_0} (X_0) \big)^{1-\theta}   \big( \mathrm{T}_{p_1} (X_1) \big)^{\theta}.
	$$   
\end{proposition}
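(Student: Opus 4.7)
The plan is to express the type-$p$ constant (with a fixed number of vectors) as the norm of a single linear operator and then apply the interpolation machinery from Section \ref{Sec.2.4.2}.

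For a fixed $n \in \mathbb{N}$, let $\Omega = \{-1,+1\}^n$ equipped with the uniform probability measure, and for any of the spaces $X_0$, $X_1$ or $(X_0,X_1)_\theta$ (which, under the standing assumption, share the same underlying vector space $X$), consider the Rademacher projection
$$R_n : (x_i)_{i=1}^n \longmapsto \sum_{i=1}^n \varepsilon_i x_i.$$
By Definition \ref{typedef}, the type-$p$ constant of a Banach space $X$ with $n$ vectors equals the operator norm $\|R_n : \ell_p^n(X) \rightarrow L_2(\Omega; X)\|$, and $\mathrm{T}_p(X)$ is the supremum of these norms over $n$. The key point is that $R_n$ is literally the \emph{same} linear map regardless of which norm we put on the domain or codomain, so it is naturally a linear operator between the interpolation couples $\bigl(\ell_{p_0}^n(X_0), \ell_{p_1}^n(X_1)\bigr)$ and $\bigl(L_2(\Omega; X_0), L_2(\Omega; X_1)\bigr)$.

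Next, I would use Theorem \ref{Int_Lp's} together with its discrete version \eqref{Int_ellp's} to identify the relevant interpolation spaces:
$$\bigl(\ell_{p_0}^n(X_0),\, \ell_{p_1}^n(X_1)\bigr)_\theta = \ell_p^n\bigl((X_0,X_1)_\theta\bigr),$$
$$\bigl(L_2(\Omega; X_0),\, L_2(\Omega; X_1)\bigr)_\theta = L_2\bigl(\Omega; (X_0,X_1)_\theta\bigr),$$
the first identification using the hypothesis $\tfrac{1}{p} = \tfrac{1-\theta}{p_0} + \tfrac{\theta}{p_1}$, the second being the case $p_0 = p_1 = 2$.

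Now I would invoke the interpolation property (Theorem \ref{Int_IntProp}) applied to $R_n$: since $\|R_n : \ell_{p_i}^n(X_i) \rightarrow L_2(\Omega; X_i)\| \le \mathrm{T}_{p_i}(X_i)$ for $i=0,1$, we obtain
$$\bigl\|R_n : \ell_p^n((X_0,X_1)_\theta) \rightarrow L_2(\Omega;(X_0,X_1)_\theta)\bigr\| \le \mathrm{T}_{p_0}(X_0)^{1-\theta}\, \mathrm{T}_{p_1}(X_1)^{\theta}.$$
Since $n$ was arbitrary, taking the supremum yields the desired bound on $\mathrm{T}_p\bigl((X_0,X_1)_\theta\bigr)$.

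The argument is essentially bookkeeping once the right linear operator is identified; the only subtle point is the compatibility of the couples, which is unproblematic because $X_0$ and $X_1$ are algebraically the same space (so the vector-valued $\ell_p$ and $L_2$ spaces over them are also algebraically identified), and the Rademacher projection $R_n$ descends unambiguously to a map between the interpolation spaces. No estimates beyond those already provided by Theorems \ref{Int_IntProp} and \ref{Int_Lp's} are needed.
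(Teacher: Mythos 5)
Your proof is correct and follows essentially the same route as the paper: both identify the type-$p$ constant as the operator norm of the Rademacher map $\mathrm{Rad}: \ell_p(X) \to L_2(X)$, then apply Theorem \ref{Int_Lp's} (resp. \eqref{Int_ellp's}) to identify the interpolated domain and codomain and Theorem \ref{Int_IntProp} to bound the resulting operator norm. The only cosmetic difference is that you work with the finite-section operators $R_n$ and pass to a supremum over $n$, whereas the paper works directly with the infinite-sequence operator; this changes nothing substantive.
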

The proof  follows easily from the interpolation properties of vector valued $\ell_p$ and $L_p$ spaces. We decided to include a simple proof next without any claim of originality.
\begin{proof}
	 An alternative characterization of the type-p constant of a Banach space $X$ is given by the norm of the mapping:
	 	$$
	 \begin{array}{cccc}
	 \mathrm{Rad}:&  \ell_p  (  X ) & \longrightarrow & L_2 (  X ) \\
	 &(x_{i})_{i} &\mapsto &  \sum_{i} \ve_{i} \, x_{i}
	 \end{array},
	 $$
	 where $\lbrace \ve_i \rbrace_i$ are i.i.d. Rademacher random variables and\footnote{ Formally, to establish this identification we  consider a realization of the random variables $\ve_i$ as real valued functions on the interval $[0,1]$. A standard choice is setting $\ve_i (t) = \mathrm{sign} \left(\sin(2^i \pi t) \right)$. In that way, for a function $\phi$ of the random variable $\ve$, $\mathbb{E}_\ve \phi(\ve) = \int_0^1 \phi(\ve(t))\dd t$, which makes the connection with $L_p$ spaces.}
	 $$
	 \|  \sum_{i} \ve_{i} \, x_{i} \|_{L_2(X)} :=  \left( \mathbb{E}_\ve \big\| \sum_i \ve_i x_i  \big \|^2_X \right)^{\frac{1}{2}}.
	 $$
	  Then, we write
	  \begin{align*}
	  \mathrm{T}_p \left( (X_0,X_1)_{\theta} \right)  &= \left\| \mathrm{Rad}:  \ell_p  \left( (X_0,X_1)_{\theta} \right)  \longrightarrow  L_2  \left( (X_0,X_1)_{\theta} \right)  \right\|.
	  \end{align*}
	  
	  Taking into account the equivalences (Theorem \ref{Int_Lp's}):
	  $$
	  \ell_p  \left( (X_0,X_1)_{\theta} \right) = \left( \ell_{p_0}  (X_0) ,   \ell_{p_1}( X_1 )  \right)_{\theta} , 
	  \qquad 
	   L_2  \left( (X_0,X_1)_{\theta} \right) = \left( L_{2}  (X_0) ,   L_{2}( X_1 )  \right)_{\theta} ,
	   $$
	  we can bound:
	    \begin{align*}
	   \left\| \mathrm{Rad}:  \ell_p  \left( (X_0,X_1)_{\theta} \right)  \longrightarrow  L_2  \left( (X_0,X_1)_{\theta} \right)  \right\|&
	  \\
	  & \hspace{-2.4cm}  \le 
	  	   \left\| \mathrm{Rad}:  \ell_{p_0}  (X_0)  \longrightarrow  L_{2}  (X_0)   \right\|^{1-\theta}
	  	   \
	  	   	   \left\| \mathrm{Rad}: \ell_{p_1}  (X_1)  \longrightarrow  L_{2}  (X_1)    \right\|^\theta
	  \\
	  & \hspace{-2.4cm}  =
	  \left( \mathrm{T}_{p_0} (X_0)  \right)^{1-\theta}	 \left( \mathrm{T}_{p_1} (X_1)  \right)^{\theta} .
	  \end{align*}
\end{proof}

 	\subsubsection{Vector valued maps on the Boolean hypercube}\label{Sec2.4.4}
 	
 	The main idea in this work is based in the study of strategies to break a particular family of PV protocols -- referred to as $G_{Rad}$ -- as assignments on the boolean hypercube $\calQ_m = \lbrace -1, 1 \rbrace^m$. We will associate to any cheating strategy a vector valued mapping  $\Phi: \calQ_m \rightarrow X$, for some Banach space $X$.  Regular enough $\Phi$'s will lead to good lower bounds on resources required by the cheaters, contributing   to the understanding of Question \ref{question1}. To quantify the regularity of  such maps we introduce the following parameter (depending also on the choice of the space $X$):
 	\begin{definition}\label{Def_sigma}
 		To any Banach-space valued map  $\Phi: \calQ_m \rightarrow X$ we associate the parameter:
 		$$
 		\sigma_\Phi : = \log(m) \  \mathbb{E}_{\ve\in \calQ_m}\,  \left( \sum_{i=1 }^m \| \partial_i  \Phi_S(\ve)\|^2_{X} \right)^{1/2},
 		$$
 		where $\partial_i \Phi(\ve): = \frac{\Phi(\ve_1,\ldots, \ve_i ,\ldots ,\ve_m)-\Phi(\ve_1,\ldots, - \ve_i ,\ldots ,\ve_m)}{2}$ is the discrete derivative on the boolean hypercube in the  i-th direction.
 	\end{definition}
 	
 	Intuitively, $\sigma$ is an average on both the point $\ve$  and the direction $i$ (unnormalized in this last case) of the magnitude of the derivative of the map $\Phi$. The prefactor $\log(m)$ is of minor importance for our purposes and we added it to the definition of $\sigma_\Phi$ to obtain more compact expressions later on.
 	
 	 \begin{example}\label{Example_sigma1}
 	 In order to gain some familiarity, let us compute the parameter $\sigma$ of a linear  map 
 	 $$
 	  \begin{array}{rccc}
 	 \Phi  : &  \calQ_{m} & \longrightarrow &  X
 	 \\[1em]
 	 & \ve & \mapsto  & \Phi (\ve): = \frac{1}{m} \sum_j \ve_j x_j 
 	  	 \end{array} ,
 	 $$
 	  where $x_j \in B_X$ for $j= 1,\ldots m$
 	  
 	  First, for any point $\ve \in \calQ_m$, and a direction $i\in[m]$:
 	$$
 	\partial_i \Phi(\ve) 
 	=\frac{1}{2m} \ \left(  \sum_{j} \ve_j x_j - \ve_j (-1)^{\delta_{i,j}} x_j   \right)
 	= \frac{1}{m} \ve_i \,x_i.
 	$$
 	
 	Therefore,
 	$$
 	\sigma_\Phi = \frac{\log(m)}{m}  \left( \sum_i \| x_i \|_X^2 \right)^{\frac{1}{2}} \le   \frac{\log(m)}{m^{\frac{1}{2}}}.
 	$$
 	
 	This is the ideal case in which our results lead directly to powerful lower bounds on the resources required to break our PV protocols.
 \end{example}
Ultimately, the motivation for the definition of $\sigma_\Phi$ is the bound in Corollary \ref{Cor1} below. This is a consequence of  the following  Sobolev-type inequality due to Pisier for vector-valued functions on the hypercube:
 	\begin{lemma}[\cite{Pisier86}, Lemma 7.3] \label{lemmaPisier} In a Banach space $X$, let  $p\ge 1 $, $\Phi: \calQ_m \rightarrow X$ and $\ve ,\, \tl \ve$ be independent random vectors uniformly distributed  on $\calQ_m$. Then,
 		$$
 		\mathbb{E}_{\ve} \Big \| \Phi (\ve) - \mathbb{E}_{\ve} \Phi (\ve) \Big \|_X^p \le ( C \log m)^p \  \mathbb{E}_{\ve,\tl\ve} \Big \|   \sum_i \tl \ve_i \partial_i \Phi(\ve)   \Big{\|}^p_X ,
 		$$
 		where $\partial_i \Phi(\ve): = \frac{\Phi(\ve_1,\ldots, \ve_i ,\ldots ,\ve_n)-\Phi(\ve_1,\ldots, - \ve_i ,\ldots ,\ve_n)}{2}$.
 	\end{lemma}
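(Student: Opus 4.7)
The strategy is to combine a Jensen-type symmetrization with the smoothing action of the Bonami--Beckner noise semigroup on $\calQ_m$, and to extract the $\log m$ factor from a dyadic decomposition of the semigroup ``time''. First, writing $\mathbb{E}_\ve\Phi(\ve) = \mathbb{E}_{\tilde\ve}\Phi(\tilde\ve)$ and applying Jensen's inequality immediately reduces the LHS to $\mathbb{E}_{\ve,\tilde\ve}\|\Phi(\ve) - \Phi(\tilde\ve)\|_X^p$, which is the quantity I would actually work with. Next, I would introduce the noise semigroup $\{T_\rho\}_{\rho\in[0,1]}$ acting on Walsh functions by $T_\rho w_S = \rho^{|S|} w_S$, or equivalently pointwise by $T_\rho\Phi(\ve) = \mathbb{E}_\delta\Phi(\ve\odot\delta)$ with independent $\delta_i$'s of mean $\rho$. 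The key fact, which holds for \emph{every} Banach space $X$, is that $T_\rho$ is a contraction on $L_p(\calQ_m;X)$, since it is itself an average.

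I then exploit the semigroup identity $T_{\rho_1\rho_2} = T_{\rho_1}T_{\rho_2}$ to telescope. Choosing dyadic scales $\rho_\ell = 2^{-\ell}$ for $\ell = 0,1,\ldots,L = \lceil\log_2 m\rceil$, I write
\begin{equation*}
\Phi - \mathbb{E}\Phi \;=\; \sum_{\ell=1}^{L} \bigl(T_{\rho_{\ell-1}} - T_{\rho_\ell}\bigr)\Phi \;+\; \bigl(T_{\rho_L}\Phi - \mathbb{E}\Phi\bigr),
\end{equation*}
where each increment factors as $(T_{\rho_{\ell-1}} - T_{\rho_\ell})\Phi = T_{\rho_{\ell-1}}\bigl(\Phi - T_{1/2}\Phi\bigr)$. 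By Minkowski and the $L_p(X)$-contractivity of $T_{\rho_{\ell-1}}$, each of the $L$ increments is uniformly controlled in $L_p(X)$-norm by the single quantity $\|\Phi - T_{1/2}\Phi\|_{L_p(X)}$. The residual tail is handled separately, either by tuning $L$ so that the distribution of $\ve\odot\delta$ is sufficiently close to uniform or by a bootstrap absorbing it into the LHS. In either case, the problem reduces to bounding $\|\Phi - T_{1/2}\Phi\|_{L_p(X)}$ by the RHS of the claimed inequality, and summing the $L = O(\log m)$ scales afterwards.

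The last and most delicate step consists in showing
\begin{equation*}
\bigl\|\Phi - T_{1/2}\Phi\bigr\|_{L_p(\calQ_m;X)} \;\lesssim\; \Bigl(\mathbb{E}_{\ve,\tilde\ve}\bigl\|\textstyle\sum_{i=1}^m \tilde\ve_i\,\partial_i\Phi(\ve)\bigr\|_X^p\Bigr)^{1/p}.
\end{equation*}
Using the pointwise formula $\Phi(\ve) - T_{1/2}\Phi(\ve) = \mathbb{E}_\delta[\Phi(\ve) - \Phi(\ve\odot\delta)]$ and expanding the inner difference as a telescoping sum of discrete derivatives over the random subset $A = \{i : \delta_i = -1\}$ (of expected size $m/2$), one can combine Jensen with the Rademacher contraction principle to replace the $\{0,1\}$-valued indicators of $A$ by genuine, fresh Rademachers $\tilde\ve_i$ multiplying $\partial_i\Phi(\ve)$. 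This symmetrization is the main obstacle: it must be carried out without invoking any K-convexity or finite cotype hypothesis on $X$ and without paying any extra factor in $m$, so that when combined with the $L = O(\log m)$ outer sum the final constant is the universal $C\log m$ claimed in the lemma. The remaining manipulations are routine applications of Jensen, Minkowski and the triangle inequality.
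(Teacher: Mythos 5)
The crucial issue is that the one-step estimate $\|\Phi - T_{1/2}\Phi\|_{L_p(X)} \lesssim \bigl(\mathbb{E}_{\ve,\tilde\ve}\|\sum_i\tilde\ve_i\,\partial_i\Phi(\ve)\|^p\bigr)^{1/p}$, which you correctly single out as ``the main obstacle,'' is not a routine symmetrization at all --- it carries essentially the entire weight of the lemma, and the tools you name do not deliver it. When you expand $\Phi(\ve)-\Phi(\ve\odot\delta)$ as a telescope over the set $A=\{i:\delta_i=-1\}$, $|A|$ has mean $m/4$ under $T_{1/2}$-noise, and the only universal (Banach-space-free) move at your disposal is the triangle inequality, which costs a factor $|A|\sim m$. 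The Rademacher contraction principle cannot repair this: it compares a Rademacher sum to a Rademacher sum with \emph{smaller} scalar coefficients, not a deterministic $\{0,1\}$-weighted sum to a random one; already in $\ell_\infty^m$ the comparison $\mathbb{E}\|\sum_i\eta_i x_i\|\lesssim\mathbb{E}\|\sum_i\ve_i x_i\|$ for i.i.d. $\eta_i\in\{0,1\}$ fails by a factor $\sqrt m$. Converting the telescoped increments into a genuine Rademacher sum while preserving a universal constant, with no cotype or K-convexity hypothesis, is precisely the non-elementary part of Pisier's argument (it goes through an integral formula for $\mathrm{Id}-T_\rho$ in terms of the number operator $N$ and a martingale/decoupling identity, not through scale-by-scale contraction), and your sketch leaves it entirely open. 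Until that step is supplied with a proof --- and it is not even clear the one-step inequality holds with a dimension-free constant, since the naive bounds give $\Theta(m)$ or, with a square-function heuristic unavailable in general $X$, $\Theta(\sqrt m)$ --- the telescoping strategy merely relocates the whole difficulty.

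Two smaller remarks. First, the tail $T_{\rho_L}\Phi-\mathbb{E}\Phi$, although you dispatch it in one clause, actually does admit a clean argument: coupling the $\rho_L$-biased and the uniform flip vectors so that they disagree on a Binomial$(m,\rho_L/2)$ set, telescoping, and applying Minkowski together with $\|\partial_i\Phi\|_{L_p}\le\|D\Phi\|_{L_p}$ (which follows from $\partial_i\Phi=\mathbb{E}_{\tilde\ve}[\tilde\ve_i\,D\Phi]$ and Jensen) gives a bound $\lesssim_p\|D\Phi\|_{L_p}$ as soon as $\rho_L\lesssim 1/m$ --- so you should spell this out rather than gesture at a bootstrap. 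Second, your opening Jensen step reduces the target to $\mathbb{E}_{\ve,\tilde\ve}\|\Phi(\ve)-\Phi(\tilde\ve)\|^p$, but your telescoping is then carried out on $\Phi-\mathbb{E}\Phi$, so the two halves of the argument are not actually connected; harmless, but it obscures what the decomposition is being applied to. Note also that the paper does not prove this statement --- it cites Pisier's Lemma~7.3 directly --- so there is no ``paper proof'' to compare against; what matters is whether your reconstruction is self-contained, and at present the central reduction is asserted rather than proved.
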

 
 It is now very easy to combine this result with the type properties of $X$ in order to obtain:
 
 \begin{corollary}[of Lemma \ref{lemmaPisier}]\label{Cor1}
 	Consider a function $\Phi: \calQ_m \longrightarrow X$, where $X$ is a Banach space. Then 
 	$$
 	\mathbb{E}_\ve\big \| \Phi(\ve) \big\|_X  \le \big \|\mathbb{E}_\ve\Phi (\ve) \big \|_X + C \ \sigma_\Phi \ \mathrm{T}^{(m)}_2 ( X ) ,
 	$$
 	where $C$ is an independent constant.
 \end{corollary}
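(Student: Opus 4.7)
The plan is to combine the triangle inequality, Pisier's Sobolev-type inequality (Lemma \ref{lemmaPisier}) applied with exponent $p=1$, and the definition of the type-2 constant with $m$ vectors. The structure is essentially mechanical once one lines up these three ingredients, so I expect no serious obstacle; the main thing to be careful about is that the type constant is applied \emph{conditionally on $\ve$} (treating $\tilde\ve$ as the Rademacher sequence), and that we need the type-2 constant with $m$ vectors rather than the plain type-2 constant, since at each fixed $\ve$ we have exactly $m$ vectors $\{\partial_i\Phi(\ve)\}_{i=1}^m$.

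First I would apply the triangle inequality to center $\Phi$ at its mean:
\begin{equation*}
\mathbb{E}_\ve \bigl\|\Phi(\ve)\bigr\|_X \le \bigl\|\mathbb{E}_\ve \Phi(\ve)\bigr\|_X + \mathbb{E}_\ve \bigl\|\Phi(\ve) - \mathbb{E}_\ve \Phi(\ve)\bigr\|_X.
\end{equation*}
The first term already appears in the target bound, so it remains to control the second. For this, I invoke Lemma \ref{lemmaPisier} with $p=1$, which gives an independent copy $\tilde\ve$ and the estimate
\begin{equation*}
\mathbb{E}_\ve \bigl\|\Phi(\ve) - \mathbb{E}_\ve \Phi(\ve)\bigr\|_X \le C \log(m)\, \mathbb{E}_{\ve,\tilde\ve} \Bigl\| \sum_{i=1}^m \tilde\ve_i\, \partial_i \Phi(\ve) \Bigr\|_X.
\end{equation*}

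Next I would condition on $\ve$ and treat $\tilde\ve$ as the Rademacher sequence. By Jensen's inequality and the definition of $\mathrm{T}_2^{(m)}(X)$ applied to the $m$ vectors $\{\partial_i \Phi(\ve)\}_{i=1}^m \subset X$,
\begin{equation*}
\mathbb{E}_{\tilde\ve} \Bigl\| \sum_i \tilde\ve_i\, \partial_i \Phi(\ve) \Bigr\|_X \le \Bigl( \mathbb{E}_{\tilde\ve} \Bigl\| \sum_i \tilde\ve_i\, \partial_i \Phi(\ve) \Bigr\|_X^2 \Bigr)^{1/2} \le \mathrm{T}_2^{(m)}(X)\, \Bigl( \sum_i \|\partial_i \Phi(\ve)\|_X^2 \Bigr)^{1/2}.
\end{equation*}
Taking then the expectation over $\ve$ and using the definition of $\sigma_\Phi$ (recall $\sigma_\Phi = \log(m)\, \mathbb{E}_\ve (\sum_i \|\partial_i \Phi(\ve)\|_X^2)^{1/2}$), the bound becomes
\begin{equation*}
\mathbb{E}_\ve \bigl\|\Phi(\ve) - \mathbb{E}_\ve \Phi(\ve)\bigr\|_X \le C\, \mathrm{T}_2^{(m)}(X)\, \sigma_\Phi.
\end{equation*}
Plugging this back into the triangle-inequality step yields the claimed bound. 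The constant $C$ inherited from Pisier's lemma plays the role of the $C$ in the statement; the $\log(m)$ factor produced by the Sobolev-type inequality is absorbed into $\sigma_\Phi$ by its very definition, which is why that logarithm was put there in Definition \ref{Def_sigma} in the first place.
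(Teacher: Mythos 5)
Your proof is correct and follows the same route as the paper's: fix $p=1$ in Pisier's lemma, apply the triangle inequality to peel off $\|\mathbb{E}_\ve\Phi(\ve)\|_X$, and bound the Rademacher average by the type-2 constant with $m$ vectors conditionally on $\ve$. The only difference is cosmetic: you spell out the Jensen step passing from $L_1$ to $L_2$ before invoking the type definition, which the paper leaves implicit.
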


This is the cornerstone of the building leading to Theorem \ref{mainThm}.

\begin{proof}[Proof of Corollary \ref{Cor1}]
	Fix $p = 1 $ in Lemma \ref{lemmaPisier}. Therefore, we have that :
	$$
	 		\mathbb{E}_{\ve} \Big \| \Phi (\ve) - \mathbb{E}_{\ve} \Phi (\ve) \Big \|_X \le ( C \log m) \  \mathbb{E}_{\ve,\tl \ve} \Big \|   \sum_i \tl \ve_i \partial_i \Phi(\ve)   \Big{\|}_X .
	$$
	
	Additionally, we can trivially bound:
	$$
	\mathbb{E}_{\ve} \Big \| \Phi (\ve) - \mathbb{E}_{\ve} \Phi (\ve) \Big \|_X \ge 	\mathbb{E}_{\ve} \Big \| \Phi (\ve) \Big \|_X - \Big \| \mathbb{E}_{\ve} \Phi (\ve) \Big \|_X.
	$$
	
	On the other hand, according to the definition of the type-2 constant (with $m$ vectors) of $X$ -- recall also Comment  \ref{Kahane} -- we can say:
	$$
	\mathbb{E}_{\ve,\tl \ve} \Big \|   \sum_i \tl \ve_i \partial_i \Phi(\ve)   \Big{\|}_X
	\lesssim \mathrm{T}_2^{(m)}(X)  \ \mathbb{E}_{\ve} \,   \left( \sum_{i}  \| \partial_{i }  \Phi_S(\ve)\|^2_{X} \right)^{1/2}.
	$$
	
	That's enough to obtain the statement.
\end{proof}

\subsubsection{Some key  estimates of type constants}

Corollary \ref{Cor1} provides us with a tool to upper bound the expected norm of the image of a map $\Phi: \calQ_m \rightarrow X$, provided that we have some control over the RHS of the inequality in the statement. The only piece there that is independent of the map $\Phi$ is the type-2 constant (with $m$ vectors) $\rmT_2^{(m)}(X)$, to which the rest of this section is devoted.

Later on, the normed spaces $M_{n,m}$ and $\Ss_1^{n,m} \otimes_{\mathfrak{S}_2^{cb-w}} \Ss_1^{n,m}$ will play a prominent role. The type and cotype properties of $M_{n,m}$ as well as $\Ss_1^{n,m}$  are well known. In particular the following estimates hold:
\begin{equation}\label{Eq4_TypeConsts_1}
\begin{array}{c} \mathrm{C}_2 (M_{n,m}) \approx \min(n^{1/2}, m^{1/2}), \qquad \rmT_2( M_{n,m} ) \approx \log^{1/2} (\min(n,m) ), \end{array}
\end{equation}
\begin{equation}\label{Eq4_TypeConsts_1.1}
\begin{array}{c} \mathrm{C}_2 (\s^{n,m}) \approx 1, \qquad  \rmT_2( \s^{n,m} ) \approx  (\min(n,m) )^{1/2}. \end{array}
\end{equation}

For $\Ss_1^{n,m} \otimes_{\mathfrak{S}_2^{cb-w}} \Ss_1^{n,m}$ the situation is not that well understood at all. In fact, we were not able to obtain any non-trivial estimate for its type properties so far. Then, instead of dealing directly with this space, we will consider the interpolation space  $ (\s^{n,m} \otimes_\ve \s^{n,m},\s^{n,m} \otimes_\pi \s^{n,m} )_{\frac{1}{2}}$. The norm in this latter space turns out to be an upper bound to the norm in $\Ss_1^{n,m} \otimes_{\mathfrak{S}_2^{cb-w}} \Ss_1^{n,m}$, recall  Proposition \ref{Prop_RelNorms}. From now on we use the following notational short-cut: $ (\s^{n,m} \otimes_\ve \s^{n,m},\s^{n,m} \otimes_\pi \s^{n,m} )_{\theta} = \s^{n,m} \otimes_{ (\ve,\pi)_{\theta} } \s^{n,m}$. Thanks to the extra structure in $\s^{n,m} \otimes_{ (\ve,\pi)_{1/2} } \s^{n,m} $ provided by  interpolation, we are able to obtain a bound for its type constants. To simplify the presentation, we consider in the following that $\min(n,m) = n$. We can state:
\begin{proposition}\label{Type_Inttheta} Given $0< \theta <1$, and natural numbers $n\le m$: 
	$$
	\mathrm{T}_{ \frac{2}{1+\theta} }\left( \s^{n,m} \otimes_{ (\ve,\pi)_\theta } \s^{n,m} \right) \lesssim_{\log}    n^{\frac{1-\theta}{2}}.
	$$
\end{proposition}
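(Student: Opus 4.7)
The plan is to invoke Proposition \ref{Prop_IntType} on the interpolation couple
\[
(X_0, X_1) = \bigl(\s^{n,m} \otimes_\ve \s^{n,m},\ \s^{n,m} \otimes_\pi \s^{n,m}\bigr),
\]
whose complex interpolation at parameter $\theta$ is, by definition, the target space $\s^{n,m} \otimes_{(\ve,\pi)_\theta} \s^{n,m}$. I would pick the exponents $p_0 = 2$ and $p_1 = 1$, for which
\[
\tfrac{1}{p} \;=\; \tfrac{1-\theta}{p_0} + \tfrac{\theta}{p_1} \;=\; \tfrac{1-\theta}{2} + \theta \;=\; \tfrac{1+\theta}{2},
\]
so $p = \frac{2}{1+\theta}$ matches the target. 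The $p_1 = 1$ endpoint is free, since every Banach space has Rademacher type $1$ with constant $1$; thus $\rmT_1(X_1) \le 1$. Consequently, Proposition \ref{Prop_IntType} gives
\[
\rmT_{\frac{2}{1+\theta}}\bigl(\s^{n,m} \otimes_{(\ve,\pi)_\theta} \s^{n,m}\bigr) \;\le\; \rmT_2\bigl(\s^{n,m} \otimes_\ve \s^{n,m}\bigr)^{1-\theta},
\]
and the whole proposition reduces to a single injective-endpoint estimate,
\[
\rmT_2\bigl(\s^{n,m} \otimes_\ve \s^{n,m}\bigr) \;\lesssim_{\log}\; n^{1/2},
\]
which, upon substitution, yields $n^{(1-\theta)/2}$ as required.

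To establish this endpoint estimate I would pass to the dual. In finite dimension one has $(\s^{n,m} \otimes_\ve \s^{n,m})^* \cong M_{n,m} \otimes_\pi M_{n,m}$, and the logarithmic type/cotype duality \eqref{typecotype_duality} gives
\[
\rmT_2\bigl(\s^{n,m} \otimes_\ve \s^{n,m}\bigr) \;\lesssim\; \log(nm)\,\cdot\,\mathrm{C}_2\bigl(M_{n,m} \otimes_\pi M_{n,m}\bigr).
\]
So the goal becomes $\mathrm{C}_2(M_{n,m} \otimes_\pi M_{n,m}) \lesssim_{\log} n^{1/2}$. One factor already supplies the correct order: $\mathrm{C}_2(M_{n,m}) \approx n^{1/2}$ by \eqref{Eq4_TypeConsts_1}, and the embedding $a \mapsto a \otimes b_0$ (for any norm-one $b_0$) shows this is a lower bound on $\mathrm{C}_2$ of the projective tensor, so the target bound is the optimal one.

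The main obstacle is precisely to transfer cotype through the projective tensor product without picking up a polynomial factor in $n$ from the second copy of $M_{n,m}$. Cotype does not pass through $\otimes_\pi$ in a black-box way, so my plan would be to use the noncommutative structure: identifying $M_{n,m} \otimes_\pi M_{n,m}$ with the nuclear operators $\mathcal{N}(\s^{n,m}, M_{n,m})$, and controlling Rademacher sums $\mathbb{E}\|\sum_i \ve_i\, a_i \otimes b_i\|_\pi$ by a Chevet-type / noncommutative Khintchine estimate that couples the row and column structure of $M_{n,m}$. A naïve chaining argument over a net in $B_{M_{n,m}}$ (using only $\rmT_2(\s^{n,m}) \approx n^{1/2}$ fiberwise) would lose an extra factor of $\sqrt{nm}$ rather than a logarithm, so a Pisier $K$-convexity / Dudley-style refinement that exploits the smooth structure of $M_{n,m}$ as a noncommutative $L_\infty$-space seems unavoidable. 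Once this cotype estimate is in hand, chaining the inequalities gives the stated bound.
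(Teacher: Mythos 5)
Your reduction matches the paper's exactly through the first two steps: you interpolate with $p_0 = 2$, $p_1 = 1$ to reduce to the injective endpoint, and then use the logarithmic type/cotype duality \eqref{typecotype_duality} to reduce further to $\mathrm{C}_2(M_{n,m}\otimes_\pi M_{n,m}) \lesssim_{\log} n^{1/2}$. You also correctly identify that a naïve chaining argument over a net in $B_{M_{n,m}}$ would lose a polynomial factor, so this last cotype estimate is the real content of the proposition.

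But precisely at that point the argument stops: you gesture at ``a Chevet-type / noncommutative Khintchine estimate'' and ``a Pisier $K$-convexity / Dudley-style refinement'' as ``seem[ing] unavoidable,'' without stating or proving any such inequality. That is a genuine gap, and it is exactly where the paper invokes a specific and nontrivial theorem of Pisier (Theorem 5.1 in \cite{Pisier90}), which in quantitative form gives
$\mathrm{C}_2(X \otimes_\pi X) \lesssim \mathrm{C}_2(X)\,\mathrm{UMD}(X)\,\mathrm{T}_2^2(X)$.
Applying it to $X = M_{n,m}$ requires, in addition to $\mathrm{C}_2(M_{n,m}) \lesssim n^{1/2}$ and $\mathrm{T}_2(M_{n,m}) \lesssim \log^{1/2} n$, the estimate $\mathrm{UMD}(M_{n,m}) \lesssim \log n$. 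The paper obtains that last bound by comparing $M_{n,m}$ to the Schatten class $\Ss_p^{n,m}$ via $\mathrm{UMD}(X) \lesssim d(X,Y)\,\mathrm{UMD}(Y)$, using $\mathrm{UMD}(\Ss_p) \lesssim p$ and $d(M_{n,m},\Ss_p^{n,m}) \lesssim n^{1/p}$, and then optimizing at $p = \log n$. None of this — the Pisier projective-tensor cotype bound, the role of the UMD constant, or the Schatten-class comparison — appears in your proposal, and without it the endpoint estimate $\mathrm{T}_2(\Ss_1^{n,m}\otimes_\ve \Ss_1^{n,m}) \lesssim_{\log} n^{1/2}$ is asserted rather than proved.
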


An immediate consequence of the previous proposition is a bound for the type-2 constant with $n^2$ vectors:
\begin{equation*} 
\mathrm{T}_{ 2 }^{(n^2)} \left( \s^{n,m} \otimes_{ (\ve,\pi)_\theta } \s^{n,m}  \right) \le  n^{\theta} \ \mathrm{T}_{ \frac{2}{1+\theta} }\left( \s^{n,m} \otimes_{ (\ve,\pi)_\theta } \s^{n,m} \right)  \lesssim_{\log}    n^{ \frac{1+\theta}{2} },
\end{equation*}
where the first inequality follows as an application of Hölder inequality in the definition of $\rmT_2^{(n^2)}(X)$ (recall Definition \ref{typedef} and comments afterwards).

Particularizing for $\theta = \frac{1}{2}$:
\begin{equation}\label{Type_Int1/2} 
\mathrm{T}_{ 2 }^{(n^2)} \left( \s^{n,m} \otimes_{  (\ve,\pi)_\frac{1}{2}  } \s^{n,m}  \right)  \lesssim_{\log}    n^{ \frac{3}{4}}.
\end{equation}

This is the key type-estimate to obtain part II. of the main Theorem \ref{mainThm}.

For the sake of concreteness, we explicit here the logarithmic corrections in \eqref{Type_Int1/2}:
$$
\mathrm{T}_{ 2 }^{(n^2)} \big( \s^{n,m} \otimes_{  (\ve,\pi)_\frac{1}{2}  } \s^{n,m}  \big)  \lesssim    n^{ 3/4 }  \log^{1/2} (n m) \log(n)  .
$$

\begin{proof}[Proof of Proposition \ref{Type_Inttheta}]
	The proof proceeds in two steps. First, using techniques from \cite{Pisier90,Pisier1992}, we obtain the estimate 
	\beq \label{type_epsilon} \mathrm{T}_2 (\Ss_1^{n,m} \otimes_\ve \Ss_1^{n,m}) \lesssim_{\log}   n ^{1/2}  .\eeq
	
	With this at hand, Proposition \ref{Type_Inttheta} follows from how type constants interact with the complex interpolation method, Proposition \ref{Prop_IntType}. In particular, it is enough to fix $p_0 =2$, $p_1 = 1$ in that result and consider the trivial bound $\rmT_1 (\s^{n,m} \otimes_\pi \s^{n,m}) =1 $. 
	
	Therefore, there remains to provide a proof for \eqref{type_epsilon}. To prove the stated estimate we bound the cotype-2 constant of the dual, $M_{n,m} \otimes_\pi M_{n,m}$. Therefore, from the duality between type and cotype,  Equation \eqref{typecotype_duality}, we obtain:
	$$
	\mathrm{T}_2 (\Ss_1^{n,m} \otimes_\ve \Ss_1^{n,m}) \lesssim \log(nm) \, \mathrm{C}_2(M_{n,m} \otimes_\pi M_{n,m}).
	$$
	
	To estimate $ \mathrm{C}_2(M_{n,m} \otimes_\pi M_{n,m})$ we use the following bound on the cotype of the projective tensor product, implicit in \cite{Pisier90}\footnote{ The key result here is Theorem 5.1 in \cite{Pisier90}. The bound we use is obtained keeping track of the constants appearing in the isomorphic statement of that theorem.  We are indebted to Jop Briët for kindly sharing with us some very useful private notes on Pisier's method.}
	$$
	\mathrm{C}_2(M_{n,m} \otimes_\pi M_{n,m}) \lesssim \mathrm{C}_2 (M_{n,m}) \, \mathrm{UMD}(M_{n,m})\, \mathrm{T}_2^2(M_{n,m}),
	$$
	where $\mathrm{UMD}(X)$ is the analytic UMD (unconditional martingale difference) parameter of the Banach space $X$. We now bound each of the quantities in the RHS of the last inequality:
	\begin{itemize}
		\item recalling \eqref{Eq4_TypeConsts_1} we have that $\mathrm{C}_2 (M_{n,m}) \lesssim n^{1/2}$ and $ \mathrm{T}_2(M_{n,m}) \lesssim \log^{1/2}(n)$;
		\item we estimate $\mathrm{UMD}(M_{n,m})$ from known bounds for the UMD constant of the $p$-Schatten class $\Ss_p$, for $ 1<p <\infty$. It is known that these spaces are UMD and the following estimate for $\mathrm{UMD}(\Ss_p)$ is available \cite{Randrianantoanina2002}:
		$$
		\mathrm{UMD}(\Ss_p) \lesssim p.
		$$
		This also translates to the same bound for the subspace $\Ss_p^{n,m}$. Now, we take into account the following relation between the UMD constants of arbitrary spaces $X$ and $Y$ at Banach-Mazur distance $d(X,Y)$. This is a direct consequence of the geometric characterization of the UMD property due to Burkholder \cite{Burkholder81} -- see also \cite{Burkholder86}:
		$$
		\mathrm{UMD}(X) \lesssim d(X,Y) \, \mathrm{UMD}(Y).
		$$
		Finally, with this at hand, we obtain the bound 
		$$
		\mathrm{UMD}(M_{n,m}) \lesssim d(M_{n,m},\Ss_p^{n,m}) \, \mathrm{UMD}(\Ss_p^{n,m}) \lesssim n^{1/p} \, p.
		$$
		Adjusting the parameter $p$ as $ p =  \log(n)$ we obtain
		$$
		\mathrm{UMD}(M_{n,m}) \lesssim \log(n) ,
		$$
		that is enough to conclude that
		$$  \mathrm{T}_2 (\Ss_1^{n,m} \otimes_\ve \Ss_1^{n,m}) \lesssim  \log(nm)\, \log^2(n)\, n ^{1/2} .$$
	\end{itemize}
	
\end{proof}

 \section{Cheating strategies for $\mathbf{G_{Rad}}$}\label{Sec3}
 
 In this section we describe in detail the action of cheaters in our PV protocol $G_{Rad}$.    Recall that in 1-D PV, we consider a privileged point $x$ and a couple of verifiers, $V_L$, $V_R$, at locations $ x \pm \delta$. See Section \ref{Sec4_2}, page 7, for the definition of $G_{Rad}$. In the dishonest scenario  two cheaters, Alice and Bob,  hold locations $x \pm \delta'$ for some $0 < \delta' < \delta$.  The strategy of Alice and Bob is restricted to the \emph{s2w} scenario already described in Figure \ref{s2wStrategies}.

 A strategy in this scenario is determined by --  cf. Figure \ref{s2wStrategies}:
\begin{itemize}
	\item a shared entangled state $ \varphi  \in \mathcal{D}(\HH_{E_a} \otimes \HH_{E_b} ) $ that we assume here to be pure\footnote{ It can be easily checked that, by convexity, the success probability achieved in $G_{Rad}$ by strategies using mixed states is always upper bounded by the success probability when using pure states. Since the quantity we are interested in  is the optimal cheating probability, restricting ourselves to strategies using pure states would be enough.    }. From now on we use  interchangeably the notations $\varphi$ or $|\varphi\xa \varphi|$ to refer to that state; 
	\item a family of tuples of four ``local'' channels: for each $\ve \in \calQ_{n^2}$,
	 $$\mathcal{A} \in \CPTP(\HH_A \otimes \HH_B \otimes \HH_{E_a}, \HH_{A \shortrightarrow A }\otimes \HH_{A \shortrightarrow B }),\quad \BB_\ve \in \CPTP( \HH_{E_b}, \HH_{B \shortrightarrow B } \otimes \HH_{B \shortrightarrow A}), $$  $$ \tilde{\mathcal{A}}_\ve \in \CPTP( \HH_{A \shortrightarrow A }\otimes \HH_{B \shortrightarrow A }, \HH'_{A}), \quad \tilde \BB_\ve \in \CPTP(  \HH_{B \shortrightarrow B }\otimes \HH_{A \shortrightarrow B }, \HH'_{B} ) .$$
	 
	  For verification, $\HH'_A$, $\HH'_B$ should be communicated to $V_L$ and $V_R$ respectively. Therefore, according to the definition of the protocol, these registers should be isomorphic to the originals $\HH_A$ and $\HH_B$.
\end{itemize}

Understood as a family of quantum channels, the strategy defined by these elements reads:
	\begin{equation}\label{StratS2w}
\begin{array}{cccc}
\Ss_\ve: &				\D( \HH_A \otimes \HH_B 	)			   &  \longrightarrow  &     \D(\HH_A \otimes \HH_B)  \\[1em]
&  \psi  &  \mapsto    &  \Ss_\ve (\psi  ) = (\tilde{\mathcal{A}}_\ve \otimes \tilde \BB_\ve)\circ(\mathcal{A} \otimes \BB_\ve) (\psi  \otimes \varphi)
\end{array} ,
\end{equation}
 for each $ \ve \in \calQ_{n^2}$. 
 
  The probability that the verifiers accept the output of such a strategy is given by:
 	 		\begin{align}\label{DefvalueStrat_MROQG_GRad2}
 	\omega (G_{Rad};\lbrace \Ss_\ve \rbrace_\ve )  &:=   \mathbb{E}_{\ve} \ \Tr\left[\: |\psi_{\ve} \xa\psi_{\ve} |  \,    (\id_{C}\otimes \mathcal{S}_{\ve})  (|\psi \xa \psi|  ) \: \right].
 \end{align}
 Optimizing over any strategy allowed in the s2w scenario leads to the value:
\begin{align}\label{DefvalueStrat_MROQG_GRad1}
	\omega_{s2w}(G_{Rad}) = \sup_{\lbrace \Ss_\ve \rbrace_\ve \in \mathfrak{S}_{s2w} } \omega (G_{Rad};\lbrace \Ss_\ve \rbrace_\ve )  ,
\end{align}
where $\mathfrak{S}_{s2w}$ denotes the set of strategies in the s2w scenario.

 In this language, the existence of general attacks for arbitrary PV protocols translates into the coincidence of the value in the $s2w$ scenario with the honest value:
 \beq
 \label{CoincidenceValues}
 \omega_{s2w}(G_{Rad}) =1.
 \eeq 
 
 As we said in the introduction, the main question we are interested in is the amount of entanglement necessary to establish this equality. It is natural then to define a restricted version of $\omega_{s2w}(G_{Rad})$  considering only strategies using a limited amount of resources. Here, we restrict the local dimension at any time during the protocol. For $\tilde k,\, k\in \mathbb{N}$ we define  the scenario $\mathfrak{S}_{s2w,\tilde k , k}$ as the set of strategies in the form of \eqref{StratS2w} but with the following restrictions:
  $$\dim(\HH_{E_{a (b)}})\le k, \qquad \dim(\HH_{A (B)  \shortrightarrow A(B) } )  \times  \dim( \HH_{A (B)  \shortrightarrow B (A) } )\le \tilde k.$$ I.e., we restrict,
 $$
 \varphi \in \mathcal{D}(\ell_2^{k^2})
 $$
 and, for each $\ve \in \calQ_{n^2}$,
 $$\mathcal{A} \in \CPTP(\ell_2^{n^2 k}, \, \ell_2^{\tilde k}),\quad \BB_\ve \in \CPTP( \ell_2^{ k}, \, \ell_2^{\tilde k}  ), $$  $$ \tilde{\mathcal{A}}_\ve \in \CPTP( \ell_2^{\tilde k}, \, \ell_2^{n}   ), \quad \tilde \BB_\ve \in \CPTP(  \ell_2^{\tilde k}, \, \ell_2^{n}  ) .$$

Given this model, we define:
  \begin{align}\label{eq:ws2w}
	\omega_{s2w;\tilde k, k}(G_{Rad}) :=   \sup_{\lbrace \Ss_\ve \rbrace_\ve \in \mathfrak{S}_{s2w;\tilde k ,k} } \omega (G_{Rad};\lbrace \Ss_\ve \rbrace_\ve ) . 
\end{align}

Clearly,
\beq\label{eq:LimS2wH}
\lim_{\tilde k, k \shortrightarrow \infty} \omega_{s2w; \tilde k,k}(G_{Rad}) = \omega_{s2w}(G_{Rad}) = 1.
\eeq
We want to study the rate of convergence of this limit. To the best of our knowledge, it is not even known whether the limit is in general attained for finite $k,\, \tilde k$. We worry about lower bounds in $k,\, \tilde k$   when a given degree of approximation is achieved in \eqref{eq:LimS2wH}. More precisely,  we upper bound $ \omega_{s2w;\tilde k, k}(G_{Rad})$ in terms of $k,\, \tilde k$ and properties of the strategies considered. However, we postpone those results until Section \ref{Sec4}. Before that,  we need to provide here two reductions to the kind of strategies we consider in order  to prepare the ground for next section.

\subsection{Use of classical communication in cheating strategies}
First, we consider the role of classical communication between Alice and Bob. In our model, we regard this resource as free and, in fact, we built into the structure of the considered strategies the free communication of the classical information about $\ve$ (in the second round of local operations this parameter was considered as public). This is justified by the fact that our interest is in bounding the \emph{quantum} resources used for attacking $G_{Rad}$, which are assumed to be  much more expensive than classical communication. However, there is a potential problem with  this approach. That  is the possibility of the players using further classical communication apart from that of $\ve$ -- \emph{extra classical communication} from now on. In our model, this extra classical communication would be included in the definition of the channels $\A$ and $\BB_\ve$. In the $\mathfrak{S}_{s2w,\tilde k , k}$ scenario, this  would affect the dimension $\tilde k $ being  no longer a reliable witness for the quantum resources spent by a given strategy: $\tilde k$ would also include the dimension  of the extra classical messages shared by Alice and Bob. Nonetheless, we show that the amount of \emph{useful} extra classical communication in our setting is bounded by the initial dimension of the quantum system manipulated by the players, that is, by $k$ and $n$. The following lemma lets us control the contribution of the classical part of players action to $\tilde k$. 

\begin{lemma}\label{Lemma_ClassCom}
	The optimization over $\Ss \in \mathfrak{S}_{s2w,\tilde k , k} $ in \eqref{eq:ws2w} can be restricted to strategies using extra classical communication of local dimension $  \tilde k_{cl} \le n^4 k^2$.
	\end{lemma}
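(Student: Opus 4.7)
The plan is to bound the extra classical communication by a convexity argument applied to the quantum instruments that encode it. I will reinterpret Alice's Round~1 channel $\A$, together with any extra classical message she sends to Bob, as a quantum instrument $\lbrace \A_a \rbrace_{a \in [N_A]}$ acting on her input register $\HH_A\otimes\HH_B\otimes\HH_{E_a}$, a Hilbert space of dimension $D_A = n^2 k$. Here $a$ is the classical label forwarded to Bob, while each CP map $\A_a$ produces the conditional post-measurement state on $\HH_{A \shortrightarrow A}\otimes\HH_{A \shortrightarrow B}$. Analogously, for each $\ve$, Bob's Round~1 will be modelled as an instrument $\lbrace \BB_{\ve,b} \rbrace_b$ on $\HH_{E_b}$ of dimension $D_B = k$; the Round~2 channels $\tilde\A_\ve$, $\tilde\BB_\ve$ will then depend on the classical labels $a, b$ in addition to $\ve$.

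Next, I will freeze every strategy element other than Alice's instrument. With that done, the game value \eqref{DefvalueStrat_MROQG_GRad2} is an affine function of $\lbrace \A_a \rbrace_a$, viewed as an element of the convex set $\mathcal{I}_{N_A}$ of instruments with at most $N_A$ outcomes on $\HH_A\otimes\HH_B\otimes\HH_{E_a}$, so the optimum must be attained at an extreme point of $\mathcal{I}_{N_A}$. I will then invoke a characterization, generalising the $D^2$-outcome bound for extremal POVMs (which itself relies on the fact that Hermitian operators on a $D$-dimensional space span a real vector space of dimension $D^2$), showing that any such extreme instrument has at most $D_A^2 = n^4 k^2$ non-null outcomes. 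The Round~2 channels attached to the null outcomes are never invoked and can simply be discarded. The same reasoning applied to each $\lbrace \BB_{\ve,b} \rbrace_b$ will yield $N_B \le D_B^2 = k^2 \le n^4 k^2$, and the two reductions are compatible because each is a genuinely affine optimisation once the opposite party's choices are held fixed.

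The main obstacle will be to state the extremal-instrument result precisely: what is needed is that extreme quantum instruments on a $D$-dimensional input space have at most $D^2$ non-trivial outcomes, each corresponding to a rank-1 CP map whose Kraus operators satisfy a suitable linear-independence condition. For POVMs this bound is elementary; for general instruments it can be extracted from the work of D'Ariano--Perinotti--Sacchi and related references on the extremal structure of quantum instruments. A minor additional subtlety is sequencing the reductions: one must fix all Round~2 channels and the opposite party's instrument before reducing, so that the residual optimisation is honestly affine and the extreme-point argument applies cleanly.
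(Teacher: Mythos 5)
Your proposal is correct and uses the same core idea as the paper: model the extra classical message as the outcome label of a quantum instrument, exploit the affine dependence of the game value on each party's instrument (with the other data held fixed), and invoke the fact that extreme points of the set of instruments on a $d$-dimensional input have at most $d^2$ nonzero outcomes. The paper carries this out by explicitly writing $\A=\sum_s\alpha_s\A_s$ and $\BB_\ve=\sum_{s'}\beta_{\ve,s'}\BB_{\ve,s'}$ into extreme instruments, then bounding $\omega(G_{Rad};\Ss)\le\max_s\mathbb{E}_\ve\max_{s'}\omega(G_{Rad};\lbrace\Ss_{\ve;s,s'}\rbrace_\ve)$ and selecting maximizing indices (with $s'$ allowed to depend on $\ve$, matching your observation that Bob's reduction is per-$\ve$); this is the same argument you phrase as ``the optimum is attained at an extreme point.'' The reference you would cite (D'Ariano--Perinotti--Sacchi) plays the role of the paper's citation to Busch et al.\ for the $d^2$-outcome bound on extremal instruments.
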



\begin{proof}
	The result follows from convexity taking into account that the extreme points of the set of instruments acting on a given Hilbert space of dimension $d$ has at most $d^2$ outcomes. See, for instance, \cite[Rmk. 7.9., p.158]{BuschBook} (also \cite[Corollary 1.36]{Kubicki_thesis}).
	
	 Consider an arbitrary strategy $\Ss = \lbrace  \tilde \A_\ve, \tilde \BB_\ve,\A,\BB_\ve, \varphi\rbrace \in  \mathfrak{S}_{s2w;m , k} $ using extra classical communication of local dimension $m_{cl}$. The dimensions $m$, $m_{cl}$ are free parameters that will  be fixed at the end of the proof. Therefore, we can further specify these classical messages in the structure of the channels $\A$ and $\BB_\ve$:
	$$
\A(\, \cdot \,) = \sum_{c_a=1}^{m_{cl}} \A^{c_a}(\, \cdot \,) \otimes |c_a\xa c_a|\quad : \quad \A^{c_a}\in\mathrm{CP}(\ell_2^{n^2 k},\ell_2^{m/m_{cl}}) \text{ for any }c_a,
$$
$$
\BB_\ve(\, \cdot \,) = \sum_{c_b=1}^{m_{cl}} \BB_\ve^{c_b}(\, \cdot \,) \otimes |c_b\xa c_b|\quad : \quad \BB^{c_b}_\ve\in\mathrm{CP}(\ell_2^{k},\ell_2^{m/m_{cl}}) \text{ for any }c_b.
$$
 These expressions are nothing else than the description of some instruments in $\mathrm{Ins}(\ell_2^k, \ell_2^{m/m_{cl}})$ ($\mathrm{Ins}(\ell_2^{n^2 k }, \ell_2^{m/m_{cl}})$ in the first case) with $m_{cl}$ outcomes each.    As we said at the beginning of the proof, the extreme points  of $\mathrm{Ins}(\ell_2^k, \ell_2^{m/m_{cl}})$ consist of instruments with at most $k^2$ outcomes ($n^4 k^2$ in the first case). Therefore, we can rewrite the channels $\A$, $\BB_\ve$ as a convex combination of such extreme points:
  $$
 \A(\, \cdot \,) = \sum_s \alpha_s  \A_s(\, \cdot \,),
 $$
 $$
 \BB_\ve(\, \cdot \,) = \sum_s \beta_{\ve,s}  \BB_{\ve,s} (\, \cdot \,),
 $$
 where, for each $s, \, \ve$:
 \begin{itemize}
 	\item   $0\le \alpha_s,\, \beta_{\ve,s} \le 1$ : $\sum_s \alpha_s = 1 = \sum_s \beta_{\ve,s}$;
 	\item  $\A_s \in \mathrm{Ins}(\ell_2^{n^4k^2}, \ell_2^{m/m_{cl}}) $, $\BB_{\ve;s} \in \mathrm{Ins}(\ell_2^k, \ell_2^{m/m_{cl}})$ with at most $n^4 k^2$ and $k^2$ outcomes, respectively. For simplicity we just fix $\tilde k_{cl}$ bounded by the largest of these bounds, $ \tilde k_{cl} \le n^4 k^2$.
 \end{itemize}

 Denote  $\Ss_{s,s'}$ the strategy specified by elements $\lbrace \tilde \A_\ve, \tilde \BB_\ve,\A_s,\BB_{\ve,s'}, \varphi \rbrace_\ve$ and $\Ss_{\ve;s,s'}(\, \cdot \,)$ the corresponding channels, defined by the generic prescription \eqref{StratS2w}. Notice that $\Ss_\ve (\, \cdot \,)= \sum_{s,s'} \,  \alpha_s \beta_{s'} \, \Ss_{\ve;s,s'}(\, \cdot \,) $.
   Now,  focus on the value achieved in $G_{Rad}$. It turns out that $\omega(G_{Rad};\Ss)$ is linear in $\Ss$, fact that allows us to write:
 $$
 \omega(G_{Rad};\Ss) = \sum_s \alpha_s \, \mathbb{E}_\ve \, \sum_{s'} \, \beta_{\ve,s}\,  \omega(G_{Rad}; \lbrace\Ss_{\ve;s,s'}\rbrace_\ve) 
 \le
  \max_{s} \left\lbrace  \mathbb{E}_\ve  \max_{s'} \left\lbrace  \omega(G_{Rad}; \lbrace\Ss_{\ve;s,s'}\rbrace_\ve)     \right\rbrace   \right\rbrace.
 $$
 Denoting $s^*$, ${s'_\ve}^*$ the indexes at which  the maxima above are attained, the strategy $\lbrace \tilde \A_\ve, \tilde \BB_\ve,$ $\A_{s^*},\BB_{\ve,{s'_\ve}^*}, \varphi \rbrace_\ve$, that uses extra classical communication of local dimension at most $\tilde k_{cl} \le n^4 k^2$, can be now regarded as an element in $\mathfrak{S}_{s2w;\tilde k , k}$ with    $\tilde k =  m \tilde k_{cl}/m_{cl}$. This proves the claim.
 
\end{proof}

\subsection{Pure strategies}
The second reduction consists in purifying arbitrary strategies. We start fixing some notation. We say that a strategy $\Ss = \lbrace  \tilde \A_\ve, \tilde \BB_\ve,\A,\BB_\ve, \varphi\rbrace_\ve \in \mathfrak{S}_{s2w}$ is \emph{pure} if the channels $\tilde \A_\ve, \tilde \BB_\ve,\A,\BB_\ve$ can be written as:
\begin{align} 
 \A (\, \cdot \, ) &=  V (\, \cdot \, ) V^\dagger,   &\BB_\ve (\, \cdot \, ) &= W_\ve (\, \cdot \, ) W_\ve^\dagger, 
 \\  
 \tilde \A_\ve (\, \cdot \, ) &=  \Tr_{anc_a} \,  \tilde V_\ve (\, \cdot \, )\tilde V_\ve^\dagger, &\tilde \BB_\ve (\, \cdot \, ) &= \Tr_{anc_b} \, \tilde W_\ve (\, \cdot \, )\tilde W_\ve^\dagger,
\end{align}
for some contractive operators
$$
V:\HH_A \otimes \HH_B \otimes \HH_{E_a} \longrightarrow \HH_{A  \shortrightarrow A }\otimes \HH_{A   \shortrightarrow B },
 \qquad W_\ve:  \HH_{E_b} \longrightarrow \HH_{B  \shortrightarrow B }\otimes \HH_{B   \shortrightarrow A }, 
$$
$$
\tilde V_\ve:  \HH_{A  \shortrightarrow A }\otimes \HH_{B   \shortrightarrow A } \longrightarrow \HH_A \otimes \HH_{anc_a},
 \qquad \tilde W_\ve:  \HH_{B  \shortrightarrow B }\otimes \HH_{A   \shortrightarrow B } \longrightarrow \HH_B \otimes \HH_{anc_b},
$$
where $\HH_{anc_a}$, $\HH_{anc_b}$ are arbitrary ancillary Hilbert spaces.
In the restricted scenario $\mathfrak{S}_{s2w;\tilde k , k}$, these operators are of the form:
\beq\label{pureStrategy}
V:\ell_2^{n^2 k} \longrightarrow \ell_2^{\tilde k}, \qquad W_\ve:\ell_2^{k} \longrightarrow \ell_2^{\tilde k}, \qquad \tilde V_\ve,\,\tilde W_\ve: \ell_2^{\tilde k} \longrightarrow \ell_2^{n r},
\eeq
 where $r$ is some (arbitrary) natural number. For convenience, we identify pure strategies with families of such \emph{pure} objects, setting the notation $\Ss^\UU = \lbrace \tilde V_\ve, \tilde W_\ve, V,W_\ve , |\varphi\ra \rbrace_\ve$.

 We further denote $\mathfrak{S}^\UU_{s2w}$ the subset of pure strategies in the s2w scenario and $\mathfrak{S}^\UU_{s2w;\tilde k, k}$ the corresponding subset in the model with limited dimension.  Due to Stinespring dilation theorem \cite{Stinespring55}, it turns out that $\mathfrak{S}^\UU_{s2w} = \mathfrak{S}_{s2w} $. However, when we restrict the dimension of the considered strategies, the situation is a bit subtler and the Stinespring dilation of the channels involved affects the relevant dimensions defining the models $\mathfrak{S}_{s2w;\tilde k, k}$ and $\mathfrak{S}^\UU_{s2w;\tilde k, k}$. This is taken care of by the following lemma:
\begin{lemma}\label{Lemma_Purifications}
	Any strategy $\Ss \in \mathfrak{S}_{s2w;\tilde k, k}$ can be regarded as a pure strategy $\Ss^\UU \in \mathfrak{S}^\UU_{s2w;\tilde k', k}$  where
$
\tilde k' \le n^2 k \tilde k^2.
$
That is, the chain of containments $\mathfrak{S}^\UU_{s2w;\tilde k, k} \subseteq \mathfrak{S}_{s2w;\tilde k, k} \subseteq \mathfrak{S}^\UU_{s2w;\tilde k', k}$  holds.
\end{lemma}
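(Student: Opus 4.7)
The plan is to establish the non-trivial inclusion $\mathfrak{S}_{s2w;\tilde k, k} \subseteq \mathfrak{S}^\UU_{s2w;\tilde k', k}$ in two independent reductions: first, purifying the shared state $\varphi$; second, purifying the four local channels via Stinespring dilation. The reverse inclusion $\mathfrak{S}^\UU_{s2w;\tilde k, k} \subseteq \mathfrak{S}_{s2w;\tilde k, k}$ is immediate, since a pure strategy $\lbrace \tilde V_\ve, \tilde W_\ve, V, W_\ve, |\varphi\ra \rbrace$ defines a mixed strategy via $\A(\cdot) = V(\cdot) V^\dagger$, $\BB_\ve(\cdot) = W_\ve(\cdot) W_\ve^\dagger$, $\tilde\A_\ve(\cdot) = \Tr_{anc_a}\tilde V_\ve(\cdot)\tilde V_\ve^\dagger$, $\tilde\BB_\ve(\cdot) = \Tr_{anc_b}\tilde W_\ve(\cdot)\tilde W_\ve^\dagger$, with $\varphi = |\varphi\xa\varphi|$ (extending contractions to isometries on a suitably enlarged ancilla if necessary, absorbed into $\tilde k$).

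For the first reduction, observe that the expression \eqref{DefvalueStrat_MROQG_GRad2} is linear in $\varphi$, since the four local channels as well as the final measurement all act linearly. The spectral decomposition $\varphi = \sum_i p_i |\psi_i\xa\psi_i|$ with eigenvectors $|\psi_i\ra \in \CC^k\otimes\CC^k$ yields $\omega(G_{Rad};\Ss) = \sum_i p_i\, \omega(G_{Rad};\Ss_i)$, where $\Ss_i$ is obtained from $\Ss$ by replacing $\varphi$ by $|\psi_i\xa\psi_i|$. Hence some index $i^*$ achieves $\omega(G_{Rad};\Ss_{i^*}) \ge \omega(G_{Rad};\Ss)$ with a pure shared state of the same local dimension $k$.

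For the second reduction, apply Stinespring's dilation theorem to each of $\A,\BB_\ve,\tilde\A_\ve,\tilde\BB_\ve$, producing isometries $V, W_\ve, \tilde V_\ve, \tilde W_\ve$ that realize the original channels upon tracing out Stinespring auxiliaries. The key bookkeeping choice is \emph{where} these auxiliaries live: the aux from $\A$ is attached to $\HH_{A\to A}$ (kept by Alice throughout), the aux from $\BB_\ve$ is attached to $\HH_{B\to B}$ (kept by Bob), while those of $\tilde\A_\ve$ and $\tilde\BB_\ve$ are incorporated into the second-round ancillas $\HH_{anc_a},\HH_{anc_b}$ of the pure-strategy template. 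The final partial traces $\Tr_{anc_a},\Tr_{anc_b}$ then reproduce the original channels exactly, so the game value is preserved. Using that the Kraus rank of a channel in $\CPTP(\ell_2^d,\ell_2^{d'})$ is at most $dd'$, one has $d_\A \le n^2 k\tilde k$, $d_{\BB_\ve}\le k\tilde k$, and $d_{\tilde\A_\ve},d_{\tilde\BB_\ve}\le n\tilde k$. Combining with the original constraints $\dim\HH_{A\to A}\cdot\dim\HH_{A\to B}\le\tilde k$ and $\dim\HH_{A\to A}\cdot\dim\HH_{B\to A}\le\tilde k$ (and analogously for Bob), a direct computation of the input dimensions of $V, W_\ve, \tilde V_\ve, \tilde W_\ve$ after dilation yields $\tilde k'\le n^2 k\tilde k^4$.

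The main technical subtlety is the dimension bookkeeping under the register interchange: each player's first-round Stinespring auxiliary must be assigned to the subsystem she \emph{retains} rather than the one she sends to her partner. Otherwise the auxiliary dimensions compound across the interchange, and the resulting bound on $\tilde k'$ would blow up by further factors. With the assignment described above, the auxiliaries from $\A$ and $\BB_\ve$ contribute only multiplicatively to the dimensions of $\HH_{A\to A}$ and $\HH_{B\to B}$, giving the stated polynomial bound.
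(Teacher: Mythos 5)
Your proposal is correct and takes essentially the same Stinespring-dilation approach as the paper, with the same bookkeeping choice (the paper writes $\tilde V_\ve \otimes \id_{anc_1}$, which is exactly your ``the auxiliary from $\A$ stays on Alice's side'' assignment, just less spelled out). The preliminary reduction to a pure shared state is superfluous here --- the paper already fixes $\varphi$ to be pure by a convexity argument when defining $\mathfrak{S}_{s2w}$ --- but your convexity observation is correct and harmless.
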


\begin{proof}
	Set a strategy $\Ss = \lbrace  \tilde \A_\ve, \tilde \BB_\ve,\A,\BB_\ve, \varphi\rbrace $ in $ \mathfrak{S}_{s2w;\tilde k, k}$.
	
	We are going to consider Stinespring dilations to purify the corresponding channels 	\begin{equation}\label{StratS2w2}
	 \Ss_\ve (\, \cdot \,  ) = (\tilde{\mathcal{A}}_\ve \otimes \tilde \BB_\ve)\circ(\mathcal{A} \otimes \BB_\ve) (\, \cdot \,  \otimes \varphi). 
	\end{equation}
	We start with
$$
\tilde{\mathcal{A}}_\ve ,\,  \tilde \BB_\ve \in \CPTP(  \ell_2^{\tilde k}, \, \ell_2^{n}  ).
$$

These channels can be lifted (due to a Stinespring dilation) to be of the form:
$$
\tilde{\A}_\ve (\, \cdot \, ) = \Tr_{\widetilde{anc}} \tilde V_\ve (\, \cdot \, ) \tilde V_\ve^\dagger ,
\qquad 
\tilde{\BB}_\ve (\, \cdot \, ) = \Tr_{\widetilde{anc}} \tilde W_\ve (\, \cdot \, )\tilde W_\ve^\dagger ,
$$
where $\tilde V_\ve,\, \tilde W_\ve: \ell_2^{\tilde k} \longrightarrow \ell_2^n \otimes \HH_{\widetilde{anc}}$ are Stinespring isometries and $\dim(\HH_{\widetilde{anc}})$ can be upper bounded by $n \tilde k$.

Proceeding similarly with $\A \in \CPTP(\ell_2^{n^2 k}, \, \ell_2^{\tilde k})$ and $\BB_\ve \in \CPTP( \ell_2^{ k}, \, \ell_2^{\tilde k}  )$ we obtain:
$$
\A (\, \cdot \, ) = \Tr_{anc_1}  V (\, \cdot \, ) V^\dagger  ,
\qquad 
\BB_\ve (\, \cdot \, ) = \Tr_{anc_2}   W_\ve (\, \cdot \, ) W_\ve^\dagger   ,
$$
for Stinespring dilations $V : \ell_2^{n^2 k}  \longrightarrow  \ell_2^{\tilde k}\otimes \HH_{anc_1} $, $ W_\ve : \ell_2^k  \longrightarrow  \ell_2^{\tilde k}\otimes \HH_{anc_2} $ such that $\dim( \HH_{anc_1}) \le n^2 k \tilde k$, $\dim( \HH_{anc_2}) \le k \tilde k$.

With all that, and denoting $\HH_{anc_a} \equiv \HH_{anc_1} \otimes \HH_{\widetilde{anc}}$, $\HH_{anc_b} \equiv \HH_{anc_2} \otimes \HH_{\widetilde{anc}}$, we define the channels
\begin{align*}
\tilde \A_\ve^\UU (\, \cdot \, ) &:= \Tr_{anc_a} ( \tilde V_\ve \otimes \id_{anc_1} ) \, (\, \cdot \, )  \, ( \tilde V_\ve^\dagger \otimes \id_{anc_1}) ,
\\[0.2em]
\tilde \BB_\ve^\UU (\, \cdot \, ) &:= \Tr_{anc_b} ( \tilde W_\ve \otimes \id_{anc_2} )\, (\, \cdot \, ) \, ( \tilde W_\ve^\dagger \otimes \id_{anc_2} ),
\end{align*}
$$
  \A^\UU (\, \cdot \, ) :=  V_\ve  (\, \cdot \, )  V_\ve^\dagger , \qquad 
  \BB_\ve^\UU (\, \cdot \, ) :=  W_\ve  (\, \cdot \, )  W_\ve^\dagger .
  $$
 Then, we can rewrite \eqref{StratS2w2} as: 
\begin{equation*}
\Ss_\ve (\, \cdot \,   ) = (\tilde \A^\UU_\ve \otimes \tilde \BB^\UU_\ve ) \circ  ( \A^\UU \otimes \BB^\UU_\ve  ) 
(\, \cdot \,  \otimes |\varphi\xa \varphi| ) .
\end{equation*}

But clearly the strategy $\Ss^\UU:= \lbrace  \tilde \A_\ve^\UU, \tilde \BB_\ve^\UU,\A^\UU,\BB_\ve^\UU, \varphi\rbrace_\ve$ is pure, finishing the proof of the lemma.  A careful look at the definition of the channels defining $\Ss^\UU$ reveals that $\Ss^\UU  \in \mathfrak{S}^\UU_{s2w;\tilde k',k} $ with 
$
\tilde k' \le n^2 k \tilde k^2.
$

\end{proof}

With Lemmas \ref{Lemma_ClassCom} and \ref{Lemma_Purifications} at hand  we can focus now on the study of strategies in $\mathfrak{S}^\UU_{s2w;\tilde k',k} $.  Given a general strategy $\Ss  \in \mathfrak{S}_{s2w;\tilde k,k} $,  Lemma \ref{Lemma_ClassCom} guarantees that $\Ss$ can be taken such that the dimension of the classical resources used is upper bounded by
\begin{equation}\label{Corresponde_k-k'2}
\tilde k_{cl} \le n^4 k^2.
\end{equation}

Then, Lemma \ref{Lemma_Purifications} allows us to relate  $\Ss$ with a pure strategy  $ \Ss^\UU \in \mathfrak{S}^\UU_{s2w;\tilde k',k}$ such that
\begin{equation}\label{Corresponde_k-k'1}
\tilde k' \le n^2 k \tilde k^2.
\end{equation}

Accordingly, in the rest of this manuscript we work in the model $\mathfrak{S}^\UU_{s2w;\tilde k',k}$  redirecting the reader to \eqref{Corresponde_k-k'1} and \eqref{Corresponde_k-k'2} for the relation  with the  resources used by more general strategies. However, notice that these correspondences are at most polynomial in  $n$, $k$ and $\tilde k$ and, in fact, will only introduce corrections by constant factors in the bounds we state later on. In this sense, the precise exponents in \eqref{Corresponde_k-k'1}, \eqref{Corresponde_k-k'2} are irrelevant. This will become clearer in the next section.  In order to obtain a cleaner notation, from now on we will use $\tilde k$ to refer to the same as $\tilde k'$ above.

For convenience, we finish this section recalling the expression of $\omega(G_{Rad}; \Ss^\UU)$, Equation \eqref{DefvalueStrat_MROQG_GRad2}, particularized for pure strategies $\Ss^\UU =\lbrace \tilde V_\ve , \tilde W_\ve, V, W_\ve,\varphi \rbrace$: 
 		\begin{align}\label{DefvalueStrat_MROQG_GRad3}
\omega (G_{Rad};\Ss^\UU )  &=   \mathbb{E}_{\ve} \ \Tr\left[\: |\psi_{\ve} \xa\psi_{\ve} |\,   (\id_{C}\otimes \mathcal{S}^\UU_{\ve})\: \big(|\psi \xa \psi| \big) \: \right],
\end{align}
where now:
$$
\Ss_\ve^\UU (\, \cdot \,   ) =\Tr_{\HH_{anc_a \otimes anc_b}}\, \left [ (\tilde V_\ve \otimes \tilde W_\ve) \, (V \otimes W_\ve ) \,
(\, \cdot \,  \otimes |\varphi\xa \varphi| ) \,
(V^\dagger \otimes W_\ve^\dagger  ) \,  (\tilde V_\ve^\dagger \otimes \tilde W_\ve^\dagger )\right].
$$

Notice that for strategies in the more specific model $ \mathfrak{S}_{s2w;\tilde k',k}^\UU $, the operators $\tilde V_\ve , \tilde W_\ve, V, W_\ve$ are specified as in \eqref{pureStrategy} and, therefore, $\HH_{anc_a}$ and $\HH_{anc_b}$ in this case are identified with $\ell_2^{ r}$ for some $r\in \mathbb{N}$. Finally, we provide an alternative expression for \eqref{DefvalueStrat_MROQG_GRad3} that establishes a first connection with normed spaces:
\begin{proposition}\label{DefvalueStrat_MROQG_GRad4}
	For any pure strategy $\Ss^\UU =\lbrace \tilde V_\ve , \tilde W_\ve, V, W_\ve,\varphi \rbrace$:
		\begin{align*}
		\omega (G_{Rad};\Ss^\UU )  &=   \, \mathbb{E}_\ve \,     \Big\| \frac{1}{n^2}  \sum_{i,j} \ve_{ij}   (\la i|\tilde V_\ve \otimes \la j|\tilde W_\ve ) \, (V|ij\ra \otimes  W_\ve ) \,   |\varphi  \ra  \Big\|^2_{\HH_{anc_a}\otimes \HH_{anc_b}}.
	\end{align*}
\end{proposition}

	Before showing the easy proof of this proposition, let us clarify the notation used above. By  $V|ij\ra $ we mean the operator $V \in M_{\tilde k',n^2 k }$ with its indices corresponding to $\ell_2^{n^2}$ contracted with the vector $|ij\ra $. That is, if we expand $V$ on its coordinates, $V = \sum_{ k,l=1}^n  \sum_{m=1}^{k} \sum_{p}^{ \tilde k'}$ $  V_{p,klm}  |p \xa kl m|  $, and then $V|ij\ra = \sum_{m=1}^{k} \sum_{p}^{ \tilde k' }  V_{p,i j m}  |p \xa m| \in M_{\tilde k', k}$.  Similarly with $\la i |\tl{V}_{\ve}$ and $\la j|\tl{ W}_{\ve}$.
	
	\begin{proof} 
		The proof is completely elementary and follows the next lines: 
	
	In the first place, we notice that for any vectors $|\xi\ra\in \HH$, $|\eta\ra\in\HH' $ and any operator $U \in \BB(\HH',\HH\otimes \KK)$
	\begin{align*}
	\Tr \left[ |\xi\xa \xi |  \,\Tr_\KK U \, |\eta\xa \eta|\, U^\dagger    \right]
	 &= 
	 \Tr \left[( |\xi\xa \xi |  \otimes \id_\KK) \, U \, |\eta\xa \eta| \, U^\dagger    \right]
	\\
	& = \la \eta| U^\dagger ( |\xi \ra \otimes \id_\KK)  \  (\la \xi |  \otimes \id_\KK) \, U \, |\eta\ra    
	\\
	&= \big\|  (\la \xi |  \otimes \id_\KK) \, U \, |\eta\ra \big\|^2_\KK.
	\end{align*}
	Applying this elementary identity to $|\psi_\ve \ra\in \HH_A \otimes \HH_B \otimes \HH_C \equiv \HH $, $ |\psi\ra \otimes |\varphi\ra \in \HH_A \otimes \HH_B \otimes \HH_C \otimes \HH_E\equiv \HH'$ and the operator $\id_C\otimes (\tilde V_\ve \otimes \tilde W_\ve)\,(V \otimes W_\ve ) \in \BB(\HH', \HH \otimes \HH_{anc_a}\otimes \HH_{anc_b})$ we have that, for each $\ve\in \calQ_{n^2}$:
	\begin{align*}
	&\Tr\left[\: |\psi_{\ve} \xa\psi_{\ve} |\,   (\id_{C}\otimes \mathcal{S}^\UU_{\ve})\: \big(|\psi \xa \psi| \big) \: \right]
	\\
	&\qquad = \, \mathbb{E}_\ve \,  \left\| ( \la \psi_\ve |\otimes \id_{anc_a \otimes anc_b}) \left( \id_C\otimes(\tilde V_\ve \otimes \tilde W_\ve)\, (V \otimes W_\ve) \right)\, ( |\psi\ra \otimes |\varphi  \ra ) \right\|^2_{\HH_{anc_a}\otimes \HH_{anc_b}}.
		\end{align*}
		 The claim in the Proposition is obtained from the last line above just recalling the definitions $|\psi_\ve \ra = \frac{1}{n} \sum_{i,j=1}^n \ve_{ij} |ij\ra_{AB} \otimes |ij\ra_C$ and $|\psi \ra = \frac{1}{n} \sum_{i,j=1}^n  |ij\ra_{AB}  \otimes |ij\ra_C $.
		\end{proof}

 \section{Bounds for ``smooth'' strategies. Theorem \ref{mainThm}}\label{Sec4}

This section is devoted to the proof of  Theorem \ref{mainThm}, which provides lower bounds on resources needed to break $G_{Rad}$ by strategies  characterized by  regularity measures based on  parameter $\sigma$ defined in Section \ref{Sec2.4.3}. When we refer here to a cheating strategy for $G_{Rad}$, unless the opposite is explicitly specified, we mean a pure strategy $\Ss^\UU =\lbrace   \tilde V_\ve,\, \tilde W_\ve, \,  V,\, W_\ve,\, |\varphi \ra \rbrace_\ve \in \mathfrak{S}^\UU_{s2w;\tilde k, k}$.

As explained in the introduction, the main idea leading to Theorem \ref{mainThm} is the understanding  of cheating strategies for $G_{Rad}$ as assignments on the hypercube $\calQ_{n^2}$, i.e.,  vector-valued functions $\Phi: \calQ_{n^2} \rightarrow X$ where $X$ is a suitable Banach space. Given a strategy $\Ss^\UU$, the corresponding assignment $\Phi$ must be related with the value  $\omega(G_{Rad};\Ss^\UU) $. Ideally, we hope to bound $\omega(G_{Rad};\Ss^\UU) $ with the expected value of the norm of $\Phi$, quantity for which we can use Corollary \ref{Cor1} to obtain upper bounds. Proposition \ref{DefvalueStrat_MROQG_GRad4} gives us a first hint on how to construct  $\Phi$. Given $\Ss^\UU =\lbrace   \tilde V_\ve,\, \tilde W_\ve, \,  V,\, W_\ve,\, |\varphi \ra \rbrace_\ve$, consider the map:
\begin{equation}\label{Def_Phi0}
\begin{array}{rccc}
\Phi : &  \calQ_{n^2} & \longrightarrow &  \ell_2^{r^2}
\\[1em]
& \ve & \mapsto  &  \Phi^{i} (\ve)  = \frac{1}{n^2}  \, \sum_{ij} \,   \, \varepsilon_{ij} \, (\la i |\tl{V}_{\ve} \otimes \la j|\tl{ W}_{\ve})\, (V|ij\ra  \otimes W_{ {\ve}}) |\varphi\ra
\end{array} ,
\end{equation}
where $ r $ is determined by the strategy, recall \eqref{pureStrategy}.

 Proposition \ref{DefvalueStrat_MROQG_GRad4} can be now read as: 
\beq\label{Equivalence_Phi-omega}
\omega(G_{Rad}; \Ss^\UU) = \mathbb{E}_\ve \| \Phi (\ve ) \|_{\ell_2^{ r^2}}^2   , 
\eeq
so we are on a good track.  It is easy to check that, by construction, $  \| \Phi (\ve ) \|_{\ell_2^{ r^2 }} \le 1$ for any $\varepsilon \in \calQ^{n^2}$ and therefore the trivial bound $\mathbb{E}_\ve \| \Phi (\ve ) \|_{\ell_2^{ r^2 }}^2   \le \mathbb{E}_\ve \| \Phi (\ve ) \|_{\ell_2^{ r^2}}$  holds. With this and  Corollary \ref{Cor1}, we can obtain -- recall Definition \ref{Def_sigma} for $\sigma_{\Phi}$:
\begin{equation*}
\omega(G_{Rad}; \Ss^{\UU}) 
\le  \big \|\mathbb{E}_\ve  \Phi (\ve) \big\|_{\ell_2^{ r^2} } + C \  \sigma_{\Phi} \mathrm{T}^{(n^2)}_2 (\ell_2^{ r^2} ).
\end{equation*}
Furthermore, $\mathrm{T}^{(n^2)}_2 (\ell_2^{ r^2} ) =1$ since, more generally, $\mathrm{T}_2 (\ell_2^{ r^2} ) = 1 $.
 
 The main problem with this approach is that  the quantity $\big \|\mathbb{E}_\ve  \Phi (\ve) \big\|_{\ell_2^{ r^2} }$ might be of the same order as  $\omega(G_{Rad}; \Ss^{\UU})$, making the previous bound trivial. In fact, for any given $ \Phi$ there exists a rather trivial modification of it that \emph{does not affect any dimension involved} but provides a function $\Phi'$  for which $\big \|\mathbb{E}_\ve  \Phi' (\ve) \big\|_{\ell_2^{  r^2} } = \mathbb{E}_\ve \big \|  \Phi' (\ve) \big\|_{\ell_2^{ r^2} } =  \omega  (G_{Rad}; \Ss^{\UU}).$ This modified version of $\Phi$ can be constructed as follows. For each $\ve \in \calQ_{n^2}$, consider a unitary  $R_\ve$ that rotates the vector $\Phi (\ve)$ into the direction of a reference unit vector $|0\ra$. $\Phi'$ is defined simply as $\Phi' (\ve):= R_\ve \Phi(\ve) $. Even when this adjustment is  completely artificial -- the unitaries $R_\ve$ do not correspond with anything implementable by the cheaters -- the approach presented so far is unable to detect such an artifact.  In part, this is due to the fact that the norm considered on the image of $\Phi$ does not encode any of the structure of the cheating action. We now look at alternative constructions for $\Phi$ that amend this issue.

 What we do next, is simplifying the image of the map $\Phi$ considering more involved choices for the output Banach space. This allows us   to  preserve an equivalence of the kind of \eqref{Equivalence_Phi-omega} while obtaining good upper bounds for $\big \|\mathbb{E}_\ve  \Phi (\ve) \big\|_{ X }$. 
 
Given a strategy $\Ss^\UU =\lbrace   \tilde V_\ve,\, \tilde W_\ve, \,  V,\, W_\ve,\, |\varphi \ra \rbrace_\ve$ we define the following two alternatives to $\Phi$:
$$
\begin{array}{rccc}
\Phi^{i} : &  \calQ_{n^2} & \longrightarrow &  M_{ r, k } \otimes_{min}  M_{r,\tilde k} 
\\[1em]
& \ve & \mapsto  &  \Phi^{i} (\ve)  = \frac{1}{n^2}  \, \sum_{ij} \,   \, \varepsilon_{ij} \, (\la i |\tl{V}_{\ve} \otimes \la j|\tl{ W}_{\ve})\, (V|ij\ra  \otimes \id_{\ell_2^{\tilde k}}),
\\[2em]
\Phi^{ii} : &  \calQ_{n^2} & \longrightarrow &  \s^{\tilde k, n } \otimes_{  (\ve, \pi)_{ \sfrac{1}{2} } } \s^{\tilde k, n}
\\[1em]
& \ve & \mapsto  &  \Phi^{ii} (\ve)  = \frac{1}{n^2}  \, \sum_{ij} \,   \, \varepsilon_{ij} \, \la i | \otimes \la j | \otimes (V|ij\ra  \otimes W_\ve ) \,  |\varphi \ra.
\end{array} 
$$
These are the central objects we study to obtain Theorem \ref{mainThm}.   {For $\Phi^i$, recall Section \ref{Sec.2.4.1} for the definition of the $\otimes_{min}$ norm, which in this particular case can be also understood as defined by the (completely) isometric equivalence $M_{ r, k }  { \otimes_{min}  M_{r,\tilde k} } = M_{r^2,k\tilde k}$. In the case of $\Phi^{ii}$ the norm on the output space was defined at the end of Section \ref{Sec2.4.3} as the interpolation space $( \s^{m, n } \otimes_{  \ve  } \s^{m, n},   \s^{m, n } \otimes_{ \pi } \s^{m, n})_{1/2}  $.}

Now we comment on  the idea behind the definitions of these maps: recall that a strategy $\Ss^\UU = \lbrace   \tilde V_\ve,\, \tilde W_\ve, \,  V,\, W_\ve,\, |\varphi \ra \rbrace_\ve$  consists of two rounds of local operations with a communication stage in between. Fixing the first round, that is  related to  $V,\, W_\ve$ and $ |\varphi \ra $, and understanding the optimization over any $\tilde V_\ve,\, \tilde W_\ve$ as computing a particular norm leads us to define $\Phi^{ii}$. When we fix $\tilde V_\ve,\, \tilde W_\ve$ and $V$ -- this last one is $	\ve$-independent --, optimizing then over any possible $(\id_{\ell_2^{k}} \otimes W_\ve)|\varphi \ra$, we obtain $\Phi^{i}$. 

Next we describe how these maps are related to $G_{Rad}$.

\begin{lemma}\label{Lemma_Main1}
	For any strategy $\Ss^\UU$,
	$$
	\omega ( G_{Rad}; \Ss^\UU) \le \, \mathbb{E}_\ve \, \big \| \Phi^{i(ii)} (\ve) \big \|_{X^{i(ii)}},
	$$
	where we have denoted $X^i =  {M_{r,k} \otimes_{min} M_{r,\tilde k}}$ and $X^{ii} = \s^{ {\tilde k}, n } \otimes_{  (\ve, \pi)_{ \sfrac{1}{2} } } \s^{ {\tilde k}, n}$. 
\end{lemma}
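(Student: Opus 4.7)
The plan is to leverage equation \eqref{DefvalueStrat_MROQG_GRad4}, which writes
\[\omega(G_{Rad};\Ss^\UU)=\mathbb E_\ve\,\|f(\ve)\|_{\ell_2^{r^2}}^2, \qquad
f(\ve):=\frac{1}{n^2}\sum_{ij}\ve_{ij}(\la i|\tl V_\ve\otimes\la j|\tl W_\ve)(V|ij\ra\otimes W_\ve)|\varphi\ra.\]
Because $\|f(\ve)\|^2$ is a winning probability, hence in $[0,1]$, we have the pointwise bound $\|f(\ve)\|^2\le\|f(\ve)\|$ and therefore $\omega(G_{Rad};\Ss^\UU)\le\mathbb E_\ve\,\|f(\ve)\|_{\ell_2^{r^2}}$. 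It then suffices to exhibit, for each $\ve$, a contractive factorization of $f(\ve)$ through $\Phi^{i(ii)}_{\Ss^\UU}(\ve)$.

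The $\Phi^i$ case is the easy one. The identity $(V|ij\ra\otimes W_\ve)|\varphi\ra=(V|ij\ra\otimes\id_{\ell_2^{\tilde k'}})(\id_{\ell_2^k}\otimes W_\ve)|\varphi\ra$ allows us to pull the vector $|\tl\varphi_\ve\ra:=(\id_{\ell_2^k}\otimes W_\ve)|\varphi\ra\in\ell_2^{k\tilde k'}$ out of the sum, so that $f(\ve)=\Phi^i_{\Ss^\UU}(\ve)|\tl\varphi_\ve\ra$. Since $W_\ve$ is a contraction and $|\varphi\ra$ is a unit vector, $\||\tl\varphi_\ve\ra\|\le 1$, and the operator-norm bound immediately gives $\|f(\ve)\|_{\ell_2^{r^2}}\le\|\Phi^i_{\Ss^\UU}(\ve)\|_{M_{r^2,k\tilde k'}}$.

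The $\Phi^{ii}$ case is more delicate, and I would approach it by factoring $f(\ve)$ through a tensor product of two one-sided contractions. Introduce
\[T_V^\ve,\,T_W^\ve\,:\,\s^{\tilde k',n}\longrightarrow\ell_2^r, \qquad T_V^\ve(|p\ra\la i|):=\la i|\tl V_\ve|p\ra,\qquad T_W^\ve(|q\ra\la j|):=\la j|\tl W_\ve|q\ra,\]
extended by linearity. Expanding $\Phi^{ii}_{\Ss^\UU}(\ve)$ on the rank-one basis of $\s^{\tilde k',n}\otimes\s^{\tilde k',n}$ and comparing with the formula for $f(\ve)$ yields the identity $f(\ve)=(T_V^\ve\otimes T_W^\ve)\bigl(\Phi^{ii}_{\Ss^\UU}(\ve)\bigr)\in\ell_2^r\otimes\ell_2^r$. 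Decomposing an arbitrary $t\in\s^{\tilde k',n}$ via its singular value decomposition and using that $\tl V_\ve$ is a contraction then gives $\|T_V^\ve\|_{\s^{\tilde k',n}\to\ell_2^r}\le\|\tl V_\ve\|_{M_{nr,\tilde k'}}\le 1$, and analogously for $T_W^\ve$.

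The main obstacle is the final step, where these two individual contractions must be promoted to a single contraction at the level of the interpolated tensor product. Theorem \ref{Int_Sp's} identifies $\Ss_2^{r,r}=(\Ss_\infty^{r,r},\Ss_1^{r,r})_{1/2}=(\ell_2^r\otimes_\ve\ell_2^r,\,\ell_2^r\otimes_\pi\ell_2^r)_{1/2}$, and the Hilbert--Schmidt norm on $\Ss_2^{r,r}$ is nothing but the Euclidean norm on $\ell_2^{r^2}$. The metric mapping property \eqref{MetricMapProp} yields
\[\|T_V^\ve\otimes T_W^\ve:\s^{\tilde k',n}\otimes_\alpha\s^{\tilde k',n}\longrightarrow\ell_2^r\otimes_\alpha\ell_2^r\|\le\|T_V^\ve\|\,\|T_W^\ve\|\le 1\quad\text{for each }\alpha\in\{\ve,\pi\},\]
and an application of Theorem \ref{Int_IntProp} with $\theta=1/2$ upgrades this to
\[\|T_V^\ve\otimes T_W^\ve:X^{ii}\longrightarrow\Ss_2^{r,r}\|\le 1,\]
from which $\|f(\ve)\|_{\ell_2^{r^2}}\le\|\Phi^{ii}_{\Ss^\UU}(\ve)\|_{X^{ii}}$ follows pointwise in $\ve$, completing the plan.
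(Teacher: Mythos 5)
Your proof is correct, and for $\Phi^i$ it is essentially the paper's argument: factor $f(\ve)$ through the operator $\Phi^i_{\Ss^\UU}(\ve)$ acting on the unit vector $(\id_{\ell_2^k}\otimes W_\ve)|\varphi\ra$ and use the operator-norm bound together with $\|f(\ve)\|\le 1$. For $\Phi^{ii}$, however, you take a genuinely different route. The paper proves the \emph{stronger} inequality $\omega(G_{Rad};\Ss^\UU)\le\mathbb{E}_\ve\|\Phi^{ii}_{\Ss^\UU}(\ve)\|_{\tilde X^{ii}}$ with $\tilde X^{ii}=\s^{\tilde k'\!,n}\otimes_{\mathfrak{S}_2^{w-cb}}\s^{\tilde k'\!,n}$: it bounds the value by a supremum over all possible second-round contractions $\tilde V,\tilde W$, identifies the result as the $\mathfrak{S}_2^{w-cb}$ norm via Lemma \ref{lemma_CharacSigma}, and then weakens to $X^{ii}$ using Proposition \ref{Prop_RelNorms}. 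You instead fix the particular second-round operators of the strategy, package them as contractions $T^\ve_V,T^\ve_W:\s^{\tilde k'\!,n}\to\ell_2^r$, and upgrade them to a contraction on $X^{ii}$ by applying the metric mapping property at the two endpoint norms and interpolating (which is exactly the calculation inside the proof of Proposition \ref{Prop_RelNorms}, inlined). This is cleaner and more elementary — no $\mathfrak{S}_2^{w-cb}$ machinery is needed — and it proves the lemma as stated. What it does \emph{not} give is the strengthened $\tilde X^{ii}$ bound recorded in Remark \ref{Lemma_Main1_Rmk}, which the paper later uses in Lemma \ref{Lemma_Main2}, ii. (the term $\|\mathbb{E}_\ve\Phi^{ii}_{\Ss^\UU}(\ve)\|_{\tilde X^{ii}}$ there must be taken in $\tilde X^{ii}$ for Proposition \ref{Prop_Main2}, ii. to apply). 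So as a self-contained proof of the lemma your argument is complete; as a drop-in replacement for the paper's proof it would require re-deriving Remark \ref{Lemma_Main1_Rmk} separately.
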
 
\begin{remark}\label{Lemma_Main1_Rmk}
		For $\Phi^{ii}$, the previous statement can be strengthened to 
	$$
	\omega ( G_{Rad}; \Ss^\UU) \le \, \mathbb{E}_\ve \, \big \| \Phi^{ii} (\ve) \big \|_{\tilde X^{ii}},
	$$
	where  $\tilde X^{ii} = \s^{ {\tilde k}, n } \otimes_{ \mathfrak{S}_2^{w-cb} } \s^{ {\tilde k}, n}$. Recall Definition \ref{Def_cbWeakSchatten} for this last norm.
\end{remark}
{
\begin{proof}
	The proof of both items in the lemma follows the same structure. We start with the bound regarding $\Phi^i$:
	
	Recalling Proposition \ref{DefvalueStrat_MROQG_GRad4}:
	\begin{align*}
	\omega(G_{Rad}; \Ss^{\UU}) 
 = \, \mathbb{E}_\ve \,     \Big\| \frac{1}{n^2}  \sum_{i,j} \ve_{ij}   (\la i|\tilde V_\ve \otimes \la j|\tilde W_\ve ) \, (V|ij\ra \otimes  W_\ve ) \,   |\varphi  \ra  \Big\|^2_{\ell_2^{ r^2}}.
	\end{align*}
	
	We bound this quantity as follows:
	
	\begin{align*}
	\omega(G_{Rad}; \Ss^{\UU}) 
	& = \, \mathbb{E}_\ve \,     \left\| \frac{1}{n^2}  \sum_{i,j} \ve_{ij}   (\la i|\tilde V_\ve \otimes \la j|\tilde W_\ve ) \, (V|ij\ra \otimes  \id_{\ell_2^{\tilde k}} ) \,   (\id_{\ell_2^{k}} \otimes W_\ve)\, |\varphi  \ra   \right\|^2_{\ell_2^{ r^2}}
	\\
	& \le \, \mathbb{E}_\ve \, \sup_{ |\varphi \ra \in B_{\ell_2^{k \tilde k}} }    \left\| \frac{1}{n^2}  \sum_{i,j} \ve_{ij}   (\la i|\tilde V_\ve \otimes \la j|\tilde W_\ve ) \, (V|ij\ra \otimes  \id_{\ell_2^{\tilde k^2}} ) \,   |\varphi  \ra  \right\|^2_{\ell_2^{r^2}}
	\\
	& =\, \mathbb{E}_\ve \, \sup_{ |\varphi \ra \in B_{\ell_2^{k \tilde k }} }     \left\|   \Phi^{i} (\ve)  (|\varphi\ra) \right\|^2_{\ell_2^{r^2}} = \, \mathbb{E}_\ve \,  \left\|   \Phi^{i} (\ve)  \right\|^2_{M_{r^2 , k \tilde k}} 
	\\
	&\le\, \mathbb{E}_\ve \,  \left\|   \Phi^{i} (\ve)  \right\|_{M_{ r^2 , k \tilde k }} \equiv  \, \mathbb{E}_\ve \,  \left\|   \Phi^{i} (\ve)  \right\|_{X^i} 
	.
	\end{align*}
	 {The inequality in the last line holds since $ \left\|   \Phi^{i} (\ve)  \right\|_{M_{ r^2 , k \tilde k }}\le 1 $ for any $\ve\in \calQ_{n^2}$. This can be  checked by direct calculation or, alternatively, as a consequence of  Remark \ref{Rmk_normalizationInt1/2}.}

	For $\Phi^{ii}$, we prove  {the} stronger result  {stated in Remark \ref{Lemma_Main1_Rmk}.} That is, considering  the map $\Phi^{ii}$ taking values on the space $\tilde X^{ii} =\s^{\tilde k, n } \otimes_{\mathfrak{S}^{w-cb}_2} \s^{\tilde k, n}$, we show that:
		\beq\label{Bound2_Lemma_Main1}
	\omega ( G_{Rad}; \Ss^\UU) \le \, \mathbb{E}_\ve \, \big \| \Phi^{ii} (\ve) \big \|_{\tilde X^{ii}}.
	\eeq
	
	Since the norm in $\tilde X^{ii}$ is smaller than in $X^{ii}$, recall Proposition \ref{Prop_RelNorms}, the statement of the lemma is also true. Following the   proof of the first item, we start bounding:
	
	\begin{align*}
	\omega(G_{Rad}; \Ss^{\UU}) 
	& = \, \mathbb{E}_\ve \,     \left\| \frac{1}{n^2}  \sum_{i,j} \ve_{ij}   (\la i|\tilde V_\ve \otimes \la j|\tilde W_\ve ) \, (V|ij\ra \otimes  W_\ve ) \,   |\varphi  \ra  \right\|^2_{\ell_2^{r^2}}
	\\
	& \le \, \mathbb{E}_\ve \,   \sup_{ \tilde V, \tilde W \in B_{M_{n r,\tilde k}}}   \left\| \frac{1}{n^2}  \sum_{i,j} \ve_{ij}   (\la i|\tilde V\otimes \la j|\tilde W ) \, (V|ij\ra \otimes  W_\ve ) \,  |\varphi  \ra  \right\|^2_{\ell_2^{r^2}}
\\
 & = \, \mathbb{E}_\ve \, \sup_{ \begin{subarray}{c} \tilde V, \tilde W \in B_{M_{n r,\tilde k}}   \end{subarray}}
	\left\| (  \tilde V  \otimes  \tilde W ) \left( \frac{1}{n^2}  \sum_{i,j} \ve_{ij}\,  (\la i |  \otimes \la j | \otimes V |ij\ra \otimes W_\ve) \, |\varphi \ra \right) \right\|^2_{\ell_2^{r^2}} 
	\\
	& = \, \mathbb{E}_\ve \, \sup_{ \begin{subarray}{c} \tilde V, \tilde W \in B_{M_{n r,\tilde k}}  \end{subarray}}
	\left\|   (  \tilde V  \otimes  \tilde W )    \left(  \Phi^{ii} (\ve)   \right) \right\|^2_{\ell_2^{r^2}}
		\\& \hspace{-0.65cm} \!  \stackrel{\text{(Lemma \ref{lemma_CharacSigma} )}}{\le} \, \mathbb{E}_\ve \,  \left\|   \Phi^{ii} (\ve)  \right\|^2_{\s^{\tilde k, n } \otimes_{\mathfrak{S}^{w-cb}_2} \s^{\tilde k, n}}
	\\
	& \le \, \mathbb{E}_\ve \,  \left\|   \Phi^{ii} (\ve)  \right\|_{\s^{\tilde k, n } \otimes_{\mathfrak{S}^{w-cb}_2} \s^{\tilde k, n}} 
	\equiv  \, \mathbb{E}_\ve \,  \left\|   \Phi^{ii} (\ve)  \right\|_{ \tilde X^{ii}}.
	\end{align*}
 {As before, the last inequality is true given that $\left\|   \Phi^{ii} (\ve)  \right\|_{\s^{\tilde k, n } \otimes_{\mathfrak{S}^{w-cb}_2} \s^{\tilde k, n}} \le 1$ for any $\ve \in \calQ^{n^2}$. Again, this fact can be shown by direct computation (see Remark \ref{Rmk_normalizationInt1/2} for an alternative proof).}
\end{proof}
}

The regularity of   {the} maps  {$\Phi^{i(ii)}$} can be characterized by parameters $  \sigma^{i}_{\Ss^\UU}:=  \sigma_{\Phi^{i}}  $ and $  \sigma^{ii}_{\Ss^\UU}:=  \sigma_{\Phi^{ii}}$ -- recall Definition \ref{Def_sigma}. More explicitly:
\beq \label{Sigma_i.}
\sigma^{i(ii)}_{\Ss^\UU} =  \log(n^2)\, \mathbb{E}_\ve\,  \left(\sum_{i,j} \| \partial_{ij}  \Phi^{i {(ii)}} (\ve)  \|_{ X^{i(ii)} }^2  \right)^{1/2}.
\eeq
In the case of an arbitrary (possibly \emph{non-pure}) strategy $\Ss $, we can assign parameters $ \sigma^{i}_\Ss$, $\sigma^{ii}_\Ss$ with the simple prescription:
$$
\sigma^{i(ii)}_{\Ss } :=\inf_{\begin{subarray}{c}\Ss^\UU   \\  \text{purifying } \Ss \end{subarray}}  \sigma^{i(ii)}.
$$
 {From now on, we omit the subindex specifying the strategy, which is always determined by the context, and refer to these parameters as $\sigma^i$, $\sigma^{ii}$.}

The above expressions for $\sigma^{i}$, $\sigma^{ii}$ can be bounded by the easier expressions appearing in the introduction. See Appendix \ref{Appendix_1} for details. In Equation \eqref{Sigma_i.} the analytic nature of these parameters is clearer while the approximate expressions in Section \ref{Sec1} are closer to an operational interpretation of them.


With   definition \eqref{Sigma_i.} at hand and taking into account  Corollary \ref{Cor1}, we can obtain:
\begin{lemma}\label{Lemma_Main2}
		For any strategy $\Ss^\UU$,
		\begin{enumerate}[i.]
			\item $$
			\omega(G_{Rad}; \Ss^{\UU}) 
			\le  \big \|\mathbb{E}_\ve  \Phi^{i} (\ve) \big\|_{X^i} + C \  \sigma^{i} \ {\mathrm{T}^{(n^2)}_2}  \left( X^i \right),
			$$
			\item $$
			\omega(G_{Rad}; \Ss^{\UU}) 
			\le  \big \|\mathbb{E}_\ve  \Phi^{ii} (\ve) \big\|_{\tilde X^{ii}} + C \  \sigma^{ii} \ {\mathrm{T}^{(n^2)}_2}  \left( X^{ii} \right).
			$$
		\end{enumerate}
\end{lemma}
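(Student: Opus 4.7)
The plan is to chain the two main ingredients already developed in the paper: Lemma~\ref{Lemma_Main1} (together with its strengthening in Remark~\ref{Lemma_Main1_Rmk}), which transfers the game value $\omega(G_{Rad};\Ss^\UU)$ to the expected norm of the assignment $\Phi^{i(ii)}_{\Ss^\UU}$, and Corollary~\ref{Cor1}, which splits that expected norm into a ``mean'' part and a ``fluctuation'' part controlled by the type-2 constant and by $\sigma$. In both parts, $\sigma_{\Phi^{i(ii)}_{\Ss^\UU}}$ coincides with $\sigma^{i(ii)}_{\Ss^\UU}$ by construction, so once the norms on the right-hand side are tracked carefully the bounds will match the stated ones.

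For part~(i) this is essentially automatic. Lemma~\ref{Lemma_Main1} gives $\omega(G_{Rad};\Ss^\UU) \le \mathbb{E}_\ve \|\Phi^i_{\Ss^\UU}(\ve)\|_{X^i}$, and plugging $\Phi^i_{\Ss^\UU}$ into Corollary~\ref{Cor1} with $m=n^2$ and $X = X^i$ yields the claimed inequality directly.

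Part~(ii) needs one delicate extra step. Remark~\ref{Lemma_Main1_Rmk} upgrades Lemma~\ref{Lemma_Main1} to the sharper inequality $\omega(G_{Rad};\Ss^\UU) \le \mathbb{E}_\ve \|\Phi^{ii}_{\Ss^\UU}(\ve)\|_{\tilde X^{ii}}$ involving the smaller weak-cb norm $\tilde X^{ii}$, and the stated bound keeps the mean term in precisely that norm (this is what will later allow a nontrivial control of $\|\mathbb{E}_\ve \Phi^{ii}\|_{\tilde X^{ii}}$). However, we have no usable estimate for $\mathrm{T}_2^{(n^2)}(\tilde X^{ii})$, whereas Proposition~\ref{Type_Inttheta} does give $\mathrm{T}_2^{(n^2)}(X^{ii}) \lesssim_{\log} n^{3/4}$ for the interpolation norm. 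The remedy is to avoid invoking Corollary~\ref{Cor1} as a black box and instead decouple its two internal steps. First, apply Pisier's inequality (Lemma~\ref{lemmaPisier} with $p=1$) \emph{directly in} $\tilde X^{ii}$ to obtain
$$\mathbb{E}_\ve \|\Phi^{ii}(\ve)\|_{\tilde X^{ii}} \le \big\|\mathbb{E}_\ve \Phi^{ii}(\ve)\big\|_{\tilde X^{ii}} + C\log(n^2)\,\mathbb{E}_{\ve,\tilde\ve}\Big\|\sum_{ij}\tilde\ve_{ij}\,\partial_{ij}\Phi^{ii}(\ve)\Big\|_{\tilde X^{ii}}.$$
Then enlarge the inner norm on the right-hand side to $\|\cdot\|_{X^{ii}}$ via Proposition~\ref{Prop_RelNorms} (equivalently \eqref{Eq_RelNorms}), and only at that point invoke the definition of the type-2 constant with $n^2$ vectors in $X^{ii}$ to dominate the Rademacher average by $\mathrm{T}_2^{(n^2)}(X^{ii})\,\mathbb{E}_\ve\big(\sum_{ij}\|\partial_{ij}\Phi^{ii}(\ve)\|_{X^{ii}}^2\big)^{1/2}$. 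Since $\sigma^{ii}_{\Ss^\UU}$ is defined with the $X^{ii}$-norm on derivatives, this last quantity equals $\sigma^{ii}_{\Ss^\UU}\,\mathrm{T}_2^{(n^2)}(X^{ii})/\log(n^2)$, and absorbing the $\log(n^2)$ into the prefactor gives the stated inequality.

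The main obstacle is exactly this interplay of norms in part~(ii): applying Corollary~\ref{Cor1} directly in $\tilde X^{ii}$ would leave the inaccessible factor $\mathrm{T}_2^{(n^2)}(\tilde X^{ii})$, while applying it in $X^{ii}$ would destroy the sharper mean term needed later. The Pisier-then-enlarge-then-type-2 derivation sketched above threads between these two options; once that is in place, the rest is bookkeeping of constants.
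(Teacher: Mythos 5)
Your proof is correct and follows essentially the same route as the paper: part~(i) is the direct chaining of Lemma~\ref{Lemma_Main1} with Corollary~\ref{Cor1}, and for part~(ii) the paper likewise applies Pisier's inequality (Lemma~\ref{lemmaPisier}, $p=1$) in the $\tilde X^{ii}$ norm, then enlarges only the fluctuation term to the $X^{ii}$ norm via Proposition~\ref{Prop_RelNorms} before invoking the type-2 constant. You have in fact put your finger on precisely the norm-juggling subtlety that the paper's own proof terms ``a small detour.''
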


\begin{comment}
Notice the change of norms in the second item of the lemma.  This refinement is needed later on in order to obtain Proposition \ref{Prop_Main2} below. 
\end{comment}

 {
\begin{proof}
	
	The first item is a direct consequence of Corollary \ref{Cor1} applied to the bound  in Lemma \ref{Lemma_Main1}, i. 
	
	The second item proceeds similarly but with a small detour. Using now Pisier's inequality, Lemma \ref{lemmaPisier} (with $p=1$ and a trivial triangle inequality, as in the proof of Corollary \ref{Cor1}), in the stronger inequality \eqref{Bound2_Lemma_Main1} we obtain
	$$
	\mathbb{E}_\ve \,  \left\|   \Phi^{ii} (\ve)  \right\|_{ \tilde X^{ii} } \le  \,  \left\| \mathbb{E}_\ve   \Phi^{ii} (\ve)  \right\|_{\tilde X^{ii} }  +  C  \, \log(n) \, \mathbb{E}_{\ve,\tilde \ve} \left\|  \sum_{k,l=1}^n  \tilde \ve_{kl}  \partial_{kl} \Phi^{ii}(\ve)   \right\|_{ \tilde X^{ii} }.
	$$
	
	Now, according to Proposition \ref{Prop_RelNorms}, we can upper bound the last summand above changing the norm $\tilde X^{ii}$ by $X^{ii}$. Considering  that  {(recall again Comment \ref{Kahane})}
		$$
	\mathbb{E}_{\ve,\tilde \ve} \Big\|  \sum_{k,l=1}^n  \tilde \ve_{kl}  \partial_{kl} \Phi^{ii}(\ve)   \Big\|_{X^{ii}} 
	\lesssim \mathrm{T}_2^{(n^2)} (X^{ii})  \ \ \mathbb{E}_\ve \,  \Big (   \sum_{k,l=1}^n  \|  \partial_{kl} \Phi^{ii}(\ve)\|_{X^{ii}}^2   \Big)^{1/2},
	$$
	we have:
	$$
	\omega(G_{Rad};\Ss^\UU) \le    
	  \,  \left\| \mathbb{E}_\ve   \Phi^{ii} (\ve)  \right\|_{\tilde X^{ii} }  
	+  C  \, \log(n)  \, \mathrm{T}_2^{(n^2)} (X^{ii})  \ \ \mathbb{E}_\ve \,  \left (   \sum_{k,l=1}^n  \|  \partial_{kl} \Phi^{ii}(\ve)\|_{X^{ii}}^2   \right)^{1/2}.
	$$
	We obtain Lemma \ref{Lemma_Main2}, ii., identifying $\sigma^{ii}$ above.

\end{proof}

}

Lemma \ref{Lemma_Main2} allows us to somehow exchange the lack of control on the behaviour of a general strategy by the control of some properties of the Banach spaces involved. Bounding the quantities appearing there, we obtain our main result:

\begin{theorem}[Formal statement of Theorem \ref{mainThm}] \label{mainThm_2}
	Given an arbitrary (possibly non-pure)  strategy   $\Ss \in \mathfrak{S}_{s2w;\tilde k, k}$,
	\begin{enumerate}[I.]
		\item 
		$$
		\omega(G;\Ss)  \le  C_1 +   C_2 \ {\sigma^i} \, \log^{1/2}( k \tilde k ) +  O\left(\frac{1}{n^{1/2}} \right)  ;
		$$
		\item   
				$$
		\omega(G;\Ss)	  \le  \tilde C_1 + C_3 \ \tilde \sigma^{ii}  \, \log^{1/2} (n k \tilde k )   +  O \left(\frac{1}{n^{1/2}} + \frac{\log(n)  \log^{1/2}( n k \tilde k )}{n} \right)
		,
		$$
		where we have denoted  $\tilde \sigma^{ii} = n^{3/4} \log(n) \, \sigma^{ii}$.
	\end{enumerate}
	Above, $C_1,\, \tilde C_1 <1, \, C_2,\, C_3 $ are positive constants.
	
\end{theorem}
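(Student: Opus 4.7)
The plan is to combine the reduction lemmas \ref{Lemma_Main1}--\ref{Lemma_Main2} with the type-constant estimates from Section \ref{Sec2.4.3}, add a separate bound on the deterministic mean $\|\mathbb{E}_\ve \Phi\|$, and lift the resulting bound for pure strategies to arbitrary strategies via Lemmas \ref{Lemma_ClassCom}--\ref{Lemma_Purifications}. Starting from a pure strategy $\Ss^\UU \in \mathfrak{S}^\UU_{s2w;\tilde k',k}$, Lemma \ref{Lemma_Main2} gives
\begin{align*}
\omega(G_{Rad};\Ss^\UU) &\le \big\|\mathbb{E}_\ve \Phi^{i}_{\Ss^\UU}(\ve)\big\|_{X^i} + C\,\sigma^{i}_{\Ss^\UU}\,\mathrm{T}_2^{(n^2)}(X^{i}),\\
\omega(G_{Rad};\Ss^\UU) &\le \big\|\mathbb{E}_\ve \Phi^{ii}_{\Ss^\UU}(\ve)\big\|_{\tilde X^{ii}} + C\,\sigma^{ii}_{\Ss^\UU}\,\mathrm{T}_2^{(n^2)}(X^{ii}).
\end{align*}
For the second summand I would plug in the estimates from Section \ref{Sec2.4.3}: in part~I, $X^i = M_{r^2,k\tilde k'}$ and \eqref{Eq4_TypeConsts_1} gives $\mathrm{T}_2^{(n^2)}(X^i)\lesssim \log^{1/2}(\min(r^2,k\tilde k'))\le \log^{1/2}(k\tilde k')$, which is independent of the Stinespring dimension $r$; in part~II, Equation \eqref{Type_Int1/2} yields $\mathrm{T}_2^{(n^2)}(X^{ii}) \lesssim n^{3/4}\log^{1/2}(n\tilde k')\log(n)$, which is precisely what is absorbed in the definition $\tilde\sigma^{ii}_\Ss = n^{3/4}\log(n)\,\sigma^{ii}_\Ss$ appearing in the theorem.

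The main obstacle is the first summand: bounding $\|\mathbb{E}_\ve \Phi^{i(ii)}_{\Ss^\UU}\|$ by an absolute constant strictly less than one, up to additive error $O(n^{-1/2})$. A naive triangle inequality yields only $1$ and is useless, so the bound must exploit the Rademacher averaging. I would expand $\tilde V_\ve\otimes\tilde W_\ve$ into its Walsh--Fourier series on $\calQ_{n^2}$ and observe that $\mathbb{E}_\ve[\ve_{ij}(\tilde V_\ve\otimes\tilde W_\ve)]$ extracts the degree-one Fourier coefficient at $\{ij\}$. Contractivity of $\tilde V_\ve\otimes\tilde W_\ve$, of $V$, and normalization of $|\varphi\ra$ then impose a Parseval-type control on the sum of these degree-one coefficients; combining this with the explicit form of $\Phi^{i}$ (respectively of $\Phi^{ii}$, for which Lemma \ref{lemma_CharacSigma} gives a dual characterization of the interpolation norm as a supremum over rectangular matrices) should produce absolute constants $C_1,\tilde C_1<1$, with the $O(n^{-1/2})$ additive error arising from the $n$ diagonal terms with coincident $i=j$ indices. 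The interpolation case of part~II is the most delicate, since the norm on $\tilde X^{ii}$ must be handled through Lemma \ref{lemma_CharacSigma} and the passage via \eqref{Eq_RelNorms}.

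Finally, I would pass from pure strategies in $\mathfrak{S}^\UU_{s2w;\tilde k',k}$ to arbitrary $\Ss \in \mathfrak{S}_{s2w;\tilde k,k}$ using Lemmas \ref{Lemma_ClassCom} and \ref{Lemma_Purifications}: the polynomial relation $\tilde k'\le n^2 k\tilde k^2$ from \eqref{Corresponde_k-k'1} changes $\log(k\tilde k')$ only by an additive $O(\log n)$, which can be absorbed either in the advertised constants or in the additive error term. Taking the infimum over purifications of $\Ss$ replaces $\sigma^{i(ii)}_{\Ss^\UU}$ by $\sigma^{i(ii)}_\Ss$, yielding the statement of Theorem \ref{mainThm_2}.
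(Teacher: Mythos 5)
Your overall architecture---Lemma~\ref{Lemma_Main2} for the two-term bound, the type estimates \eqref{Eq4_TypeConsts_1} and \eqref{Type_Int1/2} for the second summand, and Lemmas~\ref{Lemma_ClassCom}--\ref{Lemma_Purifications} to pass from pure to general strategies---matches the paper, and you are right that bounding $\|\mathbb{E}_\ve\Phi^{i(ii)}_{\Ss^\UU}\|$ away from $1$ is the only genuinely nontrivial step. But your sketch of that step has a real gap.

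The Walsh--Fourier/Parseval route is not the paper's, and as sketched it does not close. Extracting degree-one Fourier coefficients from $\tilde V_\ve\otimes\tilde W_\ve$ is fine, but the ``Parseval-type control'' you invoke only controls the sum of squared Hilbert--Schmidt norms of the coefficients, whereas contractivity of $\tilde V_\ve\otimes\tilde W_\ve$ bounds the operator norm; in the relevant dimensions (which scale with $r$ and $\tilde k'$) these differ by a factor that is far too large, so you cannot conclude that the degree-one mass is $o(1)$ this way. Also, your explanation of the $O(n^{-1/2})$ error as arising ``from the $n$ diagonal terms with $i=j$'' is wrong---there is nothing special about $i=j$ in $G_{Rad}$, and the index set $\calQ_{n^2}$ treats all pairs symmetrically. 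In the paper's Proposition~\ref{Prop_Main2} the $n^{-1/2}$ error actually comes from $\mathbb{E}_\ve\|\frac{1}{n^2}\sum_{ij}\ve_{ij}\,|i\ra\otimes|j\ra\|_{\ell_1^n\otimes_\ve\ell_1^n}\lesssim n^{-1/2}$, i.e.\ from a random-matrix operator-norm estimate. The actual argument for part~i is: write $\|\mathbb{E}_\ve\Phi^i\|=\mathbb{E}_\ve|\la\bar\xi_\ve|\bar\varphi\ra|$ for unit vectors $\bar\xi_\ve$, $\bar\varphi$, pick a near-optimal $\ve^*$, decompose $\bar\varphi=\bar\xi_{\ve^*}+(\bar\varphi-\bar\xi_{\ve^*})$, apply Cauchy--Schwarz, then bound $\mathbb{E}_\ve|\la\bar\xi_\ve|\bar\xi_{\ve^*}\ra|$ by Grothendieck's inequality (this is where the $\ell_1^n\otimes_\ve\ell_1^n$ norm and the $n^{-1/2}$ appear), while the second piece is absorbed algebraically against $\|\mathbb{E}_\ve\Phi^i\|$ itself to produce $C_1<1$. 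You also omit the part~ii reduction entirely: the paper shows $\|\mathbb{E}_\ve\Phi^{ii}\|_{\tilde X^{ii}}$ is, via the characterization of Lemma~\ref{lemma_CharacSigma}, a supremum of $\omega^{1/2}(G_{Rad};\cdot)$ over strategies whose second round of local operations is $\ve$-independent, and therefore has $\sigma^i\sim\log(n)/n$; part~ii then follows from part~i rather than from a fresh estimate.
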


 {The following proposition precisely takes care of bounding part of the terms appearing in Lemma \ref{Lemma_Main2}, as a key step to prove the theorem.}

 {
\begin{proposition} \label{Prop_Main2}	For any pure strategy $\Ss^\UU \in \mathfrak{S}_{s2w; {\tilde k}, k}$:
	\begin{enumerate}[i.]
		\item $$
		\big \|\mathbb{E}_\ve  \Phi^{i} (\ve) \big\|_{X^i}
		\le  \frac{3}{4} + \frac{C}{\sqrt{n}}.
		$$
		
		\item $$
		\big \|\mathbb{E}_\ve  \Phi^{ii} (\ve) \big\|_{ \tilde X^{ii} }
		\le  \frac{\sqrt{3}}{2} + \frac{C}{\sqrt{n}} + C' \frac{\log (n) \log^{1/2}(k  {\tilde k})}{ n},
		$$
	\end{enumerate}
	where $C,\, C'$ are universal constants. 
\end{proposition}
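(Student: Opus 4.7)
The plan for both parts is to compute the expected maps $\mathbb{E}_\ve \Phi^{i(ii)}_{\Ss^\UU}$ explicitly and bound them in the respective Banach norms by combining duality, contractivity of the operators making up $\Ss^\UU$, and the cancellations coming from averaging over $\calQ_{n^2}$. The starting point is the elementary Fourier identity
\[
\mathbb{E}_\ve \ve_{ij} f(\ve) \,=\, \mathbb{E}_\ve \ve_{ij}\partial_{ij} f(\ve) \,=\, \hat f\bigl(\{(i,j)\}\bigr),
\]
which shows that after multiplication by $\ve_{ij}$ only the singleton Fourier component of $f$ survives the uniform average. Applied to $f = \la i|\tl V_\ve \otimes \la j|\tl W_\ve$ (Part i) or $f = (V|ij\ra \otimes W_\ve)|\varphi\ra$ (Part ii), this collapses each summand of $\mathbb{E}_\ve \Phi^{i(ii)}$ to a single ``discrete-derivative'' term whose norm is controlled by the isometric structure of the strategy.

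For Part i, I would dualize the operator norm against unit vectors $|u\ra \in \ell_2^{k\tilde k'}$ and $|v\ra \in \ell_2^{r^2}$, using the tensor factorization $A_{ij}(\ve) = (\la i|\tl V_\ve V|ij\ra) \otimes (\la j|\tl W_\ve)$ of each summand of $\Phi^i$. A Cauchy--Schwarz on the expectation followed by one on the sum over $ij$ yields
\[
\bigl\|\mathbb{E}_\ve \Phi^i(\ve)\bigr\|_{M_{r^2,\, k\tilde k'}} \,\le\, \frac{1}{n}\Bigl(\mathbb{E}_\ve \sup_{u,v}\sum_{ij}\bigl|\la v|A_{ij}(\ve)|u\ra\bigr|^2 \Bigr)^{1/2}.
\]
The isometric identity $\sum_i(\la i|\tl V_\ve)^*(\la i|\tl V_\ve) = \id$ and the bound $\sum_{ij}(V|ij\ra)^*(V|ij\ra) = V^*V \preceq \id$ collapse the right-hand side to a dimension-free constant. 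Pushing this constant strictly below $1$ to the claimed $3/4$ requires a finer use of the singleton-coefficient constraint: Parseval forces $\sum_{(i,j)}\|\hat f(\{(i,j)\})\|^2 \le \mathbb{E}_\ve\|f(\ve)\|^2 - \|\hat f(\emptyset)\|^2$, and exploiting this (rather than the trivial per-term bound by $1$) yields the desired gap.

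For Part ii, I would first apply Lemma \ref{lemma_CharacSigma} to rewrite the target $\mathfrak{S}_2^{w\text{-}cb}$ norm as
\[
\bigl\|\mathbb{E}_\ve \Phi^{ii}(\ve)\bigr\|_{\tilde X^{ii}} \,=\, \sup_{r \in \mathbb{N}}\,\sup_{g, h \in B_{M_{nr, \tilde k'}}} \bigl\|(h\otimes g)\bigl(\mathbb{E}_\ve \Phi^{ii}(\ve)\bigr)\bigr\|_{\ell_2^{r^2}}.
\]
Expanding $\Phi^{ii}$ and passing $h \otimes g$ through the sum turns the right-hand side into the $\ell_2^{r^2}$-norm of a vector of $\ve_{ij}$-weighted inner products of the form $\la (h\otimes g)_{ij}, (V|ij\ra \otimes W_\ve)|\varphi\ra\ra$. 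A Cauchy--Schwarz over the indices $(i,j)$ together with the contractivity of $V, W_\ve, g, h$ and the unit norm of $|\varphi\ra$ gives the basic estimate; the supremum over $g, h$ of dimension up to $nr$ is then handled by a metric entropy / covering argument on the unit sphere, which produces the logarithmic correction $\log(n)\log^{1/2}(k\tilde k')/n$ appearing in the statement. The constant $\frac{\sqrt{3}}{2}$ arises from the same singleton-Fourier cancellation as in Part i.

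The main obstacle in both parts is obtaining constants \emph{strictly} below $1$. Standard triangle inequality or single-application Cauchy--Schwarz yield only the uninformative bound $1$, and pushing to the explicit $3/4$ (resp.\ $\frac{\sqrt{3}}{2}$) requires a genuinely quantitative use of the hypercube average: each singleton Fourier coefficient $\hat f(\{(i,j)\})$ is subject to a Parseval-type constraint, and when summed against the family $\{V|ij\ra\}_{ij}$ (whose total mass is controlled by $V^*V \preceq \id$), the resulting interaction is strictly bounded away from $1$. Quantifying this interaction carefully, and ensuring that the lower-order $O(1/\sqrt n)$ corrections do not close the gap, is the technical heart of the argument.
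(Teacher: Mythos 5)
Your sketch diverges from the paper's proof at the precise points where the quantitative content lives, and in both parts it leaves the decisive step unjustified.

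\textbf{Part i.} Your idea is to observe that $\mathbb{E}_\ve\, \ve_{ij} f_{ij}(\ve)$ extracts the degree-one Fourier coefficient $\hat f_{ij}(\{(i,j)\})$ and then to appeal to Parseval. But Parseval only controls $\sum_S \|\hat f_{ij}(S)\|^2 \le \mathbb{E}_\ve\|f_{ij}(\ve)\|^2$, which bounds each singleton coefficient separately; it does not by itself give any cancellation across the $n^2$ summands in $\frac{1}{n^2}\sum_{ij}\hat f_{ij}(\{(i,j)\})$, and the trivial per-term bound is exactly $1$, not $3/4$. You acknowledge this gap but do not close it. The paper takes a different route: it writes $\|\mathbb{E}_\ve\Phi^i\|$ as $\mathbb{E}_\ve|\langle\overline\xi_\ve|\overline\varphi\rangle|$ for suitable unit vectors, picks a fixed $\ve^*$ on which the inner product is at least $\|\mathbb{E}_\ve\Phi^i\|$, decomposes $|\overline\varphi\rangle = |\overline\xi_{\ve^*}\rangle + (|\overline\varphi\rangle-|\overline\xi_{\ve^*}\rangle)$, controls the cross-correlation $\mathbb{E}_\ve|\langle\overline\xi_\ve|\overline\xi_{\ve^*}\rangle|$ by Grothendieck's inequality plus the standard $O(\sqrt n)$ operator-norm estimate for a sign matrix, and controls the remainder by the elementary tangent bound $\sqrt{2(1-x)} \le \tfrac{7}{4} - \tfrac{4}{3}x$. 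Rearranging yields the self-improving inequality that forces $\|\mathbb{E}_\ve\Phi^i\| \le \tfrac{3}{4} + O(1/\sqrt n)$. None of this bootstrap or the Grothendieck step appears in your argument, and it is the mechanism that actually produces $3/4$.

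\textbf{Part ii.} You correctly start from Lemma \ref{lemma_CharacSigma}, but the covering-number argument you propose over $g,h\in B_{M_{nr,\tilde k'}}$ cannot work as stated: the supremum in that lemma is over all $r\in\mathbb{N}$, so the spheres you would cover are of unbounded dimension, and a metric-entropy bound would not be dimension-free. The paper instead recognizes that after pushing the expectation inside and applying Cauchy--Schwarz in the form $\mathbb{E}_\ve\phi \le (\mathbb{E}_\ve\phi^2)^{1/2}$, the right-hand side is exactly $\sup_{r,\tilde V,\tilde W}\omega^{1/2}(G_{Rad};\{\tilde V,\tilde W,V,W_\ve,|\varphi\rangle\}_\ve)$ over strategies whose \emph{second-round} operators $\tilde V,\tilde W$ are $\ve$-independent. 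By Example \ref{Example_sigma1} those strategies have $\sigma^i\approx\log(n)/n$, so one can invoke Part I of Theorem \ref{mainThm_2} (which depends only on Part i of this proposition) and the elementary bound $(1+x)^{1/2}\le 1+x/2$. This bootstrapping of the already-proved Part I is the key structural idea that your covering argument misses, and it is also what explains why the additional correction term $\log(n)\log^{1/2}(k\tilde k')/n$ appears: it is inherited directly from the $\sigma^i$ of the $\ve$-independent strategies, not from an entropy bound.
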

}

 {The sequence leading to Theorem \ref{mainThm_2} is the following: Proposition \ref{Prop_Main2}.i $\Rightarrow$ Theorem \ref{mainThm_2}.I $\Rightarrow$ Proposition  \ref{Prop_Main2}.ii $\Rightarrow$  Theorem  \ref{mainThm_2}.II. To simplify the presentation, we will first write the proof of both statements of Theorem \ref{mainThm_2}, assuming  the corresponding statements of Proposition  \ref{Prop_Main2}.  Then we will prove Proposition \ref{Prop_Main2}, using Theorem \ref{mainThm_2}.I in the proof of statement (ii).}

\begin{proof}[ {Proof of Theorem \ref{mainThm_2}}]
	To obtain the statement of the theorem, as we already said, we start considering Lemma \ref{Lemma_Main2}. Then, we need to bound:
	\begin{enumerate}
		\item the type constants ${\mathrm{T}^{(n^2)}_2}(X^{i})$ and ${\mathrm{T}^{(n^2)}_2}(X^{ii})$. These bounds are already provided in Equations \eqref{Eq4_TypeConsts_1} and \eqref{Type_Int1/2}, respectively. We recall these bounds here for reader's convenience:
		$$
		{\mathrm{T}^{(n^2)}_2}(X^{i}) \le {\mathrm{T}_2}(X^{i})  \lesssim \log^{1/2}(k\tilde k), \qquad {\mathrm{T}^{(n^2)}_2}(X^{ii}) \lesssim  n^{3/4} \log(n) \log^{1/2} (n \tilde k) ;
		$$
		
		\item the terms $ \big \|\mathbb{E}_\ve  \Phi^{i} (\ve) \big\|_{X^i}$ and $\big \|\mathbb{E}_\ve  \Phi^{ii} (\ve) \big\|_{\tilde X^{ii}}$. These quantities are controlled by  Proposition \ref{Prop_Main2}. 
	\end{enumerate}
With this we obtain the stated bound in the case of pure strategies. Nonetheless, statements about pure strategies can be transformed into statements about general strategies taking into account the relation \eqref{Corresponde_k-k'1}. As we said at the end of Section \ref{Sec3}, this relation is polynomial in the parameters involved and therefore, the change from pure to general strategies only induces corrections by constant factors that we absorbed in the constants $C_2,\, C_3$ present in the statement. Similar considerations deal with the amount of classical communication included in $\tilde k$, in this case one has to recall Equation \eqref{Corresponde_k-k'2}. See Appendix \ref{Appendix_MainThm} for further details.
\end{proof}

 {
\begin{proof}[Proof of Proposition \ref{Prop_Main2}, i.]
	The norm in the L.H.S. of Proposition \ref{Prop_Main2}, i., is attained at unit vectors $|\varphi\ra \in \ell_2^{k \tilde k},\, |\xi\ra \in \ell_2^{ r^2}$ (independent of $\ve$)\footnote{ Recall the isometric identity $M_{r,k} \otimes_{min} M_{r,\tilde k} = M_{r^2,k\tilde k} $.}:
	$$
	\big \|\mathbb{E}_\ve  \Phi^{i} (\ve) \big\|_{M_{r ^2,k \tilde k}} = \big| \mathbb{E}_\ve\  \la \xi|\,  \Phi^{i} (\ve)\, |\varphi\ra \big| \le    \mathbb{E}_\ve\ \big|  \la \xi|\,  \Phi^{i} (\ve)\,  |\varphi\ra  \big| .
	$$
	Expanding this expression we have:
	$$
	\| \mathbb{E}_\ve \Phi^{i} (\ve) \|_{M_{ r^2,k \tilde k}} \le
	\mathbb{E}_\ve\ \big| \, \frac{1}{n^2} \sum_{i j} \ve_{ij}      \la \xi| \, \big( \la i |\tilde  V_\ve\otimes  \la j |\tilde W_\ve \big) \, \big( V \ |ij\ra  \otimes \id_{\ell_2^{\tilde k}} \big) |\varphi\ra\, \big|  = \mathbb{E}_\ve \ \big| \la \overline \xi_\ve \, |\, \overline \varphi \ra \big|,
	$$
	where we have defined the unit vectors:
	$$
	\la \overline \xi_\ve | := \frac{1}{n} \sum_{i,j} \ve_{ij}   \la ij|_{C} \otimes \la \xi| \, \big( \la i |\tilde  V_\ve\otimes  \la j |\tilde W_\ve \big) ,
	$$
	$$
	|\overline \varphi \ra := \frac{1}{n} \sum_{i,j}    |ij\ra_{C} \otimes \big( V \ |ij\ra  \otimes \id_{\ell_2^{\tilde k'}} \big) |\varphi\ra.
	$$
	
	Now, notice that there exists at least one $\ve^*$ such that $ |\la \overline \xi_{\ve^*} \, |\, \overline \varphi \ra | \ge \| \mathbb{E}_\ve \Phi^{i} (\ve) \|_{M_{ r^2,k \tilde k}}$. Consider this $\ve^*$ to rewrite $|\overline \varphi \ra = |\overline \xi_{\ve^*} \ra  + 		(|\overline \varphi \ra - |\overline \xi_{\ve^*} \ra )$. An application of Cauchy-Schwarz inequality gives us the following:
	\begin{align}\label{eq3}
	\| \mathbb{E}_\ve \Phi^{i} (\ve) \|_{M_{ r^2,k \tilde k}}  &\le \mathbb{E}_\ve\ \big| \la \overline \xi_\ve |\overline \varphi \ra \big|
	\le \mathbb{E}_\ve\ \big| \la \overline \xi_\ve |\overline \xi_{\ve^*} \ra \big|  + \big | \la \overline \varphi-\overline \xi_{\ve^*} |\overline \varphi-\overline \xi_{\ve^*} \ra\big|^{1/2}.
	\end{align}
	Now we bound both summands in the R.H.S. of the previous expression separately:
	\begin{itemize}
		\item For the second:
		\begin{align}
		\label{eq4}
		\big | \la \overline \varphi-\overline \xi_{\ve^*} |\overline \varphi-\overline \xi_{\ve^*} \ra\big|^{1/2} &\le \left( 2 ( 1-  \la \overline \xi_{\ve^*}|\overline \varphi\ra  ) \right)^{1/2} \le \left( 2(1- \| \mathbb{E}_\ve\ \Phi (\ve) \|_{M_{r^2,k \tilde k}}) \right)^{1/2} 
		\nn\\
		&\le \frac{7}{4} - \frac{4}{3}  \| \mathbb{E}_\ve\ \Phi^{i} (\ve) \|_{M_{r^2,k \tilde k}},\nn
		\end{align} 
	 {where the last inequality is based on an elementary linear approximation of  $\sqrt{2(1-x)}$ (simply by the line tangent to the function in a suitably chosen point).}
		
		\item For the first one, we will find that:
		$$
				\mathbb{E}_\ve\ \big| \la \overline \xi_\ve |\overline \xi_{\ve^*} \ra \big| = O\big(\frac{1}{\sqrt{n}}\big).
		$$
		In order to show this bound, we start observing that:
		\begin{align*}
		\mathbb{E}_\ve\ \big| \la \overline \xi_\ve |\overline \xi_{\ve^*} \ra \big|
		&= \mathbb{E}_\ve \  \Big|  \frac{1}{n^2} \sum_{ij} \ve_{ij} \ve^*_{ij}    \la \xi |\,   \big( \la i | \tilde V_\ve \, \tilde V_{\ve^*}^\dagger |i\ra \otimes  \la j |  \tilde W_\ve \, \tilde W_{\ve^*}^\dagger|j\ra  \big) \, 	 |\xi\ra   \Big|  \\
		& \le   \mathbb{E}_\ve \     \sup_{\begin{subarray}{c}
			|\xi_i \ra, |\varphi_j\ra \in B_{\ell_2^{r^2}}\\
			\text{for }i,j = 1,\ldots, n
			\end{subarray}} \ 
		\Big| \frac{1}{n^2} \sum_{ij} \ve_{ij} \ve^*_{ij}   \la \xi_i | \varphi_j\ra    \Big|. 
		\end{align*}
	 An application of the Grothendieck inequality \cite{Grothendieck53} allows us to restrict $r=1$ in the last supremum at the cost of the complex Grothendieck constant $K_G^\mathbb{C}$. Furthermore, Krivine's result that the two dimensional Grothendieck constant is equal to $\sqrt{2}$ \cite{Krivine79} allows to further restrict the supremum to the choice of signs loosing another factor of $\sqrt{2}$ (see \cite[Claim 4.7]{Regev2015} for an explicit argument). In conclusion, we have the following bound:
	$$
 \mathbb{E}_\ve \     \sup_{\begin{subarray}{c}
		|\xi_i \ra, |\varphi_j\ra \in B_{\ell_2^{r^2}}\\
		\text{for }i,j = 1,\ldots, n
\end{subarray}} \ 
\Big| \frac{1}{n^2} \sum_{ij} \ve_{ij} \ve^*_{ij}   \la \xi_i | \varphi_j\ra    \Big|
 \le 
 \sqrt{2}   K_G^\mathbb{C} \  \mathbb{E}_\ve      \sup_{\begin{subarray}{c}
		t_i, s_j \in \{\pm 1\}\\
		\text{for }i,j = 1,\ldots, n
\end{subarray}} \ 
 \frac{1}{n^2} \sum_{ij} \ve_{ij} \ve^*_{ij}  t_i s_j   
$$

But the last quantity is of order $O(1/\sqrt{n})$. One can understand this as a consequence of  Hoeffding's inequality \cite{Hoeffding63}: for each choice $( s_i, t_j )_{i,j}  \in \{\pm 1\}^n \times \{\pm1\}^n $ the probability that $\frac{1}{n^2} \sum_{ij} \ve_{ij} \ve^*_{ij}  t_i s_j $ is larger than $2 /\sqrt{n}$ is upper bounded by $e^{-2n}$. Then, a union bound over the $2^{2n}$ possible sequences $( s_i, t_j )_{i,j} \in \{\pm 1\}^n \times \{\pm1\}^n$ concludes the argument. 
	\end{itemize}

	Joining everything in \eqref{eq3} we obtain the bound in Proposition \ref{Prop_Main2}, i.
	
\end{proof}

\begin{proof}[Proof Proposition \ref{Prop_Main2}, ii.]
	{Remember that we can already use} Theorem \ref{mainThm_2}.I  {here}. It turns out that Proposition \ref{Prop_Main2}, ii. is a consequence of this first part of our main theorem.	
	
	The key idea is understanding the norm $	\big \|\mathbb{E}_\ve  \Phi^{ii.} 	\big \|_{\tilde X^{ii}}$ as the optimization over some family of  strategies with  small enough parameter $\sigma^i$. 	Concretely, considering the characterization of the norm $\tilde X^{ii}$ given in Lemma \ref{lemma_CharacSigma}, we can prove that
	\beq\label{Prop3.ii._eq1}
	\big \|\mathbb{E}_\ve  \Phi^{ii} (\ve) \big\|_{\tilde X^{ii}}
	\le \, \sup_{\begin{subarray}{c}
			r \in \mathbb{N}\\
			    \tilde V, \, \tilde W \in B_{M_{ n r , \tilde k'}}
		\end{subarray}  } \, \omega \left(G_{Rad} ;  \lbrace   \tilde V,\, \tilde W, \,  V,\, W_\ve,\, |\varphi \ra \rbrace_\ve \right)^{1/2}.
	\eeq
	
	The desired bound follows now from realizing that in the   strategies on which this optimization is performed, the second round of local operations, $\tilde V \otimes \tilde W$, is $\ve$-independent. Therefore, for these strategies, according to Example \ref{Example_sigma1}, $\sigma^{i} \approx \frac{\log(n)}{n}$, which, in conjunction with Theorem \ref{mainThm_2}, I., leads to the desired statement. To obtain the precise statement appearing there, we have considered the elementary inequality $( 1 + x )^{1/2} \le 1 + x/2$.
	
	Then, to finish, let us prove the claim \eqref{Prop3.ii._eq1}.

	Recall that, according to Lemma \ref{lemma_CharacSigma} we can write:
	\begin{align*}
		 \big \| \mathbb{E}_\ve \Phi^{ii} (\ve) \big\|_{\tilde X^{ii}}
		&= \sup_{ \begin{subarray}{c}
			r\in\mathbb{N}    \\     \tilde V, \, \tilde W \in B_{M_{ n r , \tilde k}}
			\end{subarray}     } 
		\Big \|  \, \mathbb{E}_\ve \,  ( \tilde V \otimes \tilde W)  \Big( \frac{1}{n^2}  \sum_{i,j} \ve_{ij}\,  ( \la i | \otimes \la j | )\, (V |ij\ra \otimes W_\ve) \, |\varphi \ra \Big) \Big\|_{\ell_2^{r^2}}
		\\
		&\le    \sup_{ \begin{subarray}{c}
			r\in\mathbb{N}    \\     \tilde V, \, \tilde W \in B_{M_{ n r , \tilde k}}
			\end{subarray}     } \,  \mathbb{E}_\ve\, 
		\Big \|    \frac{1}{n^2}  \sum_{i,j} \ve_{ij}\,  ( \la i |\tilde V  \otimes\la j | \tilde W )\, (V |ij\ra \otimes W_\ve) \, |\varphi \ra \Big\|_{\ell_2^{r^2}}.
		\end{align*}
	
	Furthermore, 
		considering the elementary bound
		$
		\, \mathbb{E}_\ve   \phi(\ve)
		\le \Big( \mathbb{E}_\ve   {\phi(\ve)}^2 \Big )^{\frac{1}{2}}\!,
		$
		valid for any function $\phi:\calQ_{n^2} \rightarrow \mathbb{R}$, we can finally write:
		\begin{align*}
		\big \| \mathbb{E}_\ve \Phi^{ii} (\ve) \big\|_{\tilde X^{ii}}
		&\le \sup_{ \begin{subarray}{c}
			r\in\mathbb{N}    \\     \tilde V, \, \tilde W \in B_{M_{ n r , \tilde k}}
			\end{subarray}     } \, \Big( \mathbb{E}_\ve\, 
		\Big \|    \frac{1}{n^2}  \sum_{i,j} \ve_{ij}\,  (\la i |\tilde V  \otimes \la j |\tilde W )\, (V |ij\ra \otimes W_\ve) \, |\varphi \ra \Big\|_{\ell_2^{r^2}}^2 \Big)^{\frac{1}{2}}
		\\
		& =  \sup_{\begin{subarray}{c}
			r\in\mathbb{N}    \\     \tilde V, \, \tilde W \in B_{M_{ n r , \tilde k}}
			\end{subarray}  } \, \omega \left(G_{Rad} ;  \lbrace   \tilde V,\, \tilde W, \,  V,\, W_\ve,\, |\varphi \ra \rbrace_\ve \right)^{1/2},
	\end{align*}
	as claimed.
\end{proof}

}

We make a final comment that, in some sense, connects with the next section where we will discuss possible extensions of the approach presented up to this point.

\begin{remark}\label{Rmk_normalizationInt1/2}
	The appearance of the norms $X^{i} = M_{ r,k}   {\otimes_{min} M_{r,\tilde k}  },\ X^{ii} = \s^{ {\tilde k}, n} \otimes_{  (\ve, \pi)_{ \sfrac{1}{2} }  } \s^{ {\tilde k}, n}$ above might seem, at some point, arbitrary, in the sense that we have  used these norms   {merely} to \emph{upper} bound the value $\omega(G_{Rad}, \Ss^\UU)$. Part of the motivation to consider these spaces is the fact that we are able to properly understand their type properties. But we can wonder: is any norm upper bounding $\omega(G_{Rad}, \Ss^\UU)$ a reasonable choice provided that we can control the relevant type constants? Obviously, this is not the case. Actually, in Section \ref{Sec6} we explore further this issue. By now, let us note that the chosen norms also satisfy some basic normalization conditions. In particular, it can be shown that  the elements constituting $\Phi^{i} $, $\Phi^{ii}$ are well normalized when regarded as elements in $X^i$ and $ X^{ii}$, respectively.  Concretely, for each $i,\, j \in [n]$
	$$
	 \left \| (\la i |\tl{V}_{\ve} \otimes \la j|\tl{ W}_{\ve})\, (V|ij\ra  \otimes \id_{\ell_2^{\tilde k}}) \right \|_{X^i} \le 1,
	$$
	 {for any contractive operators $\tl{V}_{\ve}, \, \tl{W}_{\ve}, \, V,\, W_\ve$ ,}	and
	$$
	 \left \| | i \ra \otimes | j \ra \otimes (V|ij\ra  \otimes W_\ve) |\varphi \ra \right \|_{ X^{ii}  } \ \le 1,
	$$
	 {for any unit vector $|\varphi\ra$ and contractive $V$, $W_\ve$.}

	The first bound is straightforward. Since $\tl{V}_{\ve} \otimes \tl{ W}_{\ve}$ and $V   \otimes \id_{\ell_2^{\tilde k}}$ are contractive operators, $\la i |\tl{V}_{\ve} \otimes \la j|\tl{ W}_{\ve}$ and  $V|ij\ra  \otimes \id_{\ell_2^{\tilde k}}$ are also contractive and  the same applies to  their composition.
		
		For the second bound, fixing $i,\, j$, we first notice that $|\tilde \varphi \ra :=  (V|ij\ra  \otimes \id_{\ell_2^{\tilde k^2}}) (\id_{\ell_2^{k}} \otimes W_\ve)|\varphi \ra$ has norm $\| |\tilde \varphi \ra \|_{\ell_2^{\tilde k^2}} \le 1$. Furthermore, considering the norm-one injections $\iota_i: \ell_2^{\tilde k}\ni |\varphi \ra \mapsto | i \ra \otimes |\varphi \ra \in \s^{\tilde k, n}$, we have that $  | i \ra \otimes | j \ra  \otimes |\tilde \varphi \ra =  \iota_i \otimes \iota_j \big( | \tilde \varphi \ra \big)$. Therefore
		$$
		\big \| | i \ra \otimes | j \ra  \otimes |\tilde \varphi \ra  \big\|_{  \s^{\tilde, n} \otimes_{  (\ve, \pi)_{ \sfrac{1}{2} }  } \s^{\tilde, n} } \le \big \| |\tilde \varphi \ra \big \|_{\ell_2^{\tilde k^2}} \ \big \|  \iota_i \otimes \iota_j: \ell_2^{\tilde k} \rightarrow  \s^{\tilde k, n} \otimes_{  (\ve, \pi)_{ \sfrac{1}{2} }  } \s^{\tilde k, n}\big \| \le 1.
		$$
		It remains to justify that, in fact,
		$$
		\big \|  \iota_i \otimes \iota_j: \ell_2^{\tilde k} \rightarrow  \s^{\tilde k, n} \otimes_{  (\ve, \pi)_{ \sfrac{1}{2} }  } \s^{\tilde k, n}\big \| \le 1.
		$$
		This can be proved recalling that $\s^{\tilde k, n} \otimes_{  (\ve, \pi)_{ \sfrac{1}{2} }  } \s^{\tilde k, n}$ is the interpolation space $ (    \s^{\tilde k, n} \otimes_{ \ve } \s^{\tilde k'\!, n}, $ $ \s^{\tilde k, n} \otimes_{  \pi  } \s^{\tilde k, n}    )_{\frac{1}{2}}$ and $\ell_2^{\tilde k^2}$ can be also regarded as the space  $ \left(    \ell_2^{\tilde k} \otimes_{ \ve } \ell_2^{\tilde k}, \ell_2^{\tilde k }\otimes_{  \pi  } \ell_2^{\tilde k}    \right)_{\frac{1}{2}}$.  { The last assertion can be shown noticing that $ \ell_2^{\tilde k} \otimes_\ve \ell_2^{\tilde k} = M_{\tilde k}$ and $\ell_2^{\tilde k} \otimes_\pi \ell_2^{\tilde k} = \s^{\tilde k}$. Given that, the isometric equivalence (recall Theorem \ref{Int_Sp's}) $(M_{\tilde k},  \s^{\tilde k} )_{1/2}  = \mathcal{S}_2^{\tilde k}=\ell_2^{\tilde k^2}$ provides the stated fact.} Then,
		\begin{align*}
				&\big \|  \iota_i \otimes \iota_j: \ell_2^{\tilde k} \rightarrow  \s^{\tilde k, n} \otimes_{  (\ve, \pi)_{ \sfrac{1}{2} }  } \s^{\tilde k, n}\big \| \hspace{-2cm} 
				\\
				 &\qquad\le 		\big \|  \iota_i \otimes \iota_j: \ell_2^{\tilde k}  \otimes_ \ve \ell_2^{\tilde k}\rightarrow  \s^{\tilde k, n} \otimes_{ \ve } \s^{\tilde k, n}  \big \|^{\frac{1}{2}}  
				\ \big\|  \iota_i \otimes \iota_j: \ell_2^{\tilde k}  \otimes_\pi \ell_2^{\tilde k}\rightarrow  \s^{\tilde k, n} \otimes_{ \pi} \s^{\tilde k, n}  \big \|^{\frac{1}{2}}
				\\
			   & \qquad \le 		\big \|  \iota_i : \ell_2^{\tilde k}  \rightarrow  \s^{\tilde k, n}   \big \|\  \big \|  \iota_j : \ell_2^{\tilde k}  \rightarrow  \s^{\tilde k, n}   \big \| \le 1.
		\end{align*}
		
\end{remark}

 \section{A conjecture towards unconditional lower bounds}\label{Sec6}
 
In the previous section, we have modified the naïve choice \eqref{Def_Phi0} for $\Phi$ in order to circumvent the problem that $\big \|\mathbb{E}_\ve  \Phi (\ve) \big\|_{\ell_2^{r^2} } $ can be in general																																																																																																																																												
too large, damning that way the bounds  obtained through Corollary \ref{Cor1} to be trivial. The variations $\Phi^i$, $\Phi^{ii}$ allowed us to obtain the bounds in Theorem \ref{mainThm_2}. An unsatisfactory feature of this result is that, in order to obtain concrete bounds on the quantum resources employed by a given strategy for $G_{Rad}$, we still need to make some additional assumption on that strategy. Recall that, in particular, the bounds in Theorem \ref{mainThm_2} depend on the regularity parameters $\sigma^i$, $\sigma^{ii}$. Ideally, we would like to obtain bounds only depending on the dimension of the quantum systems Alice and Bob manipulate. 

 Following this line of thought, one could ask whether, given a strategy, is possible to construct a corresponding assignment $\Phi$ that additionally displays the property of being regular enough, that is, with $\sigma_{\Phi}\lesssim_{\log} 1/n$.  The answer is affirmative, but the cost of doing so is that the output Banach space of $\Phi$ becomes more involved and its type properties escape from the techniques used in this work. We  define:
  \beq\label{Def_Phi3}
  \begin{array}{rccc}
  \Phi^{iii} : &  \calQ_{n^2} & \longrightarrow &  \left( \s^{\tilde k, n } \otimes_{\mathfrak{S}_2^{cb-w}} \s^{\tilde k, n} \right)\otimes_\ve \ell_2^{k \tilde k}
  \\[1em]
  & \ve & \mapsto  &  \Phi^{ii {i}} (\ve)  = \frac{1}{n^2}  \, \sum_{ij} \,   \, \varepsilon_{ij} \, \la i | \otimes \la j | \otimes (V|ij\ra  \otimes \id_{\ell_2^{\tilde k}})  
  \end{array},
\eeq
  that relates with the value of the game $G_{Rad}$ as stated in the following
  \begin{lemma}\label{Conjecture_Lemma1}
For any pure strategy $\Ss^{\UU} \in \mathfrak{S}_{s2w;\tilde k, k}$:
$$
 \omega (G_{Rad}; \Ss^\UU)^{1/2}  \lesssim \mathbb{E}_\ve \ \| \Phi^{iii} (\ve ) \|_{X^{iii} },
$$
where $X^{iii} := \left( \s^{\tilde k, n } \otimes_{\mathfrak{S}_2^{cb-w}} \s^{\tilde k, n} \right)\otimes_\ve \ell_2^{k \tilde k} $
  \end{lemma}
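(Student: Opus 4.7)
The plan is to relate the map $\Phi^{iii}_{\Ss^\UU}$ back to $\Phi^{ii}_{\Ss^\UU}$ through a judicious partial contraction of the injective $\ell_2^{k\tilde k'}$ factor, then to reuse the bound already obtained in the proof of Lemma \ref{Lemma_Main1} (in the sharpened form of Remark \ref{Lemma_Main1_Rmk}), and finally to pass from the $L_2$ to the $L_1$ average of the Rademacher sum $\Phi^{iii}_{\Ss^\UU}(\ve)$ by Kahane--Khintchine. The whole argument has three steps, of which only the first is not routine.

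First I would use that $\ell_2^{k\tilde k'}$ is a (self-dual) Hilbert space, so that for any Banach space $Z$ the injective tensor product admits the characterization
$$
\|f\|_{Z \otimes_\ve \ell_2^{k\tilde k'}} = \sup_{|\xi\ra \in B_{\ell_2^{k\tilde k'}}} \big\|(\id_Z \otimes \la \xi|)(f)\big\|_Z.
$$
Applying this with $Z = \s^{\tilde k'\!, n} \otimes_{\mathfrak{S}_2^{cb-w}} \s^{\tilde k'\!, n}$, and taking the $\ve$-dependent test vector $|\xi\ra := (\id_{\ell_2^k} \otimes W_\ve)|\varphi\ra$ (which has norm at most one since $W_\ve$ is a contraction and $|\varphi\ra$ is normalized), a direct computation gives
$$
(\id \otimes \la \xi|)\bigl(V|ij\ra \otimes \id_{\ell_2^{\tilde k'}}\bigr) = (V|ij\ra \otimes W_\ve)|\varphi\ra,
$$
so that the partial contraction of $\Phi^{iii}_{\Ss^\UU}(\ve)$ against this $|\xi\ra$ is exactly $\Phi^{ii}_{\Ss^\UU}(\ve)$. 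Since the supremum over $B_{\ell_2^{k\tilde k'}}$ is taken pointwise in $\ve$, the $\ve$-dependence of $|\xi\ra$ is harmless, and one obtains the pointwise domination
$$
\big\|\Phi^{ii}_{\Ss^\UU}(\ve)\big\|_{\tilde X^{ii}} \le \big\|\Phi^{iii}_{\Ss^\UU}(\ve)\big\|_{X^{iii}} \qquad \text{for every } \ve \in \calQ_{n^2}.
$$

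Second, I would recall that the proof of Lemma \ref{Lemma_Main1}, in the form of Remark \ref{Lemma_Main1_Rmk}, already produced as an intermediate step (before the final passage from square to linear norm) the estimate $\omega(G_{Rad};\Ss^\UU) \le \mathbb{E}_\ve \|\Phi^{ii}_{\Ss^\UU}(\ve)\|^2_{\tilde X^{ii}}$. Squaring the pointwise inequality above and taking expectations yields
$$
\omega(G_{Rad};\Ss^\UU) \le \mathbb{E}_\ve \big\|\Phi^{iii}_{\Ss^\UU}(\ve)\big\|^2_{X^{iii}},
$$
and therefore $\omega^{1/2}(G_{Rad};\Ss^\UU) \le \bigl(\mathbb{E}_\ve \|\Phi^{iii}_{\Ss^\UU}(\ve)\|^2_{X^{iii}}\bigr)^{1/2}$. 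Third, the tensor $\Phi^{iii}_{\Ss^\UU}(\ve)$ is by construction a Rademacher sum in the Banach space $X^{iii}$ with coefficients $\frac{1}{n^2}\la i| \otimes \la j|\otimes (V|ij\ra \otimes \id_{\ell_2^{\tilde k'}})$, so the Kahane--Khintchine inequality (valid in every Banach space) supplies a universal constant $K$ with
$$
\Bigl(\mathbb{E}_\ve \big\|\Phi^{iii}_{\Ss^\UU}(\ve)\big\|^2_{X^{iii}}\Bigr)^{1/2} \le K\,\mathbb{E}_\ve \big\|\Phi^{iii}_{\Ss^\UU}(\ve)\big\|_{X^{iii}},
$$
and chaining the two displays gives the claimed bound $\omega^{1/2}(G_{Rad};\Ss^\UU) \lesssim \mathbb{E}_\ve \|\Phi^{iii}_{\Ss^\UU}(\ve)\|_{X^{iii}}$. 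The only delicate point is the first step: one must verify that the partial contraction against the specific $|\xi\ra = (\id \otimes W_\ve)|\varphi\ra$ reproduces $\Phi^{ii}_{\Ss^\UU}(\ve)$ on the nose; once this is in place, Lemma \ref{Lemma_Main1} and a classical Banach-space inequality do the rest.
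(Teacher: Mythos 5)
Your proof is correct and follows essentially the same route as the paper's: both arguments rest on recognizing that the $\ell_2^{k\tilde k'}$ injective leg of $\Phi^{iii}_{\Ss^\UU}(\ve)$ absorbs the $\ve$-dependent first-round vector $(\id\otimes W_\ve)|\varphi\ra$, reducing to the $\Phi^{ii}$-type bound already established via Lemma~\ref{lemma_CharacSigma}, followed by Kahane's inequality to pass from the $L_2$ to the $L_1$ Rademacher average. Your packaging — isolating the pointwise domination $\|\Phi^{ii}_{\Ss^\UU}(\ve)\|_{\tilde X^{ii}} \le \|\Phi^{iii}_{\Ss^\UU}(\ve)\|_{X^{iii}}$ as a clean modular step and then invoking Remark~\ref{Lemma_Main1_Rmk} as a black box — is somewhat tidier than the paper's inline expansion, but it is the same argument.
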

\begin{proof}
	For each $\ve\in \calQ_{n^2}$, we have to interpret the tensor $\Phi^{iii} (\ve )$ as the mapping:
	  $$
	\begin{array}{rccc}
	\Phi^{iii}(\ve) : &  \ell_2^{k \tilde k}  & \longrightarrow &  \s^{\tilde k, n } \otimes_{\mathfrak{S}_2^{cb-w}} \s^{\tilde k, n} 
	\\[1em]
	& |\varphi\ra & \mapsto  &   \Phi^{iii}(\ve) ( |\varphi\ra ) = \frac{1}{n^2}  \, \sum_{ij} \,   \, \varepsilon_{ij} \, \la i | \otimes \la j | \otimes (V|ij\ra  \otimes \id_{\ell_2^{\tilde k}})\,|\varphi\ra 
	\end{array} .
	$$ 
	Then, the norm of this map is
	\begin{align*}
	\|\Phi^{iii}(\ve) \| &= \sup_{|\varphi \ra \in B_{ \ell_2^{k \tilde k} } }   \Big\|
	  \frac{1}{n^2}  \, \sum_{ij} \,   \, \varepsilon_{ij} \, \la i | \otimes \la j | \otimes  (V|ij\ra  \otimes \id_{\ell_2^{\tilde k}})\,|\varphi\ra  
	    \Big\|_{  \s^{\tilde k, n } \otimes_{\mathfrak{S}_2^{cb-w}} \s^{\tilde k, n}  } 
	    \\
	    &= \sup_{ W \in B_{ M_{ \tilde k} } } \sup_{|\varphi \ra \in B_{ \ell_2^{k \tilde k} } }   \Big\|
	    \frac{1}{n^2}  \, \sum_{ij} \,   \, \varepsilon_{ij} \, \la i | \otimes \la j | \otimes  (V|ij\ra  \otimes W)\,|\varphi\ra 
	    \Big\|_{  \s^{\tilde k, n } \otimes_{\mathfrak{S}_2^{cb-w}} \s^{\tilde k, n}  }    .
	\end{align*}  
Recalling once more Lemma \ref{lemma_CharacSigma}, we can write explicitly the norm above as:
	\begin{align*}
\|\Phi^{iii}(\ve) \| 
&=
 \sup_{ \begin{subarray}{c}
	m \in \mathbb{N} \\
	\tilde V, \tilde W \in B_{M_{\tilde k, n m}}
	\end{subarray} }
\sup_{ \begin{subarray}{c}
	 W \in B_{ M_{ \tilde k} } \\
 |\varphi \ra \in B_{ \ell_2^{k \tilde k} } \end{subarray} }
	   \Big\|
\frac{1}{n^2}  \, \sum_{ij} \,   \, \varepsilon_{ij} \,( \la i | \tilde V \otimes\la j | \tilde W) \,  (V|ij\ra  \otimes W )\,|\varphi\ra  
\Big\|_{ \ell_2^{m^2}  }    .
\end{align*}
	
	Finally, squaring this last expression and taking the expectation over $\ve$ we conclude that:
		\begin{align*}
	\mathbb{E}_\ve \, \|\Phi^{iii}(\ve) \| ^2
	&=
	\mathbb{E}_\ve \,
	\sup_{ \begin{subarray}{c}
		m \in \mathbb{N} \\
		\tilde V, \tilde W \in B_{M_{\tilde k, n m}}
		\end{subarray} }
	\sup_{ \begin{subarray}{c}
		W \in B_{ M_{ \tilde k} } \\
		|\varphi \ra \in B_{ \ell_2^{k \tilde k} } \end{subarray} }
	\Big\|
	\frac{1}{n^2}  \, \sum_{ij} \,   \, \varepsilon_{ij} \,( \la i | \tilde V \otimes\la j | \tilde W) \,  (V|ij\ra  \otimes W)\,|\varphi\ra  
	\Big\|_{ \ell_2^{m^2}  } ^2
	\\
	& \ge 
		\mathbb{E}_\ve \,
	\Big\|
	\frac{1}{n^2}  \, \sum_{ij} \,   \, \varepsilon_{ij} \,( \la i | \tilde V_\ve \otimes\la j | \tilde W_\ve) \,  (V|ij\ra  \otimes W_\ve)\,|\varphi\ra  
	\Big\|_{ \ell_2^{m^2}  } ^2 = \omega(G_{Rad}; \Ss^\UU),
	\end{align*}
	where we have considered that $\Ss^\UU = \lbrace \tilde V_\ve,\tilde W_\ve, V, W_\ve, |\varphi\ra \rbrace_\ve$. With that we are almost done. This last expression is enough to obtain 
	$$
	\omega(G_{Rad}; \Ss^\UU) \le 	\mathbb{E}_\ve \, \|\Phi^{iii}(\ve) \| ^2    {\approx \left( \mathbb{E}_\ve \, \|\Phi^{iii}(\ve)\|  \right)^2 },
	$$
	 {where the last equality (up to constants) can be obtained } using Kahane inequality \cite{KahaneBook}. {This is the claim of the lemma.} 
\end{proof}

 Now, notice that $\Phi^{iii}$ is by construction a linear map of the kind of Example \ref{Example_sigma1}, and, consequently, $\sigma_{\Phi^{iii}} \lesssim \log(n)/n $. Furthermore, by symmetry, $\mathbb{E}_\ve \Phi^{iii}_{\Ss^\UU } = 0$. Therefore, Corollary \ref{Cor1}  applied to the statement of Lemma \ref{Conjecture_Lemma1} implies:
 \begin{equation}\label{Eq_Conjecture_1}
 \omega(G_{Rad}; \Ss^{\UU}) 
\lesssim_{\log}   \left(\frac{  \mathrm{T}_2^{(n^2)} (X^{iii})}{n} \right)^2.
 \end{equation}

 The problem now reduces to find a good estimate for the type-2 constant in the last expression. 
 
 We note that the norm $ X^{iii}$ is the smallest one for which we were able to prove an equivalent to Lemma \ref{Conjecture_Lemma1}. However, the whole argument from this lemma until here would be valid for any norm larger than $ X^{iii} $ fulfilling a normalization condition with respect to the elements that sum up to $\Phi^{iii}(\ve)$. We will be more explicit later on. An example of such a norm is $ X^{ii} \otimes_\ve \ell_2^{k \tilde k}$ where $X^{ii} = \s^{\tilde k, n } \otimes_{ (\ve, \pi)_{\sfrac{1}{2}} } \s^{\tilde k, n}$. Motivated by the result obtained previously about the type of $X^{ii}$, Equation \eqref{Type_Int1/2}, we are led to conjecture that:
 \begin{conjecture}[strongest form] For any natural numbers $n,\, m , \, p$:
 	\beq\label{Conjecture_1}
 	\mathrm{T}_2^{(n^2)} \left( \big( \s^{m, n } \otimes_{(\ve, \pi)_{\sfrac{1}{2}} } \s^{m, n } \big)\otimes_\ve \ell_2^{p}  \right) \lesssim_{\log} \mathrm{T}_2^{(n^2)} \left( \s^{p, n } \otimes_{ (\ve, \pi)_{\sfrac{1}{2}} } \s^{p, n}  \right) \lesssim_{\log} n^{3/4}.
 	\eeq
 \end{conjecture}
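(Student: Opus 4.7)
The conjecture splits into two inequalities. The right-hand bound $\mathrm{T}_2^{(n^2)}\big(\s^{\tilde k',n} \otimes_{(\ve,\pi)_{1/2}} \s^{\tilde k',n}\big) \lesssim_{\log} n^{3/4}$ is already in hand as \eqref{Type_Int1/2}: apply Proposition \ref{Type_Inttheta} at $\theta = 1/2$ to get $\mathrm{T}_{4/3} \lesssim_{\log} n^{1/4}$, and combine with the elementary rescaling $\mathrm{T}_2^{(N)}(X) \le N^{1/4}\, \mathrm{T}_{4/3}(X)$, which is immediate from H\"older applied inside the defining inequality of type. So the substantive task is the left-hand inequality, asserting that injective tensorization with $H := \ell_2^{k\tilde k'}$ is polylogarithmically neutral for the $n^2$-vector type-$2$ constant.

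Writing $Y := \s^{\tilde k',n} \otimes_{(\ve,\pi)_{1/2}} \s^{\tilde k',n}$, the most promising route I see is to dualize and invoke the projective-tensor-cotype machinery of \cite{Pisier90} that already drives the proof of \eqref{type_epsilon}. The log-bounded type/cotype duality \eqref{typecotype_duality} gives
$$
\mathrm{T}_2^{(n^2)}(Y \otimes_\ve H) \, \lesssim \, \log\!\big(\dim(Y \otimes_\ve H)\big)\, \mathrm{C}_2^{(n^2)}(Y^* \otimes_\pi H),
$$
and since $\dim(Y \otimes_\ve H)$ is polynomial in $n,k,\tilde k'$, this first step is polylogarithmically cheap. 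To bound the resulting cotype-$2$ constant I would apply Pisier's estimate in the shape $\mathrm{C}_2(A \otimes_\pi B) \lesssim \mathrm{C}_2(A)\, \mathrm{C}_2(B)\, \mathrm{UMD}(A)\, \mathrm{T}_2(A)\, \mathrm{T}_2(B)$ with $A = Y^*$ and $B = H$. Because $H$ is a Hilbert space, $\mathrm{C}_2(H) = \mathrm{T}_2(H) = 1$ and $\mathrm{UMD}(H) = O(1)$, so the bound collapses to $\mathrm{C}_2(Y^*)\, \mathrm{UMD}(Y^*)\, \mathrm{T}_2(Y^*)$. Dualising once more, $\mathrm{C}_2(Y^*) \le \mathrm{T}_2(Y) \lesssim_{\log} n^{3/4}$, which is exactly the target scale.

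Thus the whole strategy succeeds provided the remaining factor $\mathrm{UMD}(Y^*)\, \mathrm{T}_2(Y^*)$ is controlled by $\mathrm{polylog}(n)$. By duality of complex interpolation, $Y^* = (M_{\tilde k', n} \otimes_\pi M_{\tilde k', n}, M_{\tilde k', n} \otimes_\ve M_{\tilde k', n})_{1/2}$. The $\mathrm{T}_2$ factor can be handled via Proposition \ref{Prop_IntType} by combining the trivial estimate on the projective side with a type-$2$ bound on the injective side obtained \emph{mutatis mutandis} as in the proof of \eqref{type_epsilon}. The UMD factor is where the main obstacle lies: one would combine the interpolation bound $\mathrm{UMD}((X_0, X_1)_\theta) \le \mathrm{UMD}(X_0)^{1-\theta}\, \mathrm{UMD}(X_1)^{\theta}$ (cf.\ \cite{Pisier98}) with polylogarithmic estimates for $\mathrm{UMD}(M_{\tilde k', n} \otimes_\ve M_{\tilde k', n})$ and $\mathrm{UMD}(M_{\tilde k', n} \otimes_\pi M_{\tilde k', n})$, and to my knowledge both estimates are open. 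A plausible attack parallels the treatment of $M_n$ in the proof of \eqref{type_epsilon}: approximate $M_n$ by the Schatten class $\Ss_p^n$ at $p \asymp \log n$ and pull through known UMD bounds for Schatten tensor products via Banach--Mazur comparison, but rigorously pushing this through (especially on the projective side) is the delicate point I expect to be the hardest. A fallback avenue that bypasses UMD altogether would be a direct generic-chaining argument: identify $Y \otimes_\ve H$ with operators $H \to Y$ and chain the $Y$-valued Rademacher process $v \mapsto \sum_i \varepsilon_i T_i(v)$ over $v \in B_H$; the challenge there is to extract a polylogarithmic (rather than $\sqrt{k\tilde k'}$) bound on the supremum, which requires exploiting the sub-Gaussian behaviour with respect to the operator-induced metric on $B_H$ rather than the Euclidean one.
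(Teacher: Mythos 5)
The statement you were asked to prove is labelled a \emph{conjecture} in the paper, and the paper neither proves it nor claims to: the authors write that they ``were not able to prove such a bound'' and instead offer circumstantial evidence in Section~\ref{Sec6} (polylogarithmically compatible type-2 estimates for several natural subspaces, and a volume-ratio bound for the dual in Theorem~\ref{Thm_vr} showing that the Bourgain--Milman route cannot refute the conjectured estimate). So there is no ``paper's proof'' to compare against. The right-hand inequality $\mathrm{T}_2^{(n^2)}\big(\s^{\tilde k',n}\otimes_{(\ve,\pi)_{1/2}}\s^{\tilde k',n}\big) \lesssim_{\log} n^{3/4}$ is indeed established (it is exactly \eqref{Type_Int1/2}, and your derivation of it is correct); the left-hand inequality --- that injective tensoring with $\ell_2^{k\tilde k'}$ is polylogarithmically neutral for $\mathrm{T}_2^{(n^2)}$ --- is exactly what remains open.

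Your sketch of an attack is honest about its status: you flag that $\mathrm{UMD}(Y^*)$ is, to your knowledge, uncontrolled for $Y^*=(M_{\tilde k',n}\otimes_\pi M_{\tilde k',n},\, M_{\tilde k',n}\otimes_\ve M_{\tilde k',n})_{1/2}$. Two further gaps deserve to be named. First, the asymmetric form of Pisier's cotype theorem you invoke, $\mathrm{C}_2(A\otimes_\pi B)\lesssim\mathrm{C}_2(A)\mathrm{C}_2(B)\mathrm{UMD}(A)\mathrm{T}_2(A)\mathrm{T}_2(B)$, is not the one the paper uses --- the paper relies on the self-tensor bound $\mathrm{C}_2(X\otimes_\pi X)\lesssim\mathrm{C}_2(X)\mathrm{UMD}(X)\mathrm{T}_2^2(X)$ --- so you would need to confirm that the asymmetric version with a single UMD factor is available and that its hypotheses are satisfied here. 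Second, and more seriously, your claim that $\mathrm{T}_2(Y^*)$ ``can be handled via Proposition~\ref{Prop_IntType}'' does not go through: with the trivial type-1 estimate on the projective endpoint and a polylogarithmic type-2 estimate on the injective endpoint, Proposition~\ref{Prop_IntType} at $\theta=1/2$ only yields $\mathrm{T}_{4/3}(Y^*)\lesssim_{\log}1$, and converting a $\mathrm{T}_{4/3}$ bound into a $\mathrm{T}_2^{(n^2)}$ bound costs a factor $n^{1/2}$. The alternative of taking $p_0=2$ at the projective endpoint is unavailable because $M_{\tilde k',n}\otimes_\pi M_{\tilde k',n}$ contains $\ell_2^n\otimes_\pi\ell_2^n=\Ss_1^n$ isometrically, so $\mathrm{T}_2(M_{\tilde k',n}\otimes_\pi M_{\tilde k',n})\gtrsim\sqrt{n}$. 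Since your budget after $\mathrm{C}_2(Y^*)\lesssim_{\log}n^{3/4}$ is polylogarithmic, either leak already sinks the argument. In short, your sketch is a reasonable opening move, but it inherits the very obstructions that led the authors to leave this as an open conjecture, plus an additional unaddressed leak in the $\mathrm{T}_2(Y^*)$ step.
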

	A weaker conjecture which would also imply the desired bounds in the setting of PBC is:
 \begin{conjecture}[weaker form]
	\beq\label{Conjecture_1'}
	\mathrm{T}_2^{(n^2)} \left( \big( \s^{m, n } \otimes_{ (\ve, \pi)_{\sfrac{1}{2}} } \s^{m, n} \big)\otimes_\ve  \ell_2^{ p }  \right) \lesssim_{\log}   n^{\beta}  \qquad \text{for some } \beta <1.
	\eeq
\end{conjecture}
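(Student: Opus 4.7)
The proof strategy I would pursue attacks the strongest form of the conjecture, since the weaker form follows a fortiori. The second inequality $\mathrm{T}_2^{(n^2)}( \s^{\tilde k',n} \otimes_{(\ve,\pi)_{1/2}} \s^{\tilde k',n} ) \lesssim_{\log} n^{3/4}$ is already established in \eqref{Type_Int1/2}, so the real content lies in the first inequality: tensoring via the injective norm with $\ell_2^{k\tilde k'}$ should not inflate $\mathrm{T}_2^{(n^2)}$ beyond polylogarithmic factors. The plan is to dualize. By the type-cotype duality \eqref{typecotype_duality}, up to a $\mathrm{polylog}(n k \tilde k')$ loss it suffices to bound the cotype-$2$ constant of
\begin{equation*}
Z := \big(M_{\tilde k',n} \otimes_{(\ve,\pi)_{1/2}} M_{\tilde k',n}\big) \otimes_\pi \ell_2^{k \tilde k'},
\end{equation*}
where we have used $(X\otimes_\ve Y)^*=X^*\otimes_\pi Y^*$ in finite dimensions together with the invariance of the midpoint interpolation norm $(\ve,\pi)_{1/2}$ under swapping of the two endpoints.

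The main tool would be Pisier's cotype bound for projective tensor products (Theorem 5.1 in \cite{Pisier90}), already invoked in the excerpt in the form
\begin{equation*}
\mathrm{C}_2(U \otimes_\pi V) \lesssim \mathrm{C}_2(U)\,\mathrm{T}_2^2(U)\,\mathrm{UMD}(U)\,\mathrm{C}_2(V).
\end{equation*}
Apply it with $U := M_{\tilde k',n} \otimes_{(\ve,\pi)_{1/2}} M_{\tilde k',n}$ and $V := \ell_2^{k\tilde k'}$, so that $\mathrm{C}_2(V)=1$. The task then reduces to estimating $\mathrm{C}_2(U)$, $\mathrm{T}_2(U)$ and $\mathrm{UMD}(U)$. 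For type and cotype this is straightforward by interpolation: Proposition \ref{Prop_IntType} and its cotype analogue, applied to the endpoints $M_{\tilde k',n}\otimes_\ve M_{\tilde k',n}$ (where $\mathrm{T}_2\lesssim_{\log} 1$ and $\mathrm{C}_2 \lesssim \sqrt{n}$, via \eqref{Eq4_TypeConsts_1} and \eqref{type_epsilon}) and $M_{\tilde k',n}\otimes_\pi M_{\tilde k',n}$ (whose type-$1$ is trivial and whose cotype-$2$ is $\lesssim_{\log} \sqrt{n}$ by duality from \eqref{type_epsilon}), at $\theta=1/2$ should produce $\mathrm{T}_2(U),\,\mathrm{C}_2(U) \lesssim_{\log} n^{1/4}$. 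Plugging back and dualizing once more yields
\begin{equation*}
\mathrm{T}_2^{(n^2)}\big(X^{ii}\otimes_\ve \ell_2^{k\tilde k'}\big) \lesssim_{\log} n^{3/4}\,\mathrm{UMD}(U),
\end{equation*}
so that the weaker form of the conjecture reduces to $\mathrm{UMD}(U)\lesssim_{\log} n^{\beta'}$ for some $\beta'<1/4$, and the strongest form to $\mathrm{UMD}(U)\lesssim_{\log} 1$.

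The principal obstacle is precisely this UMD estimate. Unlike type and cotype, the UMD property is not known to behave nicely under complex interpolation, so one cannot simply interpolate the endpoints. A natural attempt is to mimic the route used for $M_n$ itself in the proof of Proposition \ref{Type_Inttheta}: realize $U$ at controlled Banach-Mazur distance inside a non-commutative Schatten class $\Ss_p$, or a mixed class like $\Ss_p(\Ss_p)$, for a moderate $p = O(\log n)$, and then combine the Randrianantoanina bound $\mathrm{UMD}(\Ss_p)\lesssim p$ \cite{Randrianantoanina2002} with the standard estimate $\mathrm{UMD}(U) \lesssim d(U,\Ss_p)\,\mathrm{UMD}(\Ss_p)$. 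Concretely, one hopes that $d\big(M_{\tilde k',n}\otimes_{(\ve,\pi)_{1/2}} M_{\tilde k',n},\;\Ss_p(\Ss_p)\big) \lesssim n^{1/p}$ in analogy with the single-factor case; then choosing $p=\log n$ would give $\mathrm{UMD}(U)\lesssim_{\log} 1$ and close the chain, establishing even the strongest form. I expect this Banach-Mazur comparison between the midpoint interpolation of $(M_n\otimes_\ve M_n, M_n\otimes_\pi M_n)$ and a non-commutative $L_p$ space to be the true crux of the problem; checking the mild regularity hypotheses on $U$ implicit in Pisier's Theorem 5.1, and tracking the $m$-vector restriction in $\mathrm{T}_2^{(n^2)}$ through the duality loss \eqref{typecotype_duality}, should be comparatively routine.
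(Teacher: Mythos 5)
The statement you set out to prove is introduced in the paper explicitly as a \emph{conjecture}; the authors do not claim a proof, and Section~\ref{Sec6} is devoted only to partial evidence (type constants of easy subspaces and a volume-ratio bound for the dual). So there is no reference proof to compare against, and the burden on a blind proposal is to actually close the problem.

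Your proposal does not do so, though the opening moves are reasonable. Dualizing via \eqref{typecotype_duality}, identifying $(X^{ii}\otimes_\ve\ell_2^{k\tilde k'})^*$ with $U\otimes_\pi\ell_2^{k\tilde k'}$ for $U = M_{\tilde k',n}\otimes_{(\ve,\pi)_{1/2}}M_{\tilde k',n}$ (using the $\theta\mapsto 1-\theta$ symmetry of the midpoint interpolation space), and then invoking Pisier's cotype bound for projective tensor products closely parallels how the paper itself proves \eqref{Type_Int1/2}. However, the argument as written bottoms out on several unestablished ingredients. The crux you correctly flag, $\mathrm{UMD}(U)\lesssim_{\log} 1$, has no proof: the single-factor route $d(M_n,\Ss_p^n)\lesssim n^{1/p}$ with $p\sim\log n$ does not carry over to the interpolated two-fold tensor, and the Banach--Mazur comparison of $U$ with something like $\Ss_p(\Ss_p)$ you hope for is not a ``mild regularity'' check — it is exactly the kind of control the paper says it lacks beyond $M_{n,m}$. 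Two further gaps are less emphasized but real. First, your intermediate claim $\rmT_2(U),\mathrm{C}_2(U)\lesssim_{\log} n^{1/4}$ is not produced by Proposition~\ref{Prop_IntType}: that proposition interpolates \emph{type} only, and the paper supplies no cotype analogue; cotype does not interpolate by the same operator-norm trick because the relevant Rademacher inequality goes in the wrong direction. Second, even the type half of that claim needs $\rmT_2(M_{\tilde k',n}\otimes_\pi M_{\tilde k',n})\lesssim_{\log}\sqrt{n}$, which via \eqref{typecotype_duality} would follow from $\mathrm{C}_2(\s^{\tilde k',n}\otimes_\ve\s^{\tilde k',n})\lesssim_{\log}\sqrt{n}$ — an assertion of the same difficulty as the Pisier-type cotype questions flagged in the introduction, and not something the paper establishes (it only proves a \emph{type}-$2$ upper bound for $\s\otimes_\ve\s$, which by the non-invertibility of duality does not yield a cotype bound). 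Until these are filled in, the proposal is a reduction to further conjectures, which is the status the paper itself assigns the statement.
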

 
According to what we explained above, there is a plethora of norms for which the positive resolution of the corresponding conjecture would imply unconditional exponential lower bounds for the resources in attacks for PBC. Next, we formalize this discussion characterizing those norms and then we rewrite the Conjecture  in a unified form.

First,  we characterize what we need from a norm $X$ to follow the previous argument substituting $X^{iii}$ by this $X$.  In this section we refer to $X$ as a \emph{valid norm} if it satisfies:
\begin{enumerate}[P.i.]
	\item $X$ is a norm on the algebraic tensor product  $\s^{m, n } \otimes \s^{m, n} \otimes \ell_2^{p}$;
	\item $\| x \|_X \gtrsim \| x \|_{X^{iii}}$ for any $ x \in X$;
	\item 	 $  \big\|  |i\ra \otimes |j\ra \otimes (V|ij\ra  \otimes \id_{\ell_2^{\tilde k'}})   \big \|_X  \le 1 $ for any  {contraction} $V$. 
\end{enumerate}
Notice that P.ii. guarantees a relation with the value of $G_{Rad}$ in analogy with Lemma \ref{Conjecture_Lemma1} and P.iii. guarantees that $\Phi^{iii.} :  {\calQ}_{n_2} \rightarrow X$ still falls in the setting of Example \ref{Example_sigma1}, i.e., we still have $\sigma_{\Phi^{iii}} \lesssim \log(n)/n $. These two properties therefore translate in the fact that the bound \eqref{Eq_Conjecture_1} is still true with the type-2 constant of any valid norm $X$ instead of $X^{iii} $.

We can state
 \begin{conjecture}[even weaker form]
	For some valid norm, i.e. a norm $X$ satisfying properties \emph{P.i., P.ii.} and \emph{P.iii.} above, and some dimension independent constant $\beta <1:$
	\beq\label{Conjecture_3}
	\mathrm{T}_2^{(n^2)} \left( X  \right) \lesssim_{\log}   n^{\beta}.
	\eeq
\end{conjecture}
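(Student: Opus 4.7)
The plan is to work with the candidate valid norm
$X = X^{ii} \otimes_\ve \ell_2^{k\tilde k'}$,
that is, the interpolation norm already controlled in Proposition~\ref{Type_Inttheta} tensorised by the ambient Hilbert space via the injective tensor product. Property P.i is immediate. P.ii follows from Proposition~\ref{Prop_RelNorms} (which yields $\|\cdot\|_{X^{ii}} \ge \|\cdot\|_{\tilde X^{ii}}$) combined with the metric mapping property of $\otimes_\ve$. For P.iii, the computation in Remark~\ref{Rmk_normalizationInt1/2} extends cleanly: realising $X^{ii} \otimes_\ve \ell_2^{k\tilde k'}$ as the space of bounded maps $\ell_2^{k\tilde k'} \to X^{ii}$, each summand $|i\ra \otimes |j\ra \otimes (V|ij\ra \otimes \id_{\ell_2^{\tilde k'}})$ is the operator sending $|\varphi\ra$ to $|i\ra \otimes |j\ra \otimes (V|ij\ra \otimes \id_{\ell_2^{\tilde k'}})|\varphi\ra$, which factors through a contraction into $X^{ii}$ by exactly the injection argument used there.

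With this candidate fixed, the entire conjecture reduces to showing $\mathrm{T}_2^{(n^2)}(X^{ii} \otimes_\ve \ell_2^{k\tilde k'}) \lesssim_{\log} n^{\beta}$ for some $\beta<1$, uniformly in $k, \tilde k'$. Mirroring the proof of Proposition~\ref{Type_Inttheta}, I would first try to move the Hilbert factor \emph{inside} the interpolation, i.e., to establish an inclusion
$$X^{ii} \otimes_\ve \ell_2^{k\tilde k'} \hookrightarrow \bigl(\s^{\tilde k'\!,n}\otimes_\ve \s^{\tilde k'\!,n}\otimes_\ve \ell_2^{k\tilde k'},\; \s^{\tilde k'\!,n}\otimes_\pi \s^{\tilde k'\!,n}\otimes_\ve \ell_2^{k\tilde k'}\bigr)_{1/2}$$
with at worst a polylogarithmic loss. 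Granting such a reindexing, Proposition~\ref{Prop_IntType} applied with $p_0 = 2$ and $p_1 = 1$ reduces the task to a type-$2$ estimate at the $\ve$-endpoint, since the $\pi$-endpoint has type~$1$ with constant~$1$ trivially.

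The main obstacle is then to bound
$\mathrm{T}_2\bigl(\s^{\tilde k'\!,n}\otimes_\ve \s^{\tilde k'\!,n}\otimes_\ve \ell_2^{k\tilde k'}\bigr)$
by roughly $n^{1/2}$ up to logs. By the type/cotype duality~\eqref{typecotype_duality} this is essentially a cotype-$2$ estimate on the triple projective tensor product $M_{n,\tilde k'} \otimes_\pi M_{n,\tilde k'} \otimes_\pi \ell_2^{k\tilde k'}$. The proof of the twofold version~\eqref{type_epsilon} relied on Pisier's theorem controlling $\mathrm{C}_2(Y \otimes_\pi Y)$ by $\mathrm{C}_2(Y)\,\mathrm{UMD}(Y)\,\mathrm{T}_2^{2}(Y)$; the natural attempt is to iterate this with $Y = M_{n,\tilde k'} \otimes_\pi M_{n,\tilde k'}$ playing the role of the outer factor. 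I expect this to be where the difficulty genuinely lies: the iteration requires UMD and type-$2$ bounds for a projective tensor product of matrix spaces, and this is precisely the regime that overlaps with Pisier's long-standing open question on the cotype of $\ell_2\otimes_\ve\ell_2\otimes_\ve\ell_2$. A cleaner alternative, if one can prove it, would be a Hilbert-stability statement of the form $\mathrm{T}_2(Y\otimes_\ve \ell_2)\lesssim_{\log} \mathrm{T}_2(Y)$ for the relevant family of spaces $Y$, leveraging that $\mathrm{T}_2(\ell_2^d)=1$ independently of $d$; but proving such stability outside the well-understood cases is itself the sort of result that seems to require genuinely new local Banach-space ideas.
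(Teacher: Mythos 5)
This statement is labelled a \emph{conjecture} in the paper, and the paper offers no proof of it; Section~\ref{Sec6} only gives supporting evidence (type constants of subspaces, volume-ratio estimates). Your submission is likewise not a proof but a reduction sketch, and to your credit you say so explicitly. So the right comparison is: does your sketch correctly identify where the difficulty lies, and does it introduce any new gaps of its own?

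Your choice of candidate norm $X = X^{ii}\otimes_\ve\ell_2^{k\tilde k'}$ is exactly the space in the ``strongest form'' conjecture~\eqref{Conjecture_1}, and your verification of P.i--P.iii is sound: P.ii is Proposition~\ref{Prop_RelNorms} together with the metric mapping property of $\otimes_\ve$, and P.iii is the injection argument of Remark~\ref{Rmk_normalizationInt1/2} applied at the operator level (the map $|\varphi\rangle\mapsto|i\rangle\otimes|j\rangle\otimes(V|ij\rangle\otimes\id)|\varphi\rangle$ factors as $(\iota_i\otimes\iota_j)\circ(V|ij\rangle\otimes\id)$, a composition of contractions). You also correctly identify that the type estimate for the triple tensor product $\s^{\tilde k'\!,n}\otimes_\ve\s^{\tilde k'\!,n}\otimes_\ve\ell_2^{k\tilde k'}$ sits in the orbit of Pisier's open question about the cotype of $\ell_2\otimes_\ve\ell_2\otimes_\ve\ell_2$, which is precisely the difficulty the authors point to.

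The one place your write-up understates the problem is the ``reindexing'' step, where you propose an inclusion
$X^{ii}\otimes_\ve\ell_2^{k\tilde k'}\hookrightarrow\bigl(\s^{\tilde k'\!,n}\otimes_\ve\s^{\tilde k'\!,n}\otimes_\ve\ell_2^{k\tilde k'},\ \s^{\tilde k'\!,n}\otimes_\pi\s^{\tilde k'\!,n}\otimes_\ve\ell_2^{k\tilde k'}\bigr)_{1/2}$
up to logarithmic loss. You hedge with ``granting such a reindexing,'' but this is not a routine commutation: the paper explicitly flags that the injective tensor product behaves pathologically with respect to complex interpolation (citing~\cite{Lemerdy98}), and names this as \emph{the} obstruction to extending the Proposition~\ref{Type_Inttheta} argument to spaces like $\bigl(\s\otimes_{\mathfrak{S}_2^w}\s\bigr)\otimes_\ve\ell_2$ or $\bigl(\s\otimes_{\pi_2}\s\bigr)\otimes_\ve\ell_2$. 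In other words, there are two independent obstacles, not one: (a) the endpoint type estimate you identify, and (b) the interpolation-commutation step you treat as a hypothesis. Even if (a) were resolved, your reduction would not go through without (b). The paper's evidence sections circumvent this by testing the conjecture on subspaces and via volume ratio rather than trying to push the interpolation argument through the $\otimes_\ve\ell_2$ factor. Your sketch would be stronger if it made (b) a co-equal gap rather than a parenthetical assumption.
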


Now, to state our conjecture in its weakest form we need to introduce the notion of type constant of an operator $F:X\rightarrow Y$. The type-2 constant of a linear map $F:X\rightarrow Y$ is the infimum of the constants $T$ such that
\beq \nn
\left( \mathbb{E}_\ve   \Big[ \big\|  \sum_{i} \varepsilon_i F( x_i) \big\|_Y^2 \Big]   \right)^{1/2}
\hspace{-1mm} \le	 \rmT \left( \sum_{i} \|x_i\|_X^2 \right)^{1/2},
\eeq
for any finite sequence $\lbrace x_i \rbrace_{i} \subset X$. In analogy with the case of the type constant of a Banach space, when the cardinal of this sequence is restricted, we refer to the type-2 constant with $m$ vectors of $F:X\rightarrow Y$ and denote $ \mathrm{T}_2^{(m)} (F:X\rightarrow Y) $.

We are  interested here in the type of the identity map $\id: X \rightarrow X^{iii}$, being X a \emph{valid norm}. In fact, the final statement of our conjecture is as follows:
 \begin{conjecture}[weakest form]
	For some valid norm, i.e. a norm $X$ satisfying  properties \emph{P.i., P.ii.} and \emph{P.iii.} above, and some dimension independent constant $\beta <1:$
	\beq\label{Conjecture_4}
	\mathrm{T}_2^{(n^2)} \left(  \id: X \rightarrow X^{iii}  \right) \lesssim_{\log}   n^{\beta}.
	\eeq
\end{conjecture}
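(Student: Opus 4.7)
The plan is to take the candidate valid norm to be
\[
X \;=\; X^{ii} \otimes_\ve \ell_2^{k \tilde k'} \;=\; \bigl(\s^{\tilde k', n} \otimes_{(\ve,\pi)_{1/2}} \s^{\tilde k', n}\bigr) \otimes_\ve \ell_2^{k \tilde k'},
\]
so that $X$ differs from $X^{iii}$ only by replacing the $\mathfrak{S}_2^{w-cb}$ factor by the (larger) $(\ve,\pi)_{1/2}$ factor. The three validity properties are then easy: P.i.\ is tautological; P.ii.\ follows from Proposition~\ref{Prop_RelNorms} together with the metric mapping property of $\otimes_\ve$; and P.iii.\ is obtained by identifying $X^{ii} \otimes_\ve \ell_2^{k \tilde k'}$ isometrically with $\BB(\ell_2^{k\tilde k'}, X^{ii})$ and combining the $X^{ii}$-normalisation spelled out in Remark~\ref{Rmk_normalizationInt1/2} with the trivial bound $\|V|ij\ra \otimes \id\|_{\ell_2^{\tilde k'^2}} \le 1$.

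Since P.ii.\ gives $\|\id: X \to X^{iii}\| \lesssim 1$, the conjecture reduces to showing $\mathrm{T}_2^{(n^2)}(X) \lesssim_{\log} n^{3/4}$, and for this I would mimic the argument of Proposition~\ref{Type_Inttheta}. First, write $X$ as the complex interpolation space $(X_0, X_1)_{1/2}$ with
\[
X_0 = (\s^{\tilde k', n} \otimes_\ve \s^{\tilde k', n}) \otimes_\ve \ell_2^{k\tilde k'}, \qquad X_1 = (\s^{\tilde k', n} \otimes_\pi \s^{\tilde k', n}) \otimes_\ve \ell_2^{k\tilde k'}
\]
(commutation of complex interpolation with $\otimes_\ve \ell_2^{k\tilde k'}$ being a standard check in finite dimension), then apply Proposition~\ref{Prop_IntType} with $(p_0,p_1,\theta)=(2,1,1/2)$ to obtain $\mathrm{T}_{4/3}(X) \le \mathrm{T}_2(X_0)^{1/2}$ using $\mathrm{T}_1(X_1)=1$, and finally use the standard bound $\mathrm{T}_2^{(n^2)}(X) \le n^{1/2}\,\mathrm{T}_{4/3}(X)$. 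The whole conjecture thus collapses to
\[
\mathrm{T}_2\bigl((\s^{\tilde k', n} \otimes_\ve \s^{\tilde k', n}) \otimes_\ve \ell_2^{k\tilde k'}\bigr) \lesssim_{\log} n^{1/2},
\]
or equivalently, via the type--cotype duality \eqref{typecotype_duality}, to the cotype-$2$ estimate
\[
\mathrm{C}_2\bigl(M_{\tilde k', n} \otimes_\pi M_{\tilde k', n} \otimes_\pi \ell_2^{k\tilde k'}\bigr) \lesssim_{\log} n^{1/2}.
\]

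The hard part, and the real obstacle, is this last bound on a \emph{triple} projective tensor product. The corresponding estimate for the double tensor $M \otimes_\pi M$ is exactly what the paper extracts from Theorem~5.1 of~\cite{Pisier90}, whose martingale/UMD machinery handles the matrix-space case. The plan would be to iterate that scheme so as to pull out the $\ell_2$ factor and absorb it at no polynomial cost, exploiting $\mathrm{UMD}(\ell_2)=\mathrm{T}_2(\ell_2)=\mathrm{C}_2(\ell_2)=1$, thereby reducing the triple estimate to the already-known double-tensor one up to logarithmic factors. The main obstacle is that Pisier's inductive step is delicate and it is not at all clear a priori that an extra $\otimes_\pi$ factor can be inserted cleanly without picking up stray powers of $n$; indeed the fully symmetric analogue with all three factors equal to $\ell_2^n$ is precisely Pisier's celebrated open problem on $\mathrm{C}_2(\ell_2^n \otimes_\pi \ell_2^n \otimes_\pi \ell_2^n)$ cited in Section~\ref{Sec4_2}. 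The structural advantage one would try to exploit here is that two of the three factors are matrix spaces with rich operator-space and UMD geometry (their UMD and type-$2$ constants are only logarithmic), which is what gives hope that Pisier's argument can be pushed through in this asymmetric triple setting and yields the desired $n^{1/2}$ bound up to logarithmic factors.
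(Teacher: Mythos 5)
The statement you are trying to prove is an open \emph{conjecture} in the paper; no proof is given, and the authors explicitly state that they were unable to establish it with the techniques at their disposal. So the honest assessment is whether your plan could plausibly close that gap, and at present it cannot, for two distinct reasons.

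First, the step where you write $X = (X_0, X_1)_{1/2}$ with $X_i = (\Ss_1^{\tilde k',n} \otimes_{\ve\text{ or }\pi} \Ss_1^{\tilde k',n}) \otimes_\ve \ell_2^{k\tilde k'}$ and describe the commutation of complex interpolation with $\otimes_\ve \ell_2^{k\tilde k'}$ as ``a standard check in finite dimension'' is exactly the step that fails. The paper flags this explicitly just after stating the conjecture: ``An obstruction for the techniques used in this work to obtain upper bounds for the type constants of these spaces is the pathological behaviour of the injective tensor product with respect to interpolation methods \cite{Lemerdy98}.'' Interpolation does not pass through $\otimes_\ve$ the way it passes through $L_p$ or Schatten classes, and the interpolated space $\bigl((\Ss_1 \otimes_\ve \Ss_1)\otimes_\ve \ell_2,\; (\Ss_1 \otimes_\pi \Ss_1)\otimes_\ve \ell_2\bigr)_{1/2}$ is not known to coincide with, or even be comparable in the needed direction to, $(\Ss_1 \otimes_{(\ve,\pi)_{1/2}} \Ss_1)\otimes_\ve \ell_2$. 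Without that identification, Proposition~\ref{Prop_IntType} gives you control on the type of a space you cannot compare to $X$. Second, even granting that identification, the reduction you arrive at --- a $\lesssim_{\log} n^{1/2}$ cotype-$2$ bound for the triple projective tensor $M_{\tilde k',n}\otimes_\pi M_{\tilde k',n}\otimes_\pi \ell_2^{k\tilde k'}$ --- is itself open, and you correctly note it sits in the same orbit as Pisier's long-standing question on $\mathrm{C}_2(\ell_2^n\otimes_\pi\ell_2^n\otimes_\pi\ell_2^n)$. Iterating the UMD/martingale argument of \cite{Pisier90} to insert a third projective factor is precisely what nobody has managed to do; the authors of the paper note they ``were not able to estimate the relevant type constant'' and offer only indirect evidence (type bounds for subspaces and a volume-ratio computation). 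So your plan faithfully reproduces the authors' own strongest-form conjecture (\eqref{Conjecture_1}) and its known difficulties, but it does not yield a proof of the weakest form either; both gaps would have to be closed.
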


\begin{remark}
Notice that in particular, $\mathrm{T}_2^{(n^2)} \left(  \id: X \rightarrow X^{iii}  \right)  \lesssim \mathrm{T}_2^{(n^2)} \left( Y  \right) $ for any \emph{valid norm} $Y$ such that $\| x \|_{X^{iii}} \lesssim \| x \|_Y \lesssim \| x \|_X$. Therefore, the last statement for our conjecture, Equation \eqref{Conjecture_4}, is indeed weaker than the previous ones. 
\end{remark}

Within the family of \emph{valid} norms characterized by properties P.i., P.ii., P.iii. we obviously find the spaces $X^{iii}$ and $\big( \s^{m, n } \otimes_{ (\ve, \pi)_{\sfrac{1}{2}} } \s^{m, n} \big )\otimes_\ve  \ell_2^{ p}$. But also,  {the} space
$\big( \s^{m, n } \otimes_{\mathfrak{S}^{w}_2} \s^{m, n} \big)\otimes_\ve  \ell_2^{ p}$, see Section \ref{Sec.2.4.1} for the definition. An obstruction for the techniques used in this work to obtain upper bounds for the type constants of these spaces is the pathological behaviour of the injective tensor product with respect to interpolation methods \cite{Lemerdy98}.  In order to support the validity of the stated conjecture,  we explore next the most direct approaches to disprove it, lower bounding the type-2 constant of the  spaces involved. We  find that these approaches do not lead to bounds stronger than $\T_2 (X) \gtrsim_{\log}  n^{3/4}$ for at least some \emph{valid} norm $X$. 

 {In first place, one can obtain lower bounds for the type constants of a space X by estimating the
type constant of its subspaces, since $\rmT_p(X) \ge \rmT_p(S)$ for any subspace $S \subseteq X$. Restricting to the case of valid
norms, the type constants of the simplest subspaces are not large enough to disprove our conjecture (see
\cite[Section 4.7.1 ]{Kubicki_thesis} for details). As we explain in the next section, another, less trivial,  way to obtain lower bounds for
the type 2-constant of a normed space X is by studying its volume ratio.}

\subsection{Volume ratio}

Although the Banach spaces that appear in this work are prominently complex, for the sake of simplicity we will restrict ourselves to real spaces in this section. There exist standard tools \cite{Michal41,Taylor43,Wenzel95,Munoz1999} to   {transfer} results in this case to the complex domain, albeit some technicalities might appear in that process \cite{Wenzel97}. Since our aim here is restricted to  showing some evidence in favour of our conjecture, we do not think that these intricacies add anything of essential importance to the following discussion.   

A standard approach to understand the type/cotype properties of a space $X$ consists in the computation of its volume ratio, $\mathrm{vr}(X)$, a notion originated in \cite{Szarek_78,Szarek_80}. The reason is that this parameter provides a lower bound for the cotype-2 constant. This is the content of the  following result due to Milman and Bourgain:
\begin{theorem}[\cite{Bourgain_87}]\label{ThmBourgain}
	For a Banach space $X$,
	$$
	\mathrm{C}_2(X) \log\left( 2 \, \mathrm{C}_2(X)  \right) \gtrsim \mathrm{vr} (X) .
	$$
\end{theorem}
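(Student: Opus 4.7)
The plan is to pass from $\mathrm{vr}(X)$ to a gaussian mean-width functional on $X^*$ via a suitable normalization, and then control this functional through type/cotype duality sharpened by Pisier's K-convexity theorem. First I would place $X$ in John's position, so that the John ellipsoid of $B_X$ is the Euclidean unit ball $B_2^n$; then $B_2^n \subseteq B_X$ and $\mathrm{vr}(X) = (|B_X|/|B_2^n|)^{1/n}$. John's decomposition of the identity supplies contact points $u_i \in S^{n-1}\cap \partial B_X$ and weights $\lambda_i>0$ with $\sum_i \lambda_i\, u_i \otimes u_i = I_n$ and $\sum_i \lambda_i = n$, which I will use to convert gaussian averages into Rademacher-style sums.

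Next I would apply Urysohn's inequality, which bounds the $n$-th root of the volume of a symmetric convex body by its mean width. Rewriting the mean width as a gaussian expectation, this yields
$$\mathrm{vr}(X) \;\lesssim\; n^{-1/2}\,\mathbb{E}\,\|G\|_{X^*},$$
where $G$ is a standard gaussian in $\mathbb{R}^n$. The problem reduces to showing $\mathbb{E}\|G\|_{X^*} \lesssim \sqrt{n}\cdot \mathrm{C}_2(X)\log \mathrm{C}_2(X)$. Using the John decomposition one can expand $G$ as a weighted combination of the $u_i$'s and apply type-$2$ estimates for $X^*$. Since the $u_i$ lie on the Euclidean sphere and have $\|u_i\|_X = 1$, the relevant square-sum of norms is controlled by $n$, so the bound reduces to the type-$2$ constant $\mathrm{T}_2(X^*)$.

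Finally, the delicate ingredient is the type-$2$ bound itself. The generic duality \eqref{typecotype_duality} yields $\mathrm{T}_2(X^*) \lesssim \log(\dim X)\,\mathrm{C}_2(X)$, but the dimensional log factor is too weak: in our intended application $\dim X$ grows polynomially in $n$ while we only want a log of $\mathrm{C}_2(X)$. Pisier's K-convexity theorem provides this improvement: $K(X) \lesssim 1 + \log \mathrm{C}_2(X^*)$, which combined with the inequality $\mathrm{T}_2(X^*) \le K(X)\,\mathrm{C}_2(X)$ produces the sharper estimate with $\log \mathrm{C}_2(X)$ in place of $\log(\dim X)$. Assembling the three steps gives $\mathrm{vr}(X) \lesssim \mathrm{C}_2(X)\log \mathrm{C}_2(X)$.

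The main obstacle is precisely the last step: replacing the dimensional log factor from the naïve type-cotype duality by a log depending only on the cotype constant. This is the content of Pisier's K-convexity theorem and rests on a nontrivial interpolation argument for the Rademacher projection on Hermite-chaos decompositions. By comparison, the geometric steps (John position, Urysohn, John's decomposition and the resulting gaussian representation) are standard manipulations, and the passage from $\mathbb{E}\|G\|_{X^*}$ to $\mathrm{T}_2(X^*)$ is a direct consequence of the definition of type once the John decomposition is used to provide a well-normalized frame.
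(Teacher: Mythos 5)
You are proving a theorem that the paper does not itself prove: it is cited directly from Bourgain--Milman \cite{Bourgain_87}, so there is no ``paper's proof'' to compare against. I will therefore assess your argument on its own terms.

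The geometric first half is fine: in John's position Urysohn's inequality gives $\mathrm{vr}(X) \lesssim n^{-1/2}\,\mathbb{E}\,\|G\|_{X^*}$, and since for contact points of the John ellipsoid one has $\|u_i\|_{X^*}=1$ (the supporting hyperplane at $u_i$ has normal $u_i$ itself), writing $G \stackrel{d}{=} \sum_i \sqrt{\lambda_i}\,g_i\,u_i$ with i.i.d.\ gaussians $g_i$ and invoking (gaussian) type $2$ of $X^*$ does give $\mathbb{E}\|G\|_{X^*} \lesssim \sqrt{n}\,\mathrm{T}_2^{g}(X^*)$, hence $\mathrm{vr}(X) \lesssim \mathrm{T}_2^{g}(X^*) \lesssim \mathrm{T}_2(X^*) \le K(X)\,\mathrm{C}_2(X)$. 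The collapse happens in the last step, where you need to remove the $K$-convexity constant and you assert that Pisier's theorem yields $K(X) \lesssim 1 + \log \mathrm{C}_2(X^*)$.

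This inequality is false, and even if corrected it would not close the argument. Take $X=\ell_\infty^n$: then $X^*=\ell_1^n$ with $\mathrm{C}_2(\ell_1^n)\approx 1$, so the right-hand side is $O(1)$, while $K(\ell_\infty^n)\to\infty$ with $n$ (on the order of $\sqrt{\log n}$). The version without the dual, $K(X)\lesssim 1+\log\mathrm{C}_2(X)$, fails similarly for $X=\ell_1^n$, where $\mathrm{C}_2(\ell_1^n)\approx 1$ but again $K(\ell_1^n)\to\infty$. Note also that even granting the inequality as you wrote it, it would yield $\mathrm{vr}(X)\lesssim \mathrm{C}_2(X)\,\bigl(1+\log\mathrm{C}_2(X^*)\bigr)$, with a $\log$ of the \emph{dual} cotype constant, which is not the claimed statement and in general is much weaker. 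More fundamentally, the inequality $\mathrm{vr}(X) \lesssim \mathrm{T}_2(X^*)$ that your geometric reduction produces cannot lead to Bourgain--Milman by any purely dual-type estimate: for $X=\ell_1^n$ one has $\mathrm{T}_2(\ell_\infty^n)\approx\sqrt{\log n}$ while $\mathrm{C}_2(\ell_1^n)\log\mathrm{C}_2(\ell_1^n) = O(1)$, so $\mathrm{T}_2(X^*)$ is genuinely larger than the target bound. The Urysohn/type route is lossy, and the Bourgain--Milman argument replaces your final step with a finer mechanism (iteration over subspaces / a truncated or level-restricted Rademacher-projection estimate in which the relevant ``degree'' is tied to $\log\mathrm{C}_2(X)$, not the ambient dimension) precisely so that it never needs the global $K(X)$ to be controlled by $\mathrm{C}_2(X)$. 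As written, your proof has a real gap at that point.
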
 
 Taking into account the duality between type and cotype constants, Equation \eqref{typecotype_duality2}, the last result translates into a lower bound for the type-2 constant of the dual space:
 $$
 \T_2(X) \ge \mathrm{C}_2(X^*) \gtrsim_{\log} \mathrm{vr} (X^*),
 $$
giving as another technique to try to disprove  \eqref{Conjecture_3}. In this section we upper bound the volume ratio of various \emph{valid} norms obtaining results that are again compatible with a positive resolution of the conjecture of the previous section.



We start defining the volume ratio of a normed space $X$, $\mathrm{vr}(X)$. Given a $d$-dimensional Banach space $X$,
\beq\label{Vr_Eq0}
\mathrm{vr} (X)  = \left( \frac{\mathrm{vol}_d (B_X) }{  \mathrm{vol}_d  (\mathcal{E}_X)  }  \right)^{1/d},
\eeq
where $\mathcal{E}_X$ is the ellipsoid of maximal volume contained in $B_X$ and $\mathrm{vol}_d( \, \cdot \, )$ denotes the $d$-dimensional Lebesgue measure.  Before stating the main result of this section, we make now a tiny digression about the relation between volume ratio and cotype. In few words, this relation is still far from being well understood. The question about the existence of some direct relation between the  volume ratio of a space and its cotype -- in the opposite direction to Theorem \ref{ThmBourgain} -- was already asked in the seminal work \cite{Szarek_80} and also in the more recent \cite{Tomczak16}. While it is known that volume ratio and cotype cannot be equivalent in general\footnote{ It can be seen that the volume ratio of the space   {$\ell_\infty^{n^\alpha} \oplus_\infty\ell_2^n$} is bounded by a universal constant for any $n \in \mathbb{N}$ and any $ 0 < \alpha <1$. However, the cotype-2 constant of  this space is of order $n^{\alpha/2}$. We are indebted to Elisabeth Werner for kindly communicating us this counterexample. }, it is not known whether a converse to Theorem \ref{ThmBourgain} might hold (maybe up to factors that are logarithmic in the dimension) for spaces with additional structure such as tensor norms on tensor products of $\ell_p$ spaces, for instance. 
  Studying further these questions is an extremely interesting avenue to tackle the problems we are concerned with in this work, at the same time as shedding light on the relation between two very fundamental notions in local Banach space theory.

 
 We focus on  spaces of the form $\big( \s^{m, n } \otimes_{\alpha} \s^{m, n} \big)\otimes_\ve  \ell_2^{ p }$,   where  $\s^{m,n} $ must be understood as $\ell_2^m \otimes_\pi \ell_2^n$ and the $\ell_2$ spaces that appear from now on, as real Hilbert spaces unless the opposite is indicated.  We prove:

\begin{theorem}\label{Thm_vr}
	Let $\alpha$ be  a \emph{tensor norm}  such that, 	for any $ x \in \s^{m, n } \otimes_{\alpha} \s^{m, n}$:
\begin{enumerate}
	\item 	$
	\| x  \|_{\s^{m, n } \otimes_{\alpha} \s^{m, n} } \le   \| x  \|_{\s^{m, n } \otimes_{\pi} \s^{m, n} } ^{1/2}   \| x  \|_{\s^{m, n } \otimes_{\ve} \s^{m, n} } ^{1/2}  ;
	$
	\item  $    \| x  \|_{ \ell_2^{n^2 m^2} }  \le \| x  \|_{\s^{m, n } \otimes_{\alpha} \s^{m, n} } .$
	\end{enumerate}  
	Then, considering $X = \left( \s^{m, n } \otimes_{\alpha} \s^{m, n}   \right) \otimes_\ve \ell_2^{p}$,
	$$
	\mathrm{vr} (X^*) \lesssim n^{3/4}.
	$$
\end{theorem}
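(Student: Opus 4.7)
The plan is to bound $\mathrm{vr}(X^*)$ using the classical two-step strategy: (i) upper-bound $\mathrm{vol}(B_{X^*})^{1/d}$, (ii) lower-bound the volume of the ellipsoid of maximal volume inscribed in $B_{X^*}$. Here $d := \dim(X) = n^2 \tilde k'^2 \cdot k \tilde k'$. For (i), I would invoke the Blaschke--Santal\'o inequality $\mathrm{vol}(B_X)\,\mathrm{vol}(B_{X^*}) \lesssim d^{-d}$ to reduce the problem to a \emph{lower} bound on $\mathrm{vol}(B_X)^{1/d}$.

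To lower-bound $\mathrm{vol}(B_X)^{1/d}$, set
$$
X_\pi := (\s^{\tilde k'\!, n} \otimes_\pi \s^{\tilde k'\!, n}) \otimes_\ve \ell_2^{k \tilde k'}, \qquad X_\ve := (\s^{\tilde k'\!, n} \otimes_\ve \s^{\tilde k'\!, n}) \otimes_\ve \ell_2^{k \tilde k'}.
$$
Hypothesis~1 combined with the metric mapping property of $\otimes_\ve$ yields the pointwise bound $\|x\|_X \le \|x\|_{X_\pi}^{1/2}\,\|x\|_{X_\ve}^{1/2}$, so $B_X$ contains the \emph{geometric-mean body} of $B_{X_\pi}$ and $B_{X_\ve}$. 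A standard consequence of the Pr\'ekopa--Leindler inequality then gives
$$
\mathrm{vol}(B_X)^{1/d} \ \ge\  \mathrm{vol}(B_{X_\pi})^{1/(2d)} \cdot \mathrm{vol}(B_{X_\ve})^{1/(2d)}.
$$
Plugging into Blaschke--Santal\'o, the question is reduced to explicit volume estimates for the injective and projective spaces $B_{X_\ve}$ and $B_{X_\pi}$.

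For (ii), the inscribed ellipsoid, I would use hypothesis~2: dualizing $\|\cdot\|_{\alpha}\ge \|\cdot\|_{\ell_2}$ on $\s \otimes \s$ gives $\|\cdot\|_{\alpha^*} \le \|\cdot\|_{\ell_2}$ on $M \otimes M$, from which $B_{X^*}$ contains the unit ball of $\s^{n^2 \tilde k'^2} \otimes_\pi \ell_2^{k \tilde k'}$, i.e., a nuclear-norm ball on rectangular matrices. Standard estimates for the in-radius of Schatten-$1$ balls then produce a Euclidean ball of controllable radius, whose volume provides the required lower bound on $\mathrm{vol}(\mathcal{E}_{X^*})^{1/d}$.

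The main obstacle will be Step~(i): obtaining sharp estimates of $\mathrm{vol}(B_{X_\ve})^{1/d}$ and $\mathrm{vol}(B_{X_\pi})^{1/d}$. The first is manageable through the identification $\ell_2 \otimes_\ve \ell_2 = M$ and known volume estimates for operator-norm balls in tensor products of Schatten classes. The second is genuinely harder: the projective tensor product $\s^{\tilde k'\!,n} \otimes_\pi \s^{\tilde k'\!,n}$ is the precise object that already forced the delicate UMD/cotype argument in Proposition~\ref{Type_Inttheta}. I would estimate it by dualizing and bounding the Gaussian mean width $\mathbb{E}\|g\|_{X_\pi^*}$ via Urysohn's inequality, so that the Rademacher/Gaussian averages controlling the type constants from Section~\ref{Sec2.4.3} come into play. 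I expect the exponent $n^{3/4}$ in the statement to emerge precisely from this step, mirroring the $n^{3/4}$ in Proposition~\ref{Type_Inttheta}, and the final bound to then follow by combining the four estimates above.
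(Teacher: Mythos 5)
Your overall scaffold---Blaschke--Santal\'o to pass to a volume lower bound on $B_X$, Gaussian/Chevet-type estimates on the two extreme tensor norms, and hypothesis~2 for the inscribed ellipsoid---is broadly the same shape as the paper's argument, though you route hypothesis~1 through a Pr\'ekopa--Leindler/geometric-mean-body step rather than through H\"older on the Gaussian average. That alternative is fine in principle. There is, however, a concrete gap: you only produce \emph{one} bound on the in-radius of $B_{X^*}$, namely the Euclidean ball of radius $(k\tilde k')^{-1/2}$ coming from hypothesis~2. This is not enough. Whatever route you take to lower-bound $\mathrm{vol}(B_X)^{1/d}$, Chevet's inequality for the $\otimes_\ve \ell_2^{k\tilde k'}$ factor forces a term proportional to
$$
\big\|\id:(\s^{\tilde k'\!,n}\otimes_\alpha\s^{\tilde k'\!,n})^*\to\ell_2^{n^2\tilde k'^2}\big\|\cdot\sqrt{k\tilde k'}\;=\;\big\|\id:\ell_2^{n^2\tilde k'^2}\to\s^{\tilde k'\!,n}\otimes_\alpha\s^{\tilde k'\!,n}\big\|\cdot\sqrt{k\tilde k'}.
$$
After dividing by $\sqrt{d}=n\tilde k'\sqrt{k\tilde k'}$ and multiplying by your ellipsoid factor $\sqrt{k\tilde k'}$, this contributes $\sqrt{k\tilde k'}\,\|\id:\ell_2\to\alpha\|/(n\tilde k')$ to $\mathrm{vr}(X^*)$, which is unbounded in $k$ (and $k$ is the cheaters' shared dimension, which must be allowed to grow). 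The paper kills precisely this term with a second, complementary bound on $\|\id:\ell_2^d\to X^*\|$: using that $\s^{\tilde k'\!,n}\otimes_\alpha\s^{\tilde k'\!,n}$ has \emph{enough symmetries in the orthogonal group} (because $\alpha$ is a tensor norm), Lemma~5.2 of \cite{Defant06} gives the exact trace-duality identity $\pi_2(\id:\s\otimes_\alpha\s\to\ell_2)= n\tilde k'/\|\id:\ell_2\to\s\otimes_\alpha\s\|$, hence $\|\id:\ell_2^d\to X^*\|\le n\tilde k'/\|\id:\ell_2\to\alpha\|$. Taking the minimum of this with $\sqrt{k\tilde k'}$ lets one pair each of the two Chevet summands with whichever bound cancels the $k$- and $\|\id:\ell_2\to\alpha\|$-dependence, giving $1+n^{3/4}$. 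Without that second bound, your argument does not close; you should either import the $\pi_2$/enough-symmetries identity, or find some other $k$- and $\alpha$-uniform control of $\|\id:\ell_2\to\s\otimes_\alpha\s\|$, which hypotheses~1--2 alone do not supply.

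Two smaller points. First, your ``in-radius of a Schatten-$1$ ball'' estimate on $X^*$ just recovers the bound $(k\tilde k')^{-1/2}$ already implied by hypothesis~2 plus metric mapping; to relate this to $\mathcal E_{X^*}$ cleanly you should also note that $X^*$ has enough symmetries, so $\mathcal E_{X^*}$ is literally a Euclidean ball of radius $\|\id:\ell_2^d\to X^*\|^{-1}$ (step~(i.) of the paper's proof). Second, the volume lower bounds for $B_{X_\pi}$ and $B_{X_\ve}$ that your geometric-mean decomposition needs are not ``explicit volume estimates'' lying around in the literature; they are again most naturally obtained from the same Gaussian-mean-width inequality ($\mathrm{vol}(B_Y)^{-1/d}\le\mathbb{E}\|G\|_Y$) plus Chevet, which loops back to the very terms discussed above.
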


The proof uses several standard tools from geometric Banach space theory, mainly following the approach of   \cite{Tomczak16}. But before going into the proof, we note that some of our \emph{valid} norms indeed fulfill the conditions of the theorem. { For illustrative purposes, we briefly comment on the case of the (complex) spaces $(\s^{m, n } \otimes_{\mathfrak{S}_2^{w}} \s^{m, n}) \otimes_\ve \ell_2^p $  {and}  $( \s^{m, n } \otimes_{(\varepsilon,\pi)_{1/2}} \s^{m, n} )\otimes_\ve \ell_2^p$}   { that have appeared before in our work. Both fulfil conditions 1 and 2 in the statement above.}   {Let's see that:

From Proposition \ref{Prop_RelNorms} we know that 
\begin{equation}\label{eq:vr-1}
\| x  \|_{\s^{m, n } \otimes_{\mathfrak{S}_2^{w}} \s^{m, n} } \le \| x  \|_{\s^{m, n } \otimes_{(\varepsilon,\pi)_{1/2}} \s^{m, n} }.
\end{equation}
Standard properties of interpolation (Theorem \ref{Int_IntProp}) guarantee that $(\varepsilon,\pi)_{1/2}$ is a tensor norm\footnote{at least in the category of all finite dimensional normed spaces, which is the relevant setup for this work} fulfilling 
\begin{equation}\label{eq:vr-2}
\| x  \|_{(Y_0,Y_1)_{1/2}}\le \| x  \|_{Y_0} ^{1/2}   \| x  \|_{Y_1 } ^{1/2}\text{, \quad for any finite dimensional Banach spaces $Y_0$, $Y_1$.}
\end{equation} In order to see this last inequality, just apply, for a given $x\in Y_0  \otimes Y_1 $, Theorem \ref{Int_IntProp} with $X_0=X_1=\mathbb R$ and $f(\lambda)=\lambda x$. 

Finally, using the fact that ${\mathfrak{S}_2^{w}}$ coincides with the Euclidean (or Hilbert-Schmidt) norm in the tensor product of Hilbert spaces (fact that follows directly from the definition of the ${\mathfrak{S}_2^{w}}$ norm), together with the facts that it is a tensor norm and the identity map $\mathcal{S}_1^{m,n}\rightarrow \mathcal{S}_2^{m,n}$ has norm $\le 1$, one gets
 \begin{equation}\label{eq:vr-3} \| x  \|_{ \ell_2^{n^2 m^2} }  \le \| x  \|_{\s^{m, n } \otimes_{\mathfrak{S}_2^{w}} \s^{m, n}.} 
 \end{equation}
The desired claim follows putting together equations (\ref{eq:vr-1}),  (\ref{eq:vr-2}) and  (\ref{eq:vr-3}.  { An important point to stress here is that, in order to apply Theorem \ref{Thm_vr}, real versions of these spaces must be considered.  In the first case, one obtains a real version of $\mathfrak{S}_2^w$ just restricting  the underlying field to $\mathbb R$ in Definition \ref{Def_WeakSchatten}. The second case is a bit more subtle since the complex interpolation method is inherently defined over complex normed spaces. A way to formalize the discussion at this point is considering the real interpolation method \cite[Chapter 3]{BerghLofstrom76}.  Following the standpoint fixed at the beginning of this section, we leave aside these technicalities that we think do not add much to our discussion.}
 
 }

An important feature of  spaces  {of the form  $\big( \s^{m, n } \otimes_{\alpha} \s^{m, n} \big)\otimes_\ve  \ell_2^{ p }$} is {the fact} that, by construction, they \emph{have enough symmetries}. This will be exploited in the following proof with no further mention. The reader can find some additional information in Appendix \ref{Appendix_EnoughSym}. 

\begin{proof}
	
We start noticing that  $\alpha$ being a tensor norm translates into  the fact  that $X$   \emph{has enough symmetries}. This means that the only operator on that space that commutes with every isometry is the identity (or a multiple of it). The same happens with the dual $X^*$.  Next we give an alternative way to compute the volume ratio using this property. To simplify notation, denote $d = \dim(X)  = n^2 m^2 p$. Then, we can bound \eqref{Vr_Eq0} as follows:

	\begin{align}
	\mathrm{vr} (X^*)   
	&\stackrel{(i.)}{=} 
	\left( \frac{\mathrm{vol}_d (B_{X^*}) }{  \mathrm{vol}_d  (B_{\ell_2^d})  }  \right)^{1/d} \left\| \id: \ell_2^d \rightarrow X^* \right\|
	\nn\\
	&\!\stackrel{(ii.)}{\le} 
		\left( \frac{  \mathrm{vol}_d  (B_{\ell_2^d})  }{  \mathrm{vol}_d (B_{X})  }  \right)^{1/d} \left\| \id: \ell_2^d \rightarrow X^* \right\|
	\nn	\\
		&\!\stackrel{(iii.)}{\le} 
		\frac{\left\| \id: \ell_2^d \rightarrow X^* \right\|}{\sqrt{d}}    \left( \frac{  1  }{  \mathrm{vol}_d (B_{X})  }  \right)^{1/d} 
		\nn \\\label{Vr_Eq1}
		&\!\stackrel{(iv.)}{\le} 
		\frac{\left\| \id: \ell_2^d \rightarrow X^* \right\|}{\sqrt{d}}   \ \mathbb{E} \, \| G \|_{X} \, ,
	\end{align}
		where $G = \sum_{i,j,k,l,h} g_{ijklh} |i\xa j|\otimes |k\xa l| \otimes \la h |$ is a tensor in $X^*$ with i.i.d. gaussian entries $g_{ijklh}$. The expectation is over these random variables. With respect to  the chain of claims implicit in the previous manipulation: (i.) follows from the fact that the maximal volume ellipsoid $\mathcal{E}_{X^*}$ coincides with $\left\| \id: \ell_2^d \rightarrow X^* \right\|^{-1}  B_{\ell_2^d}$ when $X^*$ has enough symmetries \cite[Section 16]{Tomczak1989banach}, in (ii.) we have used the famous Blaschke-Santaló inequality \cite[Section 7]{pisier89_book}, in (iii.), the standard volume estimate for the Euclidean ball $\mathrm{vol}_d (\mathsf{ball} ({\ell_2^d})) \approx d^{-d/2}$ and  (iv.) follows from Lemma 3.4. in \cite{Tomczak16}.

	As a consequence, to obtain the stated bound we have to estimate the quantities   $\| \id: \ell_2^d \rightarrow X^* \|$   and $ \mathbb{E}  \, \| G \|_{X}$.
	\begin{itemize}
	\item Upper bounding  $\left\| \id: \ell_2^d \rightarrow X^* \right\|$:
	
	We show two complementary bounds for this quantity. The first one uses  the second condition in the statement of the theorem, that can be equivalently stated as: $ \big \| \id : \s^{m, n } \otimes_{\alpha} \s^{m, n} \longrightarrow \ell_2^{n^2m^2} \big\| \le 1.$ This allows us to bound:
	\begin{align*}
	\left\| \id: \ell_2^d \rightarrow X^* \right\|  &= \left\| \id: X \rightarrow \ell_2^d  \right\| 
	\\
	&= \left\| \id:  \left( \s^{m, n} \otimes_\alpha \s^{m, n} \right) \otimes_\ve \ell_2^{p}  \longrightarrow \ell_2^{n^2 m^2 p} \right\|
	\\
	& \le \left\| \id:  \ell_2^{n^2m^2} \otimes_\ve \ell_2^{p}  \longrightarrow \ell_2^{n^2 m^2 p} \right\|
	\\
	&\le \sqrt{p}.
	\end{align*}
	The mentioned hypothesis was used in the first inequality above.
	
	Our second bound comes from the observation  that the operator norm we want to bound is indeed upper bounded by the 2-summing norm of the identity between $\s^{m, n} \otimes_\alpha \s^{m, n}$ and $\ell_2^{ n^2 m^2}$. We can alternatively understand the studied norm as:
	\begin{align*}
	\left\| \id: \ell_2^d \rightarrow X^* \right\|  &= \left\| \id: X \rightarrow \ell_2^d  \right\| 
	\\
	&= \left\| \id: \ell_2^{p} \otimes_\ve \left( \s^{m, n} \otimes_\alpha \s^{m, n} \right) \longrightarrow \ell_2^{p}(   \ell_2^{n^2 m^2}  ) \right\|
	\\
	& \le \sup_{ p\in\mathbb{N} } \left\| \id: \ell_2^{p} \otimes_\ve \left( \s^{m, n} \otimes_\alpha \s^{m, n} \right) \longrightarrow \ell_2^{p}(   \ell_2^{n^2 m^2}  ) \right\|
	\\
	&= \pi_2\left(   \id:  \s^{m, n} \otimes_\alpha \s^{m, n}  \longrightarrow  \ell_2^{n^2 m^2}   \right),
	\end{align*}
	where the last equality is simply the definition of the 2-summing norm of the indicated map -- recall \eqref{Def_2summing}. While now we don't need the hypothesis used before, we need to invoke the tensor norm properties of $\alpha$. Hopefully,   thanks to this property\footnote{ See again Appendix \ref{Appendix_EnoughSym} for clarification.}, Lemma 5.2. of \cite{Defant06} provides us a satisfactory way to compute the above norm. Under the consideration that $\s^{m, n} \otimes_\alpha \s^{m, n}$ as well as $\ell_2^{n^2 m^2}$ have \emph{enough symmetries in the orthogonal group} -- see Appendix \ref{Appendix_EnoughSym} --, the cited lemma allows to write the following  identity:
	$$
	\pi_2\left(   \id:  \s^{m, n} \otimes_\alpha \s^{m, n}  \longrightarrow  \ell_2^{n^2 m^2}   \right) = \frac{ n m  }{ \left\| \id:  \ell_2^{n^2 m^2}  \longrightarrow \s^{m, n} \otimes_\alpha \s^{m, n} \right\| }.
	$$
	Taking into account the two bounds above, we can state that, under the conditions in the theorem:
	\beq\label{Vr_Eq2}
		\left\| \id: \ell_2^d \rightarrow X^* \right\| \le  \min\left(\sqrt{p}, \ \frac{ n m  }{ \left\| \id:  \ell_2^{n^2 m^2}  \longrightarrow \s^{m, n} \otimes_\alpha \s^{m, n} \right\| } \right).
		\eeq 
	\item Upper bounding $ \mathbb{E} \, \| G \|_{X}$:
	
	The upper estimate of this quantity follows from Chevet's inequality \cite{Chevet78}, see also \cite[Section 43]{Tomczak1989banach}. According to that:
	\begin{align*}
\mathbb{E}  \, \| G \|_{X} &= \mathbb{E} \,  \| G \|_{\left( \s^{m, n } \otimes_{\alpha} \s^{m, n}   \right) \otimes_\ve \ell_2^{p}}
\\
&\le  
 \sup_{\varphi \in B_{\left( \s^{m, n } \otimes_{\alpha} \s^{m, n}   \right)^*} } \left( \sum_{i,j,k,l} \Big| \varphi\left( |i\xa j|\otimes |k\xa l| \right) \Big|^2  \right)^{1/2}  \,   
	\mathbb{E} \, \Big\| \, \sum_h g_h \la h | \, \Big\|_{ \ell_2^{p} }
	\\
	& + 
	\sup_{\varphi \in B_{\left( \ell_2^{p}  \right)^*} } \left( \sum_{h} \left| \varphi\left( \la h | \right) \right|^2  \right)^{1/2}  \,   
	\mathbb{E} \, \Big\| \, \sum_{i,j,k,l} g_{ijkl} |i\xa j|\otimes |k\xa l| \, \Big\|_{ \s^{m, n } \otimes_{\alpha} \s^{m, n}   }.
	\end{align*}
	Here we note the coincidence of the 2-sums above with the norm of the following identity maps  {(in both equations below, the LHS is merely the explicit expression of the norms in the RHS)}:
	\begin{align*} 
 \sup_{\varphi \in B_{\left( \s^{m, n } \otimes_{\alpha} \s^{m, n}   \right)^*} } \left( \sum_{i,j,k,l} \left| \varphi\left( |i\xa j|\otimes |k\xa l| \right) \right|^2  \right)^{1/2}
	&=
	\big\| \id: \left(  \s^{m, n} \otimes_\alpha \s^{m, n}  \right)^* \longrightarrow \ell_2^{n^2 m^2}  \big\|,
	\\
	\sup_{\varphi \in B_{\left( \ell_2^{p}  \right)^*} } \left( \sum_{h} \left| \varphi\left( \la h| \right) \right|^2  \right)^{1/2}  
	&=   \big\| \id: \ell_2^{p} \longrightarrow \ell_2^{p}  \big\| .
	\end{align*}
	Furthermore, to simplify the presentation we also introduce the notation $  \g=  \sum_{i,j,k,l} g_{ijkl} |i\xa j|\otimes |k\xa l| $. With these comments, we can write
	\begin{align*} 
	\mathbb{E}  \, \| G \|_{X} &
	\le
	  \Big\| \id: \left(  \s^{m , n} \otimes_\alpha \s^{m , n}  \right)^* \longrightarrow \ell_2^{n^2 m^2}  \Big\|  \  
	 \mathbb{E} \ \Big\| \, \sum_h g_h \la h | \, \Big\|_{ \ell_2^{p} } \
	 \\
	 & + \ 
	 \big\| \id: \ell_2^{p} \longrightarrow \ell_2^{p}  \big\|  \   
	 \mathbb{E} \ \|  \g  \|_{ \s^{m, n } \otimes_{\alpha} \s^{m, n}   }
	 \\
	 & \approx
	   \big\| \id: \left(  \s^{m, n} \otimes_\alpha \s^{m, n}  \right)^* \longrightarrow \ell_2^{n^2 m^2}  \big\|  \   
	 \sqrt{p} \
	  +   
	 \ \mathbb{E} \   \|  \g  \|_{ \s^{m, n } \otimes_{\alpha} \s^{m, n}   }.
	\end{align*}
	Now, it just left to bound $\mathbb{E} \,  \|  \g  \|_{ \s^{m, n } \otimes_{\alpha} \s^{m, n}   }$. For that, we make use of  hypothesis 1 in the statement, that is:
	\begin{align*}
	  \mathbb{E} \,  \|  \g  \|_{ \s^{m, n } \otimes_{\alpha} \s^{m, n}   }
	&\le 
	\mathbb{E} \, \left(   \|  \g  \|_{ \s^{m, n } \otimes_{\pi} \s^{m, n}   }^{1/2}      \|  \g  \|_{ \s^{m, n } \otimes_{\ve} \s^{m, n}   }^{1/2}     \right) 
	\\
	&\le 
	 \left( \mathbb{E} \,  \|  \g  \|_{ \s^{m, n } \otimes_{\pi} \s^{m, n}   } \right)^{1/2} \left(  \mathbb{E} \,     \|  \g  \|_{ \s^{m, n } \otimes_{\ve} \s^{m, n}   }    \right)^{1/2}  . 
	\end{align*}
	
	The first term  can be bounded as follows:  {we use the isometric equivalence $\s^{m,n} \simeq \ell_2^{m} \otimes_\pi \ell_2^n$ and the fact that the projective tensor norm is commutative to obtain that $\s^{m, n } \otimes_{\pi} \s^{m, n}\simeq \ell_2^n \otimes_\pi \ell_2^n \otimes_\pi \s^m$ isometrically. Furthermore, considering that $\| \id : \ell_1^{n^2}   \simeq  \ell_1^n \otimes_\pi \ell_1^n  \rightarrow \ell_2^n \otimes_\pi \ell_2^n  \| \le 1$ and $\| \id: \Ss_2^m \rightarrow \s^m \| \le \sqrt{m}$, it is also true that:
		\begin{align*}
		\mathbb{E} \,  \|  \g  \|_{  \ell_2^n \otimes_\pi \ell_2^n \otimes_\pi \s^{m}  }
		\le
		\sqrt{m} \ \mathbb{E} \,  \|  \g  \|_{ \ell_1^{n^2} ( \Ss_2^{m} ) }.
\end{align*}
Similarly, using now that $\| \id: \ell_2^{n^2} \rightarrow \ell_1^{n^2} \| \le  n$,
		\begin{align*}
	\sqrt{m} \ \mathbb{E} \,  \|  \g  \|_{ \ell_1^{n^2} ( \Ss_2^{m} ) }
	\le
	n \sqrt{m} \ \mathbb{E} \,  \|  \g  \|_{ \ell_2^{n^2} ( \Ss_2^{m} ) }
	= n \sqrt{m} \ \mathbb{E} \,  \|  \g  \|_{ \ell_2^{n^2 m^2}  }.
\end{align*}
The estimate $\mathbb{E} \,  \|  \g  \|_{ \ell_2^{n^2 m^2}  } \lesssim nm$ allows us to conclude:
	\begin{align*}
	\mathbb{E} \,  \|  \g  \|_{ \s^{m, n } \otimes_{\pi} \s^{m, n} }
	\lesssim
	n \sqrt{m} n m = n^{2} m^{3/2}.
\end{align*} }
	
	For the other term, we use again Chevet's inequality:
		\begin{align*}
	\mathbb{E} \,    \Big\| \, \sum_{i,j,k,l} g_{ijkl} |i\xa j|\otimes |k\xa l| \, \Big\|_{ \s^{m, n } \otimes_{\ve} \s^{m, n}   }  
	&\le
	2 \left\| \id:  \big( \s^{m, n} \big)^* \longrightarrow \ell_2^{n m}  \right\| \
	\mathbb{E} \, \Big\| \, \sum_{i,j} g_{ij} |i\xa j| \, \Big\|_{ \Ss_1^{m, n }  }
	\\
	& =
	2   \sqrt{n} \
	\mathbb{E} \, \Big\| \, \sum_{i,j} g_{ij} |i\xa j| \, \Big\|_{ \Ss_1^{m, n }  }
	\\
	& \lesssim
	 \sqrt{n}  \, n \sqrt{m} = n^{3/2} m^{1/2} .
	\end{align*}
	With the previous bounds, we obtain:
	\begin{equation}\label{Vr_Eq3}
	 \mathbb{E} \, \| G \|_{X} \ \lesssim \ \big\| \id:   \ell_2^{n^2 m^2}  \longrightarrow    \s^{m, n} \otimes_\alpha \s^{m, n}    \big\| \ \sqrt{p} +  n^{7/4} m.
	 \end{equation}
	\end{itemize}

	To finish, we introduce in \eqref{Vr_Eq1} the information given by \eqref{Vr_Eq2} and \eqref{Vr_Eq3}:
	\begin{align*}
	\mathrm{vr} (X^*)   
	&{\le} 
	\frac{\left\| \id: \ell_2^d \rightarrow X^* \right\|}{\sqrt{d}}   \ \mathbb{E} \, \| G \|_{X} \, 
	\\
	&\lesssim   \frac{   \min\left(\sqrt{p}, \ \frac{ n m }{ \left\| \id:  \ell_2^{n^2 m^2}  \longrightarrow \s^{m, n} \otimes_\alpha \s^{m, n} \right\| } \right)  }{n m \sqrt{p}}  
			 \ \left( \big\| \id:   \ell_2^{n^2 m^2}  \longrightarrow    \s^{m, n} \otimes_\alpha \s^{m, n}   \big\| \ \sqrt{p} \ + \   n^{7/4} m \right)
	\\
	& \le 		 \frac{ n m }{n m \sqrt{p}  \left\| \id:  \ell_2^{n^2 m^2}  \longrightarrow \s^{m, n} \otimes_\alpha \s^{m, n} \right\| }
	\ \big\| \id:   \ell_2^{n^2 m^2}  \longrightarrow    \s^{m, n} \otimes_\alpha \s^{m, n}   \big\| \ \sqrt{p}
	\\&+
	\frac{\sqrt{p}}{n m\sqrt{p}}   n^{7/4} m
 = 1 + n^{3/4},
	\end{align*}
	that is enough to conclude the proof of the statement of the theorem.

\end{proof}

 \section{Discussion}\label{Sec7}

In this work we have proposed a protocol for PV, referred as $G_{Rad}$ throughout the text, and proved lower bounds on the quantum resources necessary to break it. Our bounds, appearing in Theorem \ref{mainThm_2}, do not answer in a definite way Question \ref{question1} and, in particular, are not enough for providing security guarantees for $G_{Rad}$ in full generality. The reason is that the bounds presented in Theorem \ref{mainThm_2} depend on some additional properties of the strategy under consideration: the parameters $\sigma^i$, $\sigma^{ii}$, related with the regularity of the strategy when regarded as a vector-valued assignment on the Boolean hypercube, cf. Section \ref{Sec4}. However, our Theorem \ref{mainThm_2} is strong enough to encapsulate some previous results. As mentioned in Section \ref{Sec1}, the hypotheses of Corollary \ref{mainCor} are satisfied by the teleportation based attacks of \cite{Buhrman2011} and \cite{KonigBeigi} and also by Universal Programmable Quantum Processors, rederiving in that way some  results in \cite{Buhrman2011,KonigBeigi,Kubicki_19}.   Furthermore, we have related our Question \ref{question1}  with the type/cotype properties of specific Banach spaces and, in fact, the   obtained results led us to put forward a conjecture about these mathematical objects. The positive solution of this conjecture would imply a major progress in the understanding of $G_{Rad}$ -- See Section \ref{Sec6} for a formal statement of the conjecture and details about the connection with the security of $G_{Rad}$. In this last section we have also  provided some estimates supporting the conjecture. We have proven bounds for the volume ratio of the spaces involved there relating, as a byproduct, our conjecture, and therefore the problem about the security of $G_{Rad}$ with open problems in Banach space theory concerning the relation between cotype and volume ratio.  
 
 The future direction for this work is clear: trying to resolve the status of the security of $G_{Rad}$. Starting with the setting we introduced in Section \ref{Sec6}, the most direct approach consists in developing new techniques to estimate type/cotype constants of tensor norm spaces. This is in fact an interesting avenue also in the context of local Banach space theory and we hope that this work could serve as motivation to pursue  it. Extending the family of spaces whose type/cotype constants can be accurately estimated might shed new light on several poorly understood  questions in this context, as it is the relation between volume ratio and cotype or the prevalence of type/cotype in tensor norms.
 
 Coming back to our $\sigma$--dependent bounds, Theorem \ref{mainThm_2}, it would be also a desirable development to achieve a better understanding of the regularity parameters introduced there, $\sigma^i$ and $\sigma^{ii}$. For example, it would be very clarifying to understand how the structure of strategies is restricted under the assumption of these parameters being small (in the sense of Corollary \ref{mainCor}) or whether general strategies can be \emph{made more regular} in order to have a better behaviour in terms of these parameters. Another interesting question in this direction is understanding whether $\sigma^i$, $\sigma^{ii}$ can be related with some physical properties of the strategies involved, such as their robustness against noise or  the complexity of the operations performed.
 
 Beyond the specific setting studied here, we have introduced a whole toolbox of constructions and connections that can be of interest in other related contexts. Firstly, most of the ideas we have used to study $G_{Rad}$ can be explored in other quantum games.  More specifically, we can consider a modified version of $G_{Rad}$ in which the verifiers only have to communicate an $n^2$-dimensional quantum system to the prover -- without the assistance of any further classical communication.  { A more detailed account of this tentative line of research was given in Section \ref{Sec4_2}. Even when some of the results achieved in the present work carry over this modified setting, some new challenges appear whose exploration we leave for the future.} 
 
  Being more speculative, the recent connection between PBC and AdS/CFT \cite{May_2019,May_2020} seems to indicate that the tools we use here might have potential application to the understanding of holographic duality. Along this line, we can ask, for example, whether the notions of regularity studied here can be related with properties of the mapping between bulk and boundary theories in this context.  In \cite{May_2020} it was claimed that  properties of the AdS/CFT holographic correspondence allow to find  cheating strategies that break PBC with polynomial resources. According to that,  the exponential lower bounds in Corollary \ref{mainCor} opens the possibility to impose restrictions on the regularity of such holographic correspondence. This would be in consonance   with  a recent result of Kliesch and Koenig \cite{Kliesch20}, based on previous work of Jones \cite{Jones18}. In \cite{Kliesch20}, the continuum limit of discrete tensor-network toy models for holography  was studied finding that, generically, this limit is extremely discontinuous.

 \paragraph{Acknowledgements.}
 
 We thank Jop Briët for kindly sharing some personal notes on Pisier's method for bounding the cotype-2 constant of the projective tensor product of type-2 spaces. We also thank Elisabeth Werner, Matthias Christandl and Alex May for their kind correspondence during the preparation of this work.

 We acknowledge financial support from MICINN (grants MTM2017-88385-P and SEV-2015-0554), from Comunidad de Madrid (grant QUITEMAD-CM, ref. S2018/TCS-4342), and the European Research Council (ERC) under the European Union’s Horizon 2020 research and innovation programme (grant agreement No 648913). A.M.K. also acknowledge support from Spanish MICINN project MTM2014-57838-C2-2-P.


\printbibliography



\appendix

\section{Handier expressions for $\sigma^i$, $\sigma^{ii}$}\label{Appendix_1}

In this appendix we provide some expressions upper bounding $\sigma^{i}$ and $\sigma^{ii}$. The advantage of these expressions is that they are easier to compute and can be expressed directly in terms of the elements of a given strategy. However, we stress that in general these bounds might be inaccurate.

\begin{proposition}\label{Prop.A1}
	Given a pure strategy $\Ss^\UU = \lbrace \tilde V_\ve, \tilde W_\ve, V,W_\ve, |\varphi\ra  \rbrace_\ve \in \mathfrak{S}_{s2w}$,
	\begin{enumerate}[i.]
		\item  $$    \sigma^i 
		\lesssim_{\log}   \ \mathbb{E}_\ve  \   
		\left(  \sum_{i,j} \frac{1}{2} \left\|  \tilde  V_\ve \otimes \tilde W_\ve - \tilde V_{\overline{\ve}^{ij}} \otimes \tilde W_{\overline{\ve}^{ij}} \right\|_{  M_{ r^2, k \tilde k' }  }^2   \right)^{1/2}  + O\left(\frac{1}{n} \right); $$ 
		\item $$	\sigma^{ii}  
		\lesssim_{\log}  \ \mathbb{E}_\ve \  \left(   \sum_{i,j} \frac{1}{2}  \left \| \left( \id_{\ell_2^{k'}} \otimes (W_\ve -  W_{\overline{\ve}^{ij}})\right) |\varphi\ra \right\|_{\ell_2^{k \tilde k' }}^2 \right)^{1/2} + O\left(\frac{1}{n} \right) .$$
	\end{enumerate}
\end{proposition}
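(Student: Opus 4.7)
The proof proceeds by direct computation of the discrete derivatives $\partial_{kl}\Phi^{i(ii)}_{\Ss^\UU}(\ve)$, isolating two types of contributions: one from the sign coefficient $\ve_{kl}$ that flips, and one from the $\ve$-dependence of the strategy operators. Writing $\Phi^{i}_{\Ss^\UU}(\ve)=\frac{1}{n^2}\sum_{ij}\ve_{ij}M^{ij}_\ve$ with $M^{ij}_\ve=(\la i|\tilde V_\ve\otimes\la j|\tilde W_\ve)(V|ij\ra\otimes\id)$, and analogously $\Phi^{ii}_{\Ss^\UU}(\ve)=\frac{1}{n^2}\sum_{ij}\ve_{ij}|ij\ra\otimes v^{ij}_\ve$ with $v^{ij}_\ve=(V|ij\ra\otimes W_\ve)|\varphi\ra$, the derivative decomposes as a ``bulk'' sum over $(i,j)\neq(k,l)$ involving only the differences $M^{ij}_\ve-M^{ij}_{\overline{\ve}^{kl}}$ (resp.\ $v^{ij}_\ve-v^{ij}_{\overline{\ve}^{kl}}$), plus a single ``diagonal'' term at $(i,j)=(k,l)$ whose $X^{i(ii)}$-norm is $\mathcal{O}(1/n^2)$ by the normalization estimates of Remark \ref{Rmk_normalizationInt1/2}.

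The bulk term is handled differently in each case. For $\Phi^i$, the identity $M^{ij}_\ve-M^{ij}_{\overline{\ve}^{kl}}=(\la i|\otimes\la j|)(\tilde V_\ve\otimes\tilde W_\ve-\tilde V_{\overline{\ve}^{kl}}\otimes\tilde W_{\overline{\ve}^{kl}})(V|ij\ra\otimes\id)$ combined with the contractivity of the pre- and post-composition operators yields the $(i,j)$-independent bound $\|M^{ij}_\ve-M^{ij}_{\overline{\ve}^{kl}}\|_{X^i}\le\|\tilde V_\ve\otimes\tilde W_\ve-\tilde V_{\overline{\ve}^{kl}}\otimes\tilde W_{\overline{\ve}^{kl}}\|_M$, so that triangle inequality over the $n^2$ terms cancels the $1/n^2$ prefactor to give $\|\partial_{kl}\Phi^i(\ve)\|_{X^i}\le\frac{1}{2}\|\tilde V_\ve\otimes\tilde W_\ve-\tilde V_{\overline{\ve}^{kl}}\otimes\tilde W_{\overline{\ve}^{kl}}\|_M+\mathcal{O}(1/n^2)$. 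For $\Phi^{ii}$, we apply triangle inequality and the injection bound $\||i\ra\otimes|j\ra\otimes w\|_{X^{ii}}\le\|w\|_{\ell_2^{\tilde k'^2}}$ from Remark \ref{Rmk_normalizationInt1/2}, then pass to an $\ell_2$-sum via Cauchy--Schwarz. The key technical step is the operator inequality
\[
\sum_{ij}\|(V|ij\ra\otimes D)|\varphi\ra\|^2 \le n^2\,\|(\id_{\ell_2^k}\otimes D)|\varphi\ra\|^2
\]
valid for any operator $D$ and contractive $V$, which follows from $\sum_{ij}(V|ij\ra)^\dagger(V|ij\ra)=\mathrm{Tr}_{\HH_A\otimes\HH_B}(V^\dagger V)\le n^2\id_{\ell_2^k}$. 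Applied with $D=W_\ve-W_{\overline{\ve}^{kl}}$, this yields $\|\partial_{kl}\Phi^{ii}(\ve)\|_{X^{ii}}\le\frac{1}{2}\|(\id\otimes(W_\ve-W_{\overline{\ve}^{kl}}))|\varphi\ra\|_{\ell_2^{k\tilde k'}}+\mathcal{O}(1/n^2)$.

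Finally, the elementary inequalities $(a+b)^2\le 2a^2+2b^2$ and $\sqrt{a+b}\le\sqrt{a}+\sqrt{b}$ allow us to pass from the pointwise bound $\|\partial_{kl}\Phi\|\le A_{kl}+\mathcal{O}(1/n^2)$ to $\mathbb{E}_\ve\bigl(\sum_{kl}\|\partial_{kl}\Phi\|^2\bigr)^{1/2}\le\mathbb{E}_\ve\bigl(\frac{1}{2}\sum_{kl}A_{kl}^2\bigr)^{1/2}+\mathcal{O}(1/n)$, where the additive error is $\mathcal{O}(\sqrt{n^2/n^4})=\mathcal{O}(1/n)$. Multiplication by the $\log(n^2)$ prefactor of Definition \ref{Def_sigma} then yields both stated bounds, the logarithmic factor being absorbed into the $\lesssim_{\log}$ notation.

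The main obstacle is essentially bookkeeping: one must correctly track which registers of the tensor products the various operators act on, and verify that each contractivity claim (for $\la i|\otimes\la j|$, $V|ij\ra\otimes\id$, and crucially for the injection $\iota_i\otimes\iota_j:\ell_2^{\tilde k'^2}\to X^{ii}$) goes through in spite of the involved interpolation norm defining $X^{ii}$. This last point is precisely the content already worked out in Remark \ref{Rmk_normalizationInt1/2}, so no new analytic ingredient beyond what the paper has already established is required.
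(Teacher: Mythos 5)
Your proof is correct and follows the paper's own route: split $\partial_{kl}\Phi^{i(ii)}$ into the bulk sum over $(i,j)\ne(k,l)$ carrying differences of strategy operators plus a single diagonal term, observe the diagonal is $\mathcal{O}(1/n^2)$ via the normalization of Remark \ref{Rmk_normalizationInt1/2}, bound the bulk using contractivity, and aggregate in $\ell_2$, absorbing $\log(n^2)$ into $\lesssim_{\log}$. The only differences are in how the bulk is handled. For $\Phi^i$ the paper recasts the (extended) bulk sum as the matrix element $\la\psi_\ve|[\,\cdot\,]|\psi\ra$ and bounds by the operator norm of the bracketed composition, whereas you apply a direct triangle inequality over the $n^2$ terms with the $(i,j)$-independent uniform bound $\|M^{ij}_\ve - M^{ij}_{\overline\ve^{kl}}\|\le\|\tilde V_\ve\otimes\tilde W_\ve - \tilde V_{\overline\ve^{kl}}\otimes\tilde W_{\overline\ve^{kl}}\|$; both give the same constant. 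For $\Phi^{ii}$ the paper simply uses that each $V|kl\ra$ is individually a contraction, so after the injection bound every one of the $n^2$ summands is already $\le\|(\id\otimes(W_\ve - W_{\overline\ve^{kl}}))|\varphi\ra\|_{\ell_2}$ and the $1/n^2$ prefactor collapses trivially; your detour through Cauchy--Schwarz and the partial-trace inequality $\sum_{ij}(V|ij\ra)^\dagger V|ij\ra = \Tr_{\HH_A\otimes\HH_B}(V^\dagger V)\le n^2\,\id_{\ell_2^k}$ is correct and recovers the same constant (Cauchy--Schwarz is saturated when the summands coincide), but it is more machinery than the step requires.
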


\begin{proof}
	We  provide simple, likely far from tight, bounds for the quantity
	$$
	\left(\sum_{i,j} \left\| \partial_{ij}  \Phi^{i(ii)} (\ve)  \right\|_{X^{i(ii)}}^2  \right)^{1/2}
	$$
	appearing in  \eqref{Sigma_i.} (recall that $X^{i} = M_{\tilde k'^2,k\tilde k'}$, $X^{ii} = \s^{\tilde k'\!, n} \otimes_{  (\ve, \pi)_{ \sfrac{1}{2} }  }\s^{\tilde k'\!, n}$). Recall also that $\partial_{ij}  \Phi (\ve) = \frac{\Phi (\ve_{11},\ldots,\ve_{ij},\ldots , \ve_{n n} ) -  \Phi (\ve_{11},\ldots,-\ve_{ij},\ldots , \ve_{n n} ) }{2}$. In the rest of the proof we shorten notation denoting $ (\ve_{11},\ldots,-\ve_{ij},\ldots , \ve_{n n} )$ as $ \overline \ve^{ij} $.
	
	In the case of $\Phi^{i} $,

	\begin{align*}
	&\| \partial_{ij}  \Phi^{i} (\ve)  \|_{M_{\tilde k'^2, k \tilde k'}}   
	\\
	& \qquad = \frac{1}{2}   \Bigg\| \frac{1}{n^2}  \, \sum_{p,q\ne i,j} \,   \, \varepsilon_{pq} \, \left( (\la p |\tl{V}_{\ve} \otimes \la q|\tl{ W}_{\ve}) - (\la p |\tl{V}_{\overline\ve^{ij}} \otimes \la q|\tl{ W}_{\overline\ve^{ij}}) \right)\, (V|pq\ra  \otimes \id_{\ell_2^{\tilde k'}})     
	\\
	& \qquad +     \frac{1}{n^2} \, \varepsilon_{ij} \, \left( (\la i |\tl{V}_{\ve} \otimes \la j|\tl{ W}_{\ve}) + (\la i |\tl{V}_{\overline\ve^{ij}} \otimes \la j|\tl{ W}_{\overline\ve^{ij}}) \right)
	\, (V|ij\ra  \otimes \id_{\ell_2^{\tilde k'}})    
	\Bigg\|_{M_{ r^2, k \tilde k' }}
	\\
	& \qquad \le \frac{1}{2}   \Bigg\| \frac{1}{n^2}  \, \sum_{p,q} \,   \, \varepsilon_{pq} \, \left( (\la p |\tl{V}_{\ve} \otimes \la q|\tl{ W}_{\ve}) - (\la p |\tl{V}_{\overline\ve^{ij}} \otimes \la q|\tl{ W}_{\overline\ve^{ij}}) \right)\, (V|pq\ra  \otimes \id_{\ell_2^{\tilde k'^2}})     \Bigg\|_{M_{ r^2 ,  k \tilde k' }} 
	+     \frac{2}{n^2}   
	\\
	& \qquad = \frac{1}{2}   \Bigg\| \, \la \psi_{\ve} | \, \left[   \left( (\tl{V}_{\ve} \otimes \tl{ W}_{\ve}) - (\tl{V}_{\overline\ve^{ij}} \otimes \tl{ W}_{\overline\ve^{ij}}) \right) \, (V  \otimes \id_{\ell_2^{\tilde k'}}) \otimes \id_{\HH_C} \right] |\psi\ra    \Bigg\|_{M_{ r^2, k \tilde k'}} 
	+    O\left(\frac{1}{n^2} \right)   
	\\
	& \qquad \le  \frac{1}{2}   \Bigg\| \,  (\tl{V}_{\ve} \otimes \tl{ W}_{\ve}) - (\tl{V}_{\overline\ve^{ij}} \otimes \tl{ W}_{\overline\ve^{ij}})    \Bigg\|_{M_{ r^2 , k \tilde k'}} 
	+    O\left(\frac{1}{n^2} \right)   .
	\end{align*}
	
	For $\Phi^{ii} $,
	
	\begin{align*}
	& \| \partial_{ij}  \Phi^{ii} (\ve)  \|_{{\s^{\tilde k'\!, n } \otimes_{ (\ve,\pi)_{_{  \sfrac{1}{2} } }    } \s^{\tilde k'\!, n}}}  
	\\
	& \qquad = \frac{1}{2}   \Bigg\| \frac{1}{n^2}  \, \sum_{p,q \ne i,j} \,   \, \varepsilon_{pq} \,|p\ra \otimes |q\ra \otimes \big(V|pq\ra  \otimes \id_{\ell_2^{\tilde k'^2}}\big) \big(\id_{\ell_2^{k}} \otimes( W_\ve- W_{\overline{\ve}^{ij}})\big)|\varphi \ra         
	\\
	&\qquad +    \frac{1}{n^2}   \,   \varepsilon_{ij} \,|i\ra \otimes |j\ra \otimes \big(V|ij\ra  \otimes \id_{\ell_2^{\tilde k'}}\big) \big(\id_{\ell_2^{k}} \otimes( W_\ve  {+} W_{\overline{\ve}^{ij}})\big)|\varphi \ra    
	\Bigg\|_{{\s^{\tilde k'\!, n } \otimes_{ (\ve,\pi)_{_{  \sfrac{1}{2} } }  } \s^{\tilde k'\!, n}}}
\\
	&\qquad \stackrel{(*)}{\le} \frac{1}{2}   \Bigg\| \frac{1}{n^2}  \, \sum_{p,q} \,   \, \varepsilon_{pq} \,|p\ra \otimes |q\ra \otimes \big(V|pq\ra  \otimes \id_{\ell_2^{\tilde k'^2}}\big) \big(\id_{\ell_2^{k}} \otimes( W_\ve- W_{\overline{\ve}^{ij}})\big)|\varphi \ra     \Bigg\|_{{\s^{\tilde k'\!, n } \otimes_{ (\ve,\pi)_{_{  \sfrac{1}{2} } } } \s^{\tilde k'\!, n}}} 
\\
& \qquad +   \frac{2}{n^2}   
		\end{align*}
\begin{align*}
	&\qquad \le \frac{1}{2n^2}  \, \sum_{p,q} \,   \Bigg\|     \,|p\ra \otimes |q\ra \otimes \big(V|pq\ra  \otimes \id_{\ell_2^{\tilde k'^2}}\big) \big(\id_{\ell_2^{k}} \otimes( W_\ve- W_{\overline{\ve}^{ij}})\big)|\varphi \ra     \Bigg\|_{  \s^{\tilde k'\!, n } \otimes_{ (\ve,\pi)_{_{  \sfrac{1}{2} } }  } \s^{\tilde k'\!, n} }
	\\
	&\qquad +     O\left(\frac{1}{n^2} \right)   
\\
	& \qquad\hspace{-1mm}\stackrel{(**)}{\le} \frac{1}{2n^2}  \, \sum_{p,q} \,   \Bigg\|      \big(V|pq\ra  \otimes \id_{\ell_2^{\tilde k'^2}}\big) \big(\id_{\ell_2^{k}} \otimes( W_\ve- W_{\overline{\ve}^{ij}})\big)|\varphi \ra     \Bigg\|_{  \ell_2^{ \tilde k'^2 }  }  
		+     O\left(\frac{1}{n^2} \right)  
	\\
	& \qquad \le  \frac{1}{2 n^2}  \, \sum_{p,q} \,     \Bigg\| \big(\id_{\ell_2^{k}} \otimes( W_\ve- W_{\overline{\ve}^{ij}})\big)|\varphi \ra     \Bigg\|_{\ell_2^{k \tilde k'^2}}
	+     O\left(\frac{1}{n^2} \right)   
	\\
	& \qquad \le  \frac{1}{2 }  \,    \Bigg\| \big(\id_{\ell_2^{k}} \otimes( W_\ve- W_{\overline{\ve}^{ij}})\big)|\varphi \ra     \Bigg\|_{\ell_2^{k \tilde k'^2}}
	+     O\left(\frac{1}{n^2} \right)  
   .
	\end{align*}
	
	The previous two bounds lead automatically to the claimed statement.
	
	In (*) we have applied a simple triangle inequality and used the fact that the elements in the sum are well normalized in the considered norm, recall  Remark \ref{Rmk_normalizationInt1/2}. For (**), if we denote $|\tilde \varphi_{pq} \ra :=  \big(V|pq\ra  \otimes \id_{\ell_2^{\tilde k'^2}}\big) \big(\id_{\ell_2^{k'}} \otimes( W_\ve- W_{\overline{\ve}^{ij}})\big)|\varphi \ra  $, we have to notice that, for each $p, q$, $|p\ra \otimes |q\ra \otimes |\tilde \varphi_{pq} \ra = \iota_p \otimes \iota_q (|\tilde \varphi_{pq} \ra)$, where $\iota_p$, $\iota_q$ are the  injections considered in Remark \ref{Rmk_normalizationInt1/2}. In that remark, we have proven that $\iota_p \otimes \iota_q$ is a contractive map from $ \s^{\tilde k'\!, n } \otimes_{ (\ve,\pi)_{_{  \sfrac{1}{2} } }  } \s^{\tilde k'\!, n}$ into $\ell_2^{ \tilde k'^2}$. Inequality (**) follows from this observation.
	
\end{proof}

\section{Non-pure strategies in Theorem \ref{mainThm_2}}\label{Appendix_MainThm}

We give here some further details towards the proof of Theorem \ref{mainThm_2}. We first explicit the statement we obtain in the case of \emph{pure} strategies and then, how to obtain the general statement appearing in \ref{mainThm_2}.

\begin{claim}\label{Claim_B1}
	For $\Ss^\UU \in \mathfrak{S}_{s2w;\tilde k',k}^\UU$:
		\begin{enumerate}[I.]
		\item 
		$$
		\omega(G;{\Ss^\UU})  \le  C_1 +   C_2' \ {\sigma^i} \, \log^{1/2}( k \tilde k' ) +  O \left(\frac{1}{n^{1/2}} \right)  ;
		$$
		\item   
		$$
		\omega(G;{\Ss^\UU})	  \le  \tilde C_1 + C_3' \ \tilde \sigma^{ii}  \, \log^{1/2} (n k \tilde k' )   +  O \left(\frac{1}{n^{1/2}} + \frac{\log(n)  \log^{1/2}( k \tilde k )}{n} \right)
		,
		$$
		where we have denoted  $\tilde \sigma^{ii} = n^{3/4} \log(n) \, \sigma^{ii}$.
	\end{enumerate}
	Above, $C_1,\, \tilde C_1 <1, \, C_2',\, C_3' $ are positive constants.
	
	\end{claim}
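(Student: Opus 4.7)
The plan is to recognise that Claim \ref{Claim_B1} is essentially the assembly of three results already proved in Section \ref{Sec4}, namely Lemma \ref{Lemma_Main2} (which reduces $\omega(G_{Rad};\Ss^{\UU})$ to a ``bias term'' plus a regularity term scaled by a type-2 constant), Proposition \ref{Prop_Main2} (which handles the bias term) and the type-2 estimates \eqref{Eq4_TypeConsts_1}, \eqref{Type_Int1/2}. I would therefore present the proof as a direct concatenation of these ingredients, with the only nontrivial bookkeeping being the tracking of logarithmic factors.

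For part I, I would start by applying Lemma \ref{Lemma_Main2}, item i., to obtain
\[
\omega(G_{Rad};\Ss^{\UU}) \le \big\|\mathbb{E}_\ve \Phi^{i}_{\Ss^{\UU}}(\ve)\big\|_{X^{i}} + C\,\sigma^{i}_{\Ss^{\UU}}\,\mathrm{T}_2^{(n^2)}(X^{i}).
\]
Proposition \ref{Prop_Main2}, i., bounds the first summand by $3/4+C/\sqrt{n}$, so $C_{1}=3/4$. For the second summand, since $X^{i}=M_{r^{2},k\tilde k'}$, I would invoke \eqref{Eq4_TypeConsts_1} and the trivial bound $\mathrm{T}_2^{(n^2)}(X^i)\le \mathrm{T}_2(X^i)\lesssim \log^{1/2}(\min(r^2,k\tilde k'))\le \log^{1/2}(k\tilde k')$, which produces the advertised $C_2'\,\sigma^{i}_{\Ss^{\UU}}\log^{1/2}(k\tilde k')$ with $C_{2}'$ an absolute constant.

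For part II, I would repeat the same recipe with Lemma \ref{Lemma_Main2}, item ii. Proposition \ref{Prop_Main2}, ii., yields the bias bound $\sqrt{3}/2 + C/\sqrt{n} + C'\log(n)\log^{1/2}(k\tilde k')/n$, giving $\tilde C_{1}=\sqrt{3}/2$ and the error term stated in the claim. For $X^{ii}=\s^{\tilde k',n}\otimes_{(\ve,\pi)_{1/2}}\s^{\tilde k',n}$, the type-2 estimate \eqref{Type_Int1/2}, specialised to the case $\min(n,\tilde k')=n$ (which is the regime of interest, since $\tilde k'$ plays the role of the ancilla dimension), delivers $\mathrm{T}_2^{(n^2)}(X^{ii})\lesssim n^{3/4}\log(n)\log^{1/2}(n\tilde k')\le n^{3/4}\log(n)\log^{1/2}(nk\tilde k')$. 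Recalling the definition $\tilde\sigma^{ii}_{\Ss^{\UU}}=n^{3/4}\log(n)\sigma^{ii}_{\Ss^{\UU}}$, the $\sigma$-dependent contribution becomes exactly $C_{3}'\,\tilde\sigma^{ii}_{\Ss^{\UU}}\,\log^{1/2}(nk\tilde k')$, as required.

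The proof as sketched is routine; there is no genuine obstacle beyond checking that the various logarithmic factors from Proposition \ref{Prop_Main2} (the $\log^{1/2}(k\tilde k')/n$ term), from the type-2 estimate of the interpolation space, and from the $\log(n^{2})$ prefactor built into Definition \ref{Def_sigma} are recorded consistently so that nothing is double-counted and every error term is absorbed into either the $O(n^{-1/2})$ or the $O(\log(n)\log^{1/2}(k\tilde k)/n)$ slot of the statement. In particular I would carefully trace which $\log$ absorbs into $\tilde\sigma^{ii}$ via its $\log(n)$ prefactor, and which remains as the explicit $\log^{1/2}(nk\tilde k')$ factor in front of it, so as to reach the exact form of Claim \ref{Claim_B1} and not merely an equivalent one up to a further logarithmic constant.
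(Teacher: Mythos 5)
Your proposal is correct and follows exactly the same route as the paper: apply Lemma \ref{Lemma_Main2} to split off the bias term, control it by Proposition \ref{Prop_Main2}, and insert the type-2 estimates \eqref{Eq4_TypeConsts_1} and \eqref{Type_Int1/2} for $X^{i}$ and $X^{ii}$ respectively. The bookkeeping (including $C_{1}=3/4$, $\tilde C_{1}=\sqrt{3}/2$, and the absorption of the $n^{3/4}\log(n)$ factor into $\tilde\sigma^{ii}$) matches the paper's Appendix~\ref{Appendix_MainThm} verbatim.
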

\begin{proof}
	Lemma \ref{Lemma_Main2} provides the following bounds:
		\begin{align}\label{Eq_B1}
		\omega(G_{Rad}; \Ss^{\UU}) 
		&\le  \big \|\mathbb{E}_\ve  \Phi^{i} (\ve) \big\|_{X^i} + C \  \sigma^{i} \ {\mathrm{T}^{(n^2)}_2}  \big( X^i \big),
		\\
\label{Eq_B2}
		\omega(G_{Rad}; \Ss^{\UU}) 
		&\le  \big \|\mathbb{E}_\ve  \Phi^{ii} (\ve) \big\|_{\tilde X^{ii}} + C \  \sigma^{ii} \ {\mathrm{T}^{(n^2)}_2}  \big( X^{ii} \big).
		\end{align}

Taking into account the estimates
	$$
{\mathrm{T}^{(n^2)}_2}(X^{i}) \le {\mathrm{T}_2}(X^{i})  \lesssim \log^{1/2}(k\tilde k'), \qquad {\mathrm{T}^{(n^2)}_2}(X^{ii}) \lesssim  n^{3/4} \log(n) \log^{1/2} (n \tilde k') ,
$$
and Proposition \ref{Prop_Main2}, Equations \eqref{Eq_B1}, \eqref{Eq_B2} transform in:
		\begin{equation*}
\omega(G_{Rad}; \Ss^{\UU}) 
\lesssim 
  \frac{3}{4} + O \left( \frac{1}{n^{1/2}} \right)
 +
  C \  \sigma^{i} \ \log^{1/2}(k\tilde k'),
\end{equation*}
\begin{equation*}
\omega(G_{Rad}; \Ss^{\UU}) 
\lesssim  
\frac{\sqrt{3}}{2} +  O \left( \frac{1}{\sqrt{n}} + \frac{\log (n) \log^{1/2}(k \tilde k')}{ n} \right)
 +
 C \  \sigma^{ii} \  n^{3/4} \log(n) \log^{1/2} (n \tilde k').
\end{equation*}
\end{proof}

Now, we use Lemma \ref{Lemma_Purifications} to translate the previous bound to the case of a general strategy $\Ss$, obtaining that way the statement appearing in the main text.
 
 \begin{claim}
 	The previous claim implies, for any $\Ss \in \mathfrak{S}_{s2w;\tilde k,k}$, the bounds:
 	\begin{enumerate}[I.]
 		\item 
 		$$
 		\omega(G;\Ss)  \le  C_1 +   C_2 \ {\sigma^i} \, \log^{1/2}(n k \tilde k ) +  O \left(\frac{1}{n^{1/2}} \right)  ;
 		$$
 		\item   
 		$$
 		\omega(G;\Ss)	  \le  \tilde C_1 + C_3 \ \tilde \sigma^{ii}  \, \log^{1/2} (n k \tilde k )   +  O \left(\frac{1}{n^{1/2}} + \frac{\log(n)  \log^{1/2}( n k \tilde k )}{n} \right)
 		,
 		$$
 		where we have denoted  $\tilde \sigma^{ii} = n^{3/4} \log(n) \, \sigma^{ii}$.
 	\end{enumerate}
 	Above, $C_1,\, \tilde C_1 <1, \, C_2,\, C_3 $ are positive constants.
 \end{claim}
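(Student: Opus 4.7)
\smallskip

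The plan is to pass from a general strategy $\Ss\in\mathfrak{S}_{s2w;\tilde k,k}$ to a pure strategy $\Ss^{\UU}$ via the two reductions established in Section \ref{Sec3}, apply Claim \ref{Claim_B1}, and then absorb the polynomial blow-up of the auxiliary dimension into the logarithms. Concretely, first I would invoke Lemma \ref{Lemma_ClassCom} to assume without loss of generality that any extra classical communication used by $\Ss$ has local dimension at most $n^4k^2$, so that $\Ss$ can be regarded as an element of $\mathfrak{S}_{s2w;\tilde k_1,k}$ with $\tilde k_1\le \tilde k\cdot n^4k^2$. Then I would apply Lemma \ref{Lemma_Purifications} to obtain a pure strategy $\Ss^{\UU}\in\mathfrak{S}_{s2w;\tilde k',k}^{\UU}$ implementing the same channel, with
\[
\tilde k' \;\le\; n^2 k\,\tilde k_1^{\,2} \;\lesssim\; \mathrm{poly}(n,k,\tilde k).
\]
Crucially, both reductions preserve the game value and the associated Banach-valued assignments, so that $\omega(G_{Rad};\Ss)=\omega(G_{Rad};\Ss^{\UU})$; moreover, by the very definition $\sigma^{i(ii)}_{\Ss}=\inf_{\Ss^{\UU}\text{ purifying }\Ss}\sigma^{i(ii)}_{\Ss^{\UU}}$, so one may choose $\Ss^{\UU}$ to make $\sigma^{i(ii)}_{\Ss^{\UU}}$ arbitrarily close to $\sigma^{i(ii)}_{\Ss}$.

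Next, I would feed $\Ss^{\UU}$ into Claim \ref{Claim_B1} to obtain, for item I.,
\[
\omega(G_{Rad};\Ss)=\omega(G_{Rad};\Ss^{\UU})\;\le\; C_1 + C_2'\,\sigma^{i}_{\Ss^{\UU}}\,\log^{1/2}(k\tilde k') + \mathcal{O}(n^{-1/2}),
\]
and the analogous bound for item II. with $\log^{1/2}(nk\tilde k')$ and the $\log(n)\log^{1/2}(k\tilde k')/n$ error term. The only remaining task is bookkeeping on the logarithms: since $\tilde k'\lesssim \mathrm{poly}(n,k,\tilde k)$, one has
\[
\log^{1/2}(k\tilde k')\;\lesssim\;\log^{1/2}(n k\tilde k),\qquad \log^{1/2}(n k\tilde k')\;\lesssim\;\log^{1/2}(n k\tilde k),
\]
where the implicit constants depend only on the exponents appearing in \eqref{Corresponde_k-k'1}--\eqref{Corresponde_k-k'2}. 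Redefining $C_2=C_2'\cdot(\text{const})$ and $C_3=C_3'\cdot(\text{const})$ absorbs those factors, and passing to the infimum over purifications replaces $\sigma^{i(ii)}_{\Ss^{\UU}}$ by $\sigma^{i(ii)}_{\Ss}$, yielding exactly the two inequalities in the claim.

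There is essentially no hard step here: the whole argument is bookkeeping, since all the analytic work has already been done in Claim \ref{Claim_B1} (via Lemma \ref{Lemma_Main2}, Proposition \ref{Prop_Main2}, and the type estimates \eqref{Eq4_TypeConsts_1} and \eqref{Type_Int1/2}). The only mildly delicate point is to check that the purification step does not inflate $\sigma^{i(ii)}$, which is handled by the defining infimum. A second small caveat is the error term in item II.: a priori one gets $\log(n)\log^{1/2}(k\tilde k')/n$, but $\log(k\tilde k')\lesssim \log(nk\tilde k)$, so this is harmless. With these observations the statement follows verbatim.
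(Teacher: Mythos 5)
Your proposal follows the same route as the paper's Appendix B proof: purify via Lemma~\ref{Lemma_Purifications}, feed the resulting pure strategy into Claim~\ref{Claim_B1}, and absorb the polynomial blow-up of the auxiliary dimension $\tilde k'$ inside the logarithmic factors. The extra invocation of Lemma~\ref{Lemma_ClassCom} at the outset is a harmless addition; the paper treats that lemma as a separate remark after the proof (to restate the result in terms of the ``raw'' quantum dimension), but including it does no damage.

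There is, however, one point in your argument that deserves more care, namely the passage from $\sigma^{i(ii)}_{\Ss^\UU}$ to $\sigma^{i(ii)}_{\Ss}$. You write that one may choose $\Ss^\UU$ with $\sigma^{i(ii)}_{\Ss^\UU}$ arbitrarily close to the infimum defining $\sigma^{i(ii)}_\Ss$, and then apply Claim~\ref{Claim_B1}. But that near-minimizing purification is in general \emph{not} the Stinespring purification from Lemma~\ref{Lemma_Purifications}, and there is no a priori control on its ancilla dimension $\tilde k'$. If $\tilde k'$ is not $\mathrm{poly}(n,k,\tilde k)$, the factor $\log^{1/2}(k\tilde k')$ cannot be bounded by $\log^{1/2}(nk\tilde k)$, so the two requirements (small $\sigma^{i(ii)}_{\Ss^\UU}$ and bounded $\tilde k'$) may be in conflict and cannot be optimized simultaneously in the product $\sigma^{i(ii)}_{\Ss^\UU}\,\log^{1/2}(k\tilde k')$. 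The paper's own proof sidesteps this by committing to the specific purification from Lemma~\ref{Lemma_Purifications}, for which $\tilde k'\le n^2 k\tilde k^4$, and then tacitly identifying $\sigma^{i(ii)}_\Ss$ with $\sigma^{i(ii)}_{\Ss^\UU}$ for that purification. To make your argument airtight you should do the same: fix the purification of Lemma~\ref{Lemma_Purifications} throughout and state the bound in terms of $\sigma^{i(ii)}$ of that purification (or equivalently, observe that the infimum defining $\sigma^{i(ii)}_\Ss$ can and should be restricted to purifications of bounded ancilla dimension). With that adjustment the bookkeeping you carry out is exactly the paper's.
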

\begin{proof}
	Lemma \ref{Lemma_Purifications} allows us to consider $\Ss$ as a pure strategy in $\mathfrak{S}_{s2w;\tilde k',k}^\UU$. The relevant estimate, also provided in that lemma, is that $\tilde k'$ can be taken to be lower or equal than $n^2 k \tilde k^4 $. I.e., $\Ss$ satisfies Claim \ref{Claim_B1} with $\tilde k' \le n^2 k \tilde k^4 $. Furthermore, we can roughly bound
	$$
	k \tilde k' \le n k \tilde k' \le (n k \tilde k')^\alpha,
	$$
	for some positive constant $\alpha$. Since those factors appear in Claim  \ref{Claim_B1} only inside a logarithm, the exponent $\alpha$ only changes the constants $C_2'$, $C_3'$ appearing there.
\end{proof}

If one wants to state Theorem \ref{mainThm_2} in terms of the raw quantum dimension $\tilde k_q := \frac{\tilde k}{\tilde k_{cl}}$, where $\tilde k_{cl}$ was the dimension of the classical messages used in the strategy, it is possible to argue similarly as above using this time Lemma \ref{Lemma_ClassCom}. The result is exactly the same, only the constants $C_2$, $C_3$ are affected.

\section{Tensor norms and enough symmetries}\label{Appendix_EnoughSym}

In this appendix we give some additional information about spaces \emph{with enough symmetries} and spaces \emph{with enough symmetries in the orthogonal group}, properties used in our Theorem \ref{Thm_vr}. Given a Banach space $X$, we refer to the group of isometries on that space as the \emph{symmetry group} of $X$.
\begin{definition}
	A Banach space $X$ has enough symmetries if the only operators on $X$ that commutes with the symmetry group of the space are $\lambda \, \id_X$ for some scalar $\lambda$.
\end{definition}

It easy to see that if $X$ has enough symmetries the same happens with $X^*$. Furthermore, it is a piece of folklore that tensor norms also respect this property. That is, for any tensor norm $\alpha$, $X \otimes_\alpha Y$ has enough symmetries when $X$ and $Y$  have enough symmetries. This fact follows from noticing that for any isometries $f$, $g$ in $X$ and $Y$, respectively, $f\otimes g$ is also an isometry in $X \otimes_\alpha Y$. This is guaranteed by the metric mapping property \eqref{MetricMapProp}. 

Finally, in \cite{Defant06} the notion of enough symmetries in the orthogonal group appears in the statement  of \cite[Lemma 5.2]{Defant06}, result used in our proof of Theorem \ref{Thm_vr}. 
\begin{definition}
	An n-dimensional Banach space $X$ has enough symmetries in the orthogonal group if the symmetry group of $X$ includes a subgroup of GL(n) verifying the property that the only operators on $X$ that commutes with that subgroup are $\lambda \, \id_X$ for some scalar $\lambda$.
\end{definition}

We finally comment that tensor norms also preserves the property of having enough symmetries in the orthogonal group. The reason is the same as in the previous case of simply having enough symmetries. Furthermore, it is obvious from the definition that $\ell_2^n$ has enough symmetries in the orthogonal group and, therefore, $\ell_2^n \otimes_\alpha \ell_2^{n'}$, $(\ell_2^n \otimes_\alpha \ell_2^{n'}) \otimes_{\alpha'} \ell_2^{n''}$, \ldots are also spaces with enough symmetries in the orthogonal group when $\alpha$, $\alpha',$ $\ldots$ are tensor norms. In particular, the spaces considered in Theorem \ref{Thm_vr} have this property.

\end{document}